\documentclass{myreport}

\usepackage{myreport}

\title{
    Lower bounds on the spatial decay of 
    the ground state of the Pauli-Fierz model
}
\author{
    Fumio Hiroshima\footnote{Faculty of Mathematics, Kyushu University, Fukuoka 819--0395, Japan} \,
    and Yuki Tsujimoto\footnote{Joint Graduate School of Mathematics for Inovation, Kyushu University, Fukuoka 819--0395, Japan}
}
\date{
    \today
}

\begin{document}
\maketitle

% 概要
\begin{abstract}
    Lower bounds on the pointwise spatial decay of the ground state of the Pauli-Fierz model in non-relativistic quantum electrodynamics are studied.
    Two models are considered: the full Pauli-Fierz model and the Pauli-Fierz model under the dipole approximation.
    For the full model, a lower bound is derived by means of a probabilistic approach based on the Feynman-Kac formula.
    The application of the Agmon distance method to the full model encounters a serious difficulty arising from the path dependence of a double stochastic integral.
    In contrast, under the dipole approximation, the corresponding integrand becomes deterministic, which allows a lower bound to be obtained in terms of the Agmon distance.
\end{abstract}

%\tableofcontents

\section{Introduction}

In this paper, we derive lower bounds on the point-wise spatial decay of the ground state of the Pauli-Fierz model in non-relativistic quantum electrodynamics.
The Pauli-Fierz model was originally introduced in~\cite{PF38} and describes the interaction between an electron and a quantized radiation field.
In this model, electrons are treated as non-relativistic particles.
On the other hand, photons are described by a quantized radiation field  
in the Coulomb gauge, in which only transverse modes are present.
The decoupled Hamiltonian is given by 
\[\left(-\frac12 \Delta+V \right)\otimes \mathbbm{1}+\mathbbm{1}\otimes H_\mathrm{rad},\]
where $-\tfrac12 \Delta+V$ describes the Hamiltonian of the non-relativistic particles and 
$H_\mathrm{rad}$ the free field hamiltonian. 
The interaction between the quantized radiation field $A=(A_1,A_2,A_3)$ and electrons is given by the  minimal coupling: 
$-i\nabla_\mu\otimes \mathbbm{1} \to -i\nabla_\mu\otimes\mathbbm{1}-\alpha A_\mu$, where $\alpha\in\mathbb{R}$ is the coupling constant.  
Then the Hamiltonian is given by
\begin{align}\label{eq:PF} %
    H_\alpha
    = \frac{1}{2} {(- i \nabla \otimes \mathbbm{1} - \alpha A)}^2
    + V \otimes \mathbbm{1}
    + \mathbbm{1} \otimes H_\mathrm{rad} .
\end{align}
The operator \cref{eq:PF} acts on the Hilbert space: 
\begin{align*}
    L^2(\mathbb{R}^3) \otimes \mathcal{F}
    \simeq \int_{\mathbb{R}^3}^\oplus \mathcal{F} \, \mathrm{d}x ,
\end{align*}
where $\mathcal{F}$ denotes the boson Fock space associated with the quantized radiation field and 
the right-hand side above denotes the constant fiber direct integral~\cite[\text{X\hspace{-1.2pt}I\hspace{-1.2pt}I\hspace{-1.2pt}I.6}]{rs4}.
In order to define the quantized radiation field
\begin{align*}
    A = \int_{\mathbb{R}^3}^\oplus A(x) \, \mathrm{d}x ,
\end{align*}
we impose an ultraviolet cutoff function $\hat{\varphi}$. 
Spectral analysis of the Pauli-Fierz type Hamiltonian has accelerated since the late 1990s.
A comprehensive treatment can be found in e.g., \cite{spo04}.
Since the bottom of the spectrum of the Hamiltonian is embedded in the continuous spectrum since photons are massless, proving the existence and uniqueness of a ground state is a nontrivial problem.
Finally the existence of the ground state of the Pauli-Fierz Hamiltonian has been established in~\cite{bfs3, GLL01}.
See~\cite{AH97, BFS1, BFS2, spo98, ger00,LMS07, GHPS11} for related results, and  
\cite{BSS25,CEH04,CVV03,HS01b,HVV03,KM13b} for the enhanced binding. 
The spectrum of the semi-relativistic Pauli-Fierz 
\[\sqrt{(-i\nabla-\alpha A)^2+m^2}-m+V\otimes\mathbbm{1}+\mathbbm{1}\otimes H_\mathrm{rad}\]
is also studied in \cite{MS09a,KMS11a,KMS11b,KM13a,hir14,HH16,HHS21}. 

Having established the existence of a ground state, we now turn to the investigation of its spatial behavior.
In~\cite{HH10}, upper bounds on the spatial decay of the ground state were derived by means of the Feynman-Kac formula.
Let 
$\varphi_\mathrm{g}$ be the normalized ground state of the Pauli-Fierz Hamiltonian. 
Our aim here is to establish a lower bound on $\norm{{\varphi_\mathrm{g}}(x)}_\mathcal{F}$.

Although our analysis also relies on the Feynman-Kac formula, the approach taken in the present paper differs substantially from that of~\cite{HM21}, where 
upper and lower bounds of the spacial decay of the ground state of the renormalized Nelson Hamiltonian \cite{nel64a} is given.   
Our argument rests on two key ingredients:
\begin{enumerate}
    \item[(1)] the positivity property of ${(\mathbbm{1}, {\varphi_\mathrm{g}}(x))}_\mathcal{F}$;
    \item[(2)] an estimate on the exponential moment $\mathbb{E}[e^{\varepsilon X}]$ of the double stochastic integral $X$.
\end{enumerate}
The implementation of this strategy, however, encounters two principal difficulties.
The first arises from the {phase factor} appearing in the Feynman-Kac formula, while the second concerns the control of the exponential moment of a formal {double stochastic integral}.
We now explain how these two difficulties are overcome.

{(Phase rotation)}
We apply the identity
\begin{align*}
    \varphi_\mathrm{g}
    = e^{- T (H_\alpha - E_\alpha)} \varphi_\mathrm{g}
\end{align*}
and the Feynman-Kac formula:
\begin{align*}
   e^{- T (H_\alpha - E_\alpha)} \Psi(x)
    = 
        \mathbb{E}^{x}\left[
            e^{- \int_0^T (V(B_t)-E_\alpha) \, \mathrm{d}t}
                \mathrm{J}_0^* e^{- i \alpha {A_\mathrm{E}}(K_T)} \mathrm{J}_T
                \Psi(B_T) %
        \right]. 
\end{align*}
Here ${(B_t)}_{t \geq 0}$ denotes $3$-dimensional Brownian motion. 
The phase factor
\begin{align*}
    e^{- i \alpha {A_\mathrm{E}}(K_T)}
\end{align*}
appearing in the integrand is the main obstacle to deriving a lower bound for ${\varphi_\mathrm{g}}(x)$.
We note that $\mathbb{R}^3 \ni x \mapsto {(\mathbbm{1}, {\varphi_\mathrm{g}}(x))}_\mathcal{F}$ is continuous for Kato-decomposable potential $V$. 
The ground state $\varphi_\mathrm{g}$ of \cref{eq:PF} is {not} positive.
Although
\begin{align*}
    \norm{{\varphi_\mathrm{g}}(x)}_\mathcal{F}
    \geq \abs{{(\mathbbm{1}, {\varphi_\mathrm{g}}(x))}_\mathcal{F}} ,
\end{align*}
it is not clear whether ${(\mathbbm{1}, {\varphi_\mathrm{g}}(x))}_\mathcal{F}$ is positive, where $\mathbbm{1}$ denotes the Fock vacuum.
In~\cite{hir00b}, however, it was shown that the transformed ground state $\Theta^{-1} \varphi_\mathrm{g}$ is positive, and hence
\begin{align*}
    {(\mathbbm{1}, {\varphi_\mathrm{g}}(x))}_\mathcal{F}
    = {(
        \Theta^{-1}
        \mathbbm{1} , %
        \Theta^{-1}
        {\varphi_\mathrm{g}}(x) %
    )}_\mathcal{F}
    = {(
        \mathbbm{1} , %
        \Theta^{-1}
        {\varphi_\mathrm{g}}(x) %
    )}_\mathcal{F}
    > 0 ,
\end{align*}
where $\Theta = e^{i \frac{\pi}{2} N}$ is a phase rotation and $N$ denotes the number operator.
By the identity $e^{- T (H_\alpha - E_\alpha)} \varphi_\mathrm{g} = \varphi_\mathrm{g}$, we also have
\begin{align}\label{eq:fkf2} %
    {(\mathbbm{1}, {\varphi_\mathrm{g}}(x))}_\mathcal{F}
    = {(
        \mathbbm{1} , %
        (e^{- T (H_\alpha - E_\alpha)}
        {\varphi_\mathrm{g}})(x) %
    )}_\mathcal{F} .
\end{align}
Note that the integrand in the Feynman-Kac formula for $e^{-T (H_\alpha - E_\alpha)}$ is {complex-valued}, i.e., $e^{- i A(K)}$ appears in the integrand, and hence it is not straightforward to estimate \cref{eq:fkf2} from below.
Then the arguments used in \cite{HM21} do not readily extend to the Pauli-Fierz model.
We therefore adopt a different approach.
We have
\begin{align}\label{eq:fkf1} %
    {(
        \mathbbm{1} , %
        {\varphi_\mathrm{g}}(x) %
    )}_\mathcal{F}={(
        \mathbbm{1} , %
        \Theta^{-1}
        {\varphi_\mathrm{g}}(x) %
    )}_\mathcal{F}
    = {(
        \mathbbm{1} , %
        e^{- T (\Theta^{-1} H_\alpha \Theta - E_\alpha)} \Theta^{-1}
        {\varphi_\mathrm{g}}(x) %
    )}_\mathcal{F} .
\end{align}
The Feynman-Kac formula for $e^{- T (\Theta^{-1} H_\alpha \Theta - E_\alpha)}$ can also be constructed. More importantly, it was shown in~\cite{hir00b} that its integrand is {positive}, in contrast to the complex-valued integrand appearing in the Feynman-Kac formula for $e^{- T (H_\alpha - E_\alpha)}$.
Therefore, the right-hand side of \cref{eq:fkf1} can be estimated from below.
This positivity property is the key ingredient in our analysis.
See \cref{lem:3}.

{(Double stochastic integral and Agmon distance)}
We also have another difficulty in estimating the spatial decay of ${\varphi_\mathrm{g}}(x)$.
One standard way of the estimate of the spacial decay from below is an application of Agmon distance associated with external potential $V$:
\begin{align}\label{eq:agmon} %
    \norm{{\varphi_\mathrm{g}}(x)}_\mathcal{F}
    \geq D e^{- c \mathscr{G}(\gamma)} ,
\end{align}
where $c > 0$ and $D > 0$ are constants and
\begin{align*}
    \mathscr{G}(\gamma)
    = \int_0^T \sqrt{2 V(\gamma_t)} \abs{\dot{\gamma}_t} \, \mathrm{d}t
\end{align*}
for any path $\gamma_s$ connecting $0$ and $x$.
The Agmon distance provides a powerful framework for the analysis of spatial decay of bound states; not only for Schr\"dinger operators \cite{CS81} but also it has been successfully employed to obtain precise upper and lower bounds on the spatial decay of the ground state by the Agmon distance even for the renormalized Nelson model in~\cite{HM21}.
To obtain the form of \cref{eq:agmon} we need to estimate $\mathbb{E}[e^{\varepsilon X}]$, where $X$ is the formal double stochastic integral: 
\begin{align*}
    X
    = \sum_{\mu, \nu = 1}^3
        \int_0^T
            \int_0^T \left\{
                \int_{\mathbb{R}^3}
                    \left(
                        \delta_{\mu \nu}
                        - \frac{k_\mu k_\nu}{\abs{k}^2}
                    \right)
                    \frac{\abs{\hat{\varphi}(k)}^2}{\omega(k)}
                    e^{- i k \cdot (B_s - B_t)}
                    e^{- \abs{s - t} \omega(k)}
                \, \mathrm{d}k
            \right\} \, \mathrm{d}B_s^\mu
        \, \mathrm{d}B_t^\nu .
\end{align*}
This expression is only formal, since the integrand does not satisfy the adaptedness condition required for Ito stochastic integrals.
Consequently, estimating $\mathbb{E}[e^{\varepsilon X}]$ is far from straightforward.
In fact, obtaining such an estimate constitutes one of the main technical difficulties of the present paper. More precisely, we show in \cref{lem:6} that
\begin{align*}
    \mathbb{E}[e^{\varepsilon X}] < + \infty
\end{align*}
provided that
\begin{align*}
    \varepsilon < \frac{1}{\triangle T}
\end{align*}
for some constant $\triangle > 0$.
Thus, the admissible range of $\varepsilon$ ensuring the finiteness of $\mathbb{E}[e^{\varepsilon X}]$ depends explicitly on $T$.
It is precisely this dependence on $T$ that gives rise to the difficulty to obtain \cref{eq:agmon}.
This phenomenon is already visible in the elementary case of Brownian motion. 
Indeed, we see that $\mathbb{E}[e^{\varepsilon B_T^2}] = {(1 - 2 \varepsilon T)}^{- 3 / 2}$, and hence
\begin{align*}
    \mathbb{E}[e^{\varepsilon B_T^2}] < + \infty
    \quad \Longleftrightarrow \quad
    \varepsilon < \frac{1}{2 T} .
\end{align*}
Thus, even in this elementary example, no fixed positive $\varepsilon$ guarantees the finiteness of the exponential moment for all $T > 0$.

The Pauli-Fierz Hamiltonian under the dipole approximation is defined by replacing $A(x)$ with $A(0)$:
 \begin{align*}
    H_\alpha(0) = \frac{1}{2}{(-i\nabla \otimes \mathbbm{1}- \alpha \mathbbm{1}\otimes A(0))}^2 + V \otimes\mathbbm{1}+ \mathbbm{1}\otimes H_\mathrm{rad} .
\end{align*}
Under the dipole approximation, by contrast, the exponential moment of the double stochastic integral can be handled in a substantially simpler manner.
The essential simplification stems from the fact that the integrand becomes deterministic, thereby making it considerably easier to derive lower bounds on the spatial decay of the ground state. More precisely, $X$ is replaced by
\begin{align*}
    X_\mathrm{dip}
    = \sum_{\mu, \nu = 1}^3
        \int_0^T
            \int_0^T \left\{
                \int_{\mathbb{R}^3}
                    \left(
                        \delta_{\mu \nu}
                        - \frac{k_\mu k_\nu}{\abs{k}^2}
                    \right)
                    \frac{\abs{\hat{\varphi}(k)}^2}{\omega(k)}
                    e^{- \abs{s - t} \omega(k)}
                \, \mathrm{d}k
            \right\} \, \mathrm{d}B_s^\mu
        \, \mathrm{d}B_t^\nu .
\end{align*}

In this case, we can show that
\begin{align*}
    \mathbb{E}[e^{\varepsilon X_\mathrm{dip}}] < + \infty
\end{align*}
provided that
\begin{align*}
   \varepsilon
   < \frac{3}{8 \alpha^2 \norm{\hat{\varphi} / \omega}^2} .
\end{align*}
The crucial point is that the admissible range of $\varepsilon$ is now independent of $T$.
As a consequence, the Agmon distance method as \cref{eq:agmon} can be implemented without substantial difficulty under the dipole approximation.
See \cref{thm:20}.

For the full Pauli-Fierz model, however, the situation is fundamentally different.
The corresponding stochastic integral retains its path dependence, and the admissible range of $\varepsilon$ shrinks as $T$ increases.
This dependence on $T$ constitutes a serious obstruction to the Agmon distance method.
In particular, the estimates required for the Agmon distance argument cannot be made uniform in $T$, and the method therefore fails to yield the desired lower bound on the spatial decay of the ground state.
We discuss this difficulty in detail in \cref{sec:5}.
Thus, despite its effectiveness for the renormalized Nelson model and for the Pauli-Fierz model under the dipole approximation, the Agmon distance does not provide an effective tool for obtaining the corresponding lower bound in the full Pauli-Fierz model. 
We therefore adopt a different approach to establish the lower bound; see \cref{thm:14}.
As the corollary of Theorem \ref{thm:14}  
we show upper and lower bounds of $\norm{{\varphi_\mathrm{g}}(x)}_\mathcal{F}$. 

\begin{corollary*}[Corollary\ref{maincoro}]
Suppose that, for some real number $n>0$ and constants
$0<A\leq B$ and $c_1,c_2>0$,
\[
    A^2|x|^{2n}-c_1
    \leq V(x)
    \leq B^2|x|^{2n}+c_2,
    \qquad x\in\mathbb{R}^3.
\]
Let $0<\epsilon<b<1/2$.
Then there exist constants $D_\epsilon,C_\epsilon>0$
such that
\[
    D_\epsilon e^{-(\sqrt2 2^n B+\epsilon)|x|^{n+1}}
    \leq \|\varphi_{\mathrm g}(x)\|_{\mathcal F}
    \leq
    C_\epsilon e^{-(b-\epsilon)\frac{A}{2^{n+4}}|x|^{n+1}},
    \qquad x\in\mathbb{R}^3.
\]
\end{corollary*}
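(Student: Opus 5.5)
Both bounds follow from general results, and I would treat them separately. The upper bound is the result of \cite{HH10}, which was obtained via the Feynman--Kac formula. For $V$ bounded below by $A^2|x|^{2n}-c_1$, that work gives
\[
\|\varphi_{\mathrm g}(x)\|_{\mathcal F}\le C e^{-c\,W(x)},
\]
where $W(x)$ is comparable to $\int_0^{|x|}\sqrt{V}$ evaluated along rays, with some loss in the constants. I would specialize that theorem to $V\ge A^2|x|^{2n}-c_1$ and track the constants. The Feynman--Kac upper bound uses a stopping-time argument: the Brownian motion started at $x$ must travel a distance of order $|x|/2$ before the potential becomes small. This produces the factor $2^{-(n+1)}$ from evaluating $V$ on the ball of radius $|x|/2$, together with further powers of $2$ from Hölder/Schwarz splittings. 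The parameter $b<1/2$ enters through the exponential-moment split $\mathbb{E}[e^{-2b\int V}]$. The result is a rate $(b-\epsilon)\frac{A}{2^{n+4}}|x|^{n+1}$, with $\epsilon$ absorbing the constant $c_1$ and the subexponential prefactors. I expect this part to be bookkeeping: it amounts to matching the stated constant $2^{n+4}$ with the constant appearing in \cite{HH10}.

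For the lower bound I would invoke \cref{thm:14}. Through the positivity of $(\mathbbm 1,\Theta^{-1}\varphi_{\mathrm g}(x))$ and the Feynman--Kac formula for $\Theta^{-1}H_\alpha\Theta$, that theorem presumably gives a bound of the form
\[
\|\varphi_{\mathrm g}(x)\|\ge (\mathbbm 1,\varphi_{\mathrm g}(x))\ge D\,\mathbb{E}^x\!\left[e^{-\int_0^T V(B_s)ds}\,\mathbb 1_{\{B_T\in \text{ball}\}}\right]e^{-c(T)}.
\]
In other words, it is a Schr\"odinger-type lower bound with $T$-dependent losses from the exponential moment of $X$. The strategy is to restrict to paths staying in a tube around the straight segment from $x$ to $0$, traversed at a time scale $T$ chosen depending on $|x|$. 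On that tube $V(B_s)\le B^2(|x|+1)^{2n}+c_2$. If the whole time is spent at distance at most about $|x|$ (the bound on the tube is crude and uses the maximal value of the potential along the path), this gives
\[
\int_0^T V\le T\bigl(B^2 2^{2n}|x|^{2n}+c\bigr),
\]
where the $2^{2n}$ comes from taking the tube radius comparable to $|x|$ so that the tube probability is not too small. The probability that Brownian motion follows a segment of length $|x|$ in time $T$ is at least $e^{-|x|^2/(2T)-C}$.

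Optimizing $T B^2 2^{2n}|x|^{2n}+|x|^2/(2T)$ over $T$ gives $T=|x|^{1-n}/(\sqrt2\,2^nB)$ and a value of $\sqrt2\,2^nB|x|^{n+1}$, which is exactly the constant in the statement. The remaining $\epsilon$ absorbs $c_2$, the tube-probability constants, and the factor $e^{-c(T)}$ from \cref{thm:14}.

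The main obstacle is the last point: showing that the penalty coming from $\mathbb{E}[e^{\varepsilon X}]$ is $o(|x|^{n+1})$. For $n\ge1$ the optimal $T\to0$, so the restriction $\varepsilon<1/(\triangle T)$ from \cref{lem:6} is harmless. In general I would fix the form of \cref{thm:14} and check that its error term is at most linear in $T$ times a constant, which is again $o(|x|^{n+1})$ at the chosen $T$. If needed, I would increase $T$ by a factor $(1+\delta)$ and absorb the difference into $\epsilon$.
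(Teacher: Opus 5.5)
Your route is the paper's. The upper bound comes from the HH10 estimate in \cref{upper} with a choice of $a$ and $T$. The lower bound comes from \cref{thm:14} with $T=|x|^{1-n}/(\sqrt2\,2^nB)$ and a box of half-widths $a_j\approx|x_j|$ around $x$; the paper takes $a_j=1+|x_j|+\sqrt{|x|}$, $\alpha_j=\frac12$, $b_j=1$, and this is where $2^{2n}$ and $\sqrt2\,2^nB$ come from.

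The gap is in the step you flag as the main obstacle. \cref{thm:14} is not of the form $D\,\mathbb{E}^x[\cdots]\,e^{-c(T)}$. It rests on the H\"older bound \cref{eq:holder} with $g=e^{-X}$. So the path expectation enters as $\mathbb{E}[\mathbbm{1}_P e^{-\frac1p\int_0^T(V(B_t+x)-E_\alpha)\,\mathrm{d}t}m(B_T+x)^{1/p}]^p$, which on the box event gives $e^{-\int_0^TW_a(x)\,\mathrm{d}t}\,\ell_K\,\mathcal{W}(P)^p$. The kinetic factor $e^{-\sum_ja_j^2/(2T)}$ is therefore raised to the power $p$. Meanwhile the admissibility condition $q/p<1/(\triangle T)$ is exactly $p-1>\triangle T$. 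The resulting extra cost is
\[
(p-1)\sum_{j=1}^3\frac{a_j^2}{2T}>\frac{\triangle}{2}\sum_{j=1}^3a_j^2 .
\]
This is of order $\triangle|x|^2$, since reaching $K$ from $x$ forces $a_j\geq|x_j|-1$, and it does not depend on $T$. It is not an error linear in $T$, and rescaling $T$ by $1+\delta$ does not touch it.

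For $n>1$ this is harmless once made explicit. There $T\to0$, and $p=1+2\triangle T$ gives $q/p=1/(2\triangle T)$, a bounded $C_3(q/p)$, and extra cost $\approx\triangle|x|^2=o(|x|^{n+1})$.

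Your claim that $T\to0$ for $n\geq1$ fails at $n=1$, where $T=1/(2\sqrt2B)$ is constant. There the extra cost is of the same order $|x|^2$, and the best rate obtainable is about $2\sqrt2B+\triangle/2$, not $2\sqrt2B+\epsilon$ with $\epsilon$ small.

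For $n<1$, every admissible $(T,p)$ gives an exponent at least $\frac{\triangle}{2}\sum_ja_j^2\gg|x|^{n+1}$. So no parameter choice in \cref{thm:14} yields decay $e^{-c|x|^{n+1}}$.

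The paper's proof of this corollary does not close the gap either: it writes the exponent as $-TW_a(x)-\frac1{2T}\sum_ja_j^2$, silently dropping the power $p$, and never fixes $p,q$. As it stands, your argument, like the paper's, gives the stated lower bound only for $n>1$.

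A smaller point on the upper bound. In \cref{upper} the parameter $b$ enters through the factor $e^{-ba^2/(4T)}$, not through an exponential moment of $\int V$, and the HH10 bound is not of Agmon form $e^{-cW(x)}$. The constant $\frac{A}{2^{n+4}}$ comes from choosing $a=|x|/2$ and $T=2^n|x|^{1-n}/A$. Then $\frac{ba^2}{4T}=\frac{b}{16}\frac{A}{2^n}|x|^{n+1}$ is the slower of the two exponentials, the other being $T\tilde W_a(x)\geq\frac{A}{2^n}|x|^{n+1}-c_1T$. This two-parameter choice is the actual content of your ``bookkeeping''.
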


This paper is organized as follows.
In \cref{sec:2}, we introduce the Pauli-Fierz model and recall its Feynman-Kac formula.
In \cref{sec:3}, we investigate the spatial decay of the ground state of the full Pauli-Fierz Hamiltonian.
In \cref{sec:4}, we study the spatial decay of the ground state of the Pauli-Fierz Hamiltonian under the dipole approximation by means of the Agmon distance.
In \cref{sec:5}, we derive a lower bound on the ground state of the full Pauli-Fierz Hamiltonian in terms of the Agmon distance.
Finally, in Appendix, we collect and prove several fundamental lemmas needed to make the paper self-contained.

\section{The Pauli-Fierz model}\label{sec:2} %

\subsection{Boson Fock space}
In the Pauli-Fierz model, photons are described as transversal waves in the Coulomb gauge and therefore possess two polarization.
Their mass is zero.
Accordingly, the Hilbert space of a single photon is given by
\begin{align*}
    \mathfrak{h} = L^2(\mathbb{R}^3 \times \{1, 2\}) ,
\end{align*}
where $\{1, 2\}$ corresponds to the three dimensional transverse polarizations.
For each momentum $k = (k_1, k_2, k_3) \neq 0$, we fix an orthonormal basis
\begin{align*}
    {\{
        e(k, \lambda)
        = (
            {e_1}(k, \lambda) ,
            {e_2}(k, \lambda) ,
            {e_3}(k, \lambda)
        )
    \}}_{\lambda = 1, 2}
\end{align*}
of $\mathbb{R}^3$ and transverse polarization vectors satisfy
\begin{align*}
    e(k, \lambda) \cdot e(k, \lambda') &= \delta_{\lambda \lambda'} ,
    \quad
    \sum_{\lambda = 1, 2} {e_{\mu}}(k, \lambda) {e_{\nu}}(k, \lambda) = d_{\mu \nu}(k) ,
    \quad
    k \cdot e(k, \lambda) = 0 .
\end{align*}
Here 
\begin{align*}
    {d_{\mu\nu}}(k)
    = \delta_{\mu \nu} 
    - \frac{k_\mu k_\nu}{\abs{k}^2}
\end{align*}
denotes the transverse delta function. 
Since photons are massless, their dispersion relation is given by
\begin{align*}
    \omega(k) = \abs{k} .
\end{align*}
The boson Fock space $\mathcal{F}(\mathfrak{h})$ over $\mathfrak{h}$ is defined by the infinite direct sum of the symmetric $n$-fold tensor product of $\mathfrak{h}$:
\begin{align*}
    \mathcal{F}(\mathfrak{h})
    = \mathbb{C}
    \oplus \bigoplus_{n = 1}^\infty [\otimes_\mathrm{sym}^n \mathfrak{h}] .
\end{align*}
We denote by $\mathcal{F}(\mathfrak{h}) = \mathcal{F}$.
The Fock vacuum is defined by
\begin{align*}
    \Omega
    = 1 \oplus 0 \oplus 0 \oplus \cdots \in \mathcal{F} .
\end{align*}
The vector $\Phi$ belonging to $\mathcal{F}$ is denoted by $\Phi = {\{\Phi^{(n)}\}}_{n = 0}^\infty$, where $\Phi^{(n)}\in \otimes_\mathrm{sym}^n \mathfrak{h}$ for $n \geq 1$ and $\Phi^{(0)} \in \mathbb{C}$.
The creation operator acting in $\mathcal{F}$ is given by
\begin{align*}
    {({a^\dag}(f) \Phi)}^{(n)}
    &= \sqrt{n} {S_n}(f \otimes \Phi^{(n - 1)}) ,
    \quad
    {({a^\dag}(f) \Phi)}^{(0)} = 0 .
\end{align*}
Here $S_n : \otimes^n \mathfrak{h} \to \otimes_\mathrm{sym}^n \mathfrak{h}$ is the symmetrization operator.
The annihilation operator is defined by $a(f) = {({a^\dag}(\bar{f}))}^*$.
They satisfy the canonical commutation relations:
\begin{align*}
    [a(f), {a^\dag}(g)]
    &= {(\bar{f}, g)}_{\mathfrak{h}},
    \quad
    [a(f), a(g)]
    = [{a^\dag}(f), {a^\dag}(g)] = 0 .
\end{align*}
We introduce operator-valued distributions $a(k, \lambda)$ and ${a^\dag}(k, \lambda)$.
Then the annihilation and creation operators are formally written by
\begin{align*}
    a(f)
    = \sum_{\lambda = 1, 2}
        \int_{\mathbb{R}^3}
            {f(k, \lambda)} a(k, \lambda)
        \, \mathrm{d}k ,
    \quad {a^\dag}(f)
    = \sum_{\lambda = 1, 2}
        \int_{\mathbb{R}^3}
            f(k, \lambda) {a^\dag}(k, \lambda)
        \,\mathrm{d}k .
\end{align*}

\subsection{The Pauli-Fierz Hamiltonian}
The Hilbert space describing the state space of a single electron is
\begin{align*}
    \mathfrak{h}_{\mathrm{el}} = L^2(\mathbb{R}^3) .
\end{align*}
Therefore the total Hilbert space $\mathcal{H}$ of the Pauli-Fierz model is given by
\begin{align*}
    \mathcal{H} = \mathfrak{h}_{\mathrm{el}} \otimes \mathcal{F} .
\end{align*}
We introduce an ultraviolet cutoff function $\varphi \in {\mathscr S}'(\mathbb{R}^3)$.
Let $\hat{\varphi}$ be the Fourier transform of~$\varphi$. 
Throughout this paper we assume \cref{ass:uv} below.
\begin{assumption}\label{ass:uv} %
{\rm     We assume that $\hat{\varphi} \in L_{\mathrm{loc}}^1(\mathbb{R}^3)$, $\hat{\varphi}$ is rotation invariant, and
    \begin{align*}
        \overline{{\hat{\varphi}}(k)} = {\hat{\varphi}}(- k),
        \quad \sqrt{\omega} \hat{\varphi},
        \frac{\hat{\varphi}}{\sqrt{\omega}},
        \frac{\hat{\varphi}}{\omega} \in L^2(\mathbb{R}^3). 
    \end{align*}
}\end{assumption}
The first condition ensures that the cutoff function is real-valued, while the latter conditions ensure that the Pauli-Fierz Hamiltonian is self-adjoint.

A physical back ground of the introduction of the ultraviolet cutoff is as follows:  
If electrons are treated as point particles, they can couple to photons of arbitrarily short wavelengths.
Since a point particle is spatially localized at a single point, it includes all momentum components uniformly in momentum space.
In particular, the high-momentum components are not suppressed at all.
This means that the electron self-energy becomes infinite, and the vacuum expectation of the Hamiltonian becomes infinite.
However, real electrons have a finite size and are not point particles.
So, their interactions with photons having very short wavelengths should be suppressed.
The ultraviolet cutoff is introduced based on this physical consideration and serves as a fundamental assumption of the model.

For each $x \in \mathbb{R}^3$, we define the mode function by 
\begin{align*}
    {f^\nu_{x}}(k, \lambda) &= \frac{{\hat{\varphi}}(k)}{\sqrt{\omega(k)}} {e_{\nu}}(k, \lambda) e^{- i k \cdot x} ,
    \qquad \nu = 1, 2, 3 .
\end{align*}
We can see that $f^\nu_x \in L^2(\mathbb{R}^3 \times \{1,2\})$ holds under the condition ${\hat{\varphi}} / {\sqrt{\omega}} \in L^2(\mathbb{R}^3)$ for each $x \in \mathbb{R}^3$.
The quantized radiation field and its conjugate momentum are defined by
\begin{align*}
    {A_{\nu}}(x) &= \frac{1}{\sqrt{2}} \left({a^\dag}(f^\nu_{x}) +a(\bar f^\nu_{x})\right),  \\
    {{\Pi}_{\nu}(x)} &= \frac{i}{\sqrt{2}} \left({a^\dag}(f^\nu_{x}) -a(\bar f^\nu_{x}) \right), 
\end{align*}
respectively. Formal expressions in terms of $a(k, \lambda)$ and ${a^\dag}(k, \lambda)$ are given by
\begin{align*}
    {A_{\nu}}(x)
    &= \frac{1}{\sqrt{2}}
    \sum_{\lambda = 1, 2}
        \int_{\mathbb{R}^3}
            \frac{1}{\sqrt{\omega(k)}}
            \left( {\hat{\varphi}}(k) {a^\dag}(k, \lambda) e^{- i k \cdot x}+\overline{{\hat{\varphi}}(k)} a(k, \lambda) e^{i k \cdot x}  \right)
            {e_{\nu}}(k, \lambda)
        \, \mathrm{d}k , \\
    {{\Pi}_{\nu}}(x)
    &= \frac{i}{\sqrt{2}}
    \sum_{\lambda = 1, 2}
        \int_{\mathbb{R}^3}
            \frac{1}{\sqrt{\omega(k)}}
            \left({\hat{\varphi}}(k) {a^\dag}(k, \lambda) e^{- i k \cdot x}-\overline{{\hat{\varphi}}(k)} a(k, \lambda) e^{i k \cdot x} \right)
            {e_{\nu}}(k, \lambda)
        \, \mathrm{d}k .
\end{align*}
We write $A(x) = ({A_1}(x), {A_2}(x), {A_3}(x))$ and $\Pi(x) = ({{\Pi}_1}(x), {{\Pi}_2}(x), {{\Pi}_3}(x))$.
From these definitions and the transversality condition $k \cdot e(k, \lambda) = 0$, it follows that ${\nabla}_x \cdot A(x) = 0$.
We set
\begin{align*}
    {A_{\nu}} = \int_{\mathbb{R}^3}^{\oplus} {A_{\nu}}(x) \, \mathrm{d}x.
\end{align*}
The free field Hamiltonian is defined by the differential second quantization of $\omega$:
\begin{align*}
    H_\mathrm{rad} = \mathrm{d}\Gamma(\omega).
\end{align*}
Finally we explain external potential. 
Let ${(B_t)}_{t \geq 0}$ be 3D-Brownian motion on a probability space $(\mathcal{X}, \mathcal{F}, \mathcal{W})$ starting at $x$.
Let $\mathbb{E}^x$ be the expectation with respect to $\mathcal{W}$. For simplicity we $\mathbb{E}^0 = \mathbb{E}$ in what follows.
We decompose the measurable function $V: \mathbb{R}^3 \to \mathbb{R} $ into its positive and negative parts as $V = V_{+} - V_{-}$, where $V_{-} = \max\{- V, 0\}$ and $V_{+} = V + V_{-}$.
We assume that the negative part $V_{-}$ belongs to the Kato class, i.e.,
\begin{align*}
    \lim_{T \downarrow 0} \sup_{x \in \mathbb{R}^3} \mathbb{E}^x \left[ \int_0^T V_{-}(B_t) \, \mathrm{d}t \right] = 0
\end{align*}
and that the positive part $V_{+}$ belongs to the local Kato class.
Namely, for any compact set $K \subset \mathbb{R}^3$, the above condition holds with ${\mathbbm{1}}_{K} V_{+}$ put in place of $V_{-}$. The set of Kato class functions is denoted by $\mathcal{K}$ and the set of local Kato class functions by $\mathcal{K}_{\mathrm{loc}}$.
This condition means that the potential is sufficiently mild along Brownian trajectories in the short-time limit, so that it does not produce excessively singular effects on typical particle paths.

Under these assumptions, we define the Pauli-Fierz Hamiltonian by
\begin{align*}
    H_\alpha = \frac{1}{2} {(- i \nabla \otimes \mathbbm{1} - \alpha A)}^2 + V \otimes \mathbbm{1} + \mathbbm{1} \otimes H_\mathrm{rad}, 
\end{align*}
where $\alpha \in \mathbb{R}$ denotes the coupling constant.
Note that, in this definition, we have used the identification $\mathfrak{h} \cong \int_{\mathbb{R}^3}^{\oplus} \mathcal{F} \, \mathrm{d}x$.
In what follows, we simply write
\begin{align*}
    H_\alpha = \frac{1}{2} {(- i \nabla - \alpha A)}^2 + V + H_\mathrm{rad} .
\end{align*}

We conclude this section with a comment on the self-adjointness of $H_\alpha$.
If the external potential $V$ is relatively bounded with respect to $- (1 / 2) \Delta$ with relative bound strictly less than one, then $H_\alpha$ is self-adjoint on $D(- \Delta) \cap D(H_\mathrm{rad})$; see  \cite{hir00a,hir02b,HH08,mar15}. 
Moreover, if $V$ belongs to the Kato class, $H_\alpha$ admits a self-adjoint realization for which a Feynman-Kac formula for $e^{- T H_\alpha}$ can be established. We refer the reader to~\cite{HH10} for details.

\subsection{Feynman-Kac formula}
In this section, we derive a Feynman--Kac formula for $e^{-tH_\alpha}$.
For details, we refer the reader to~\cite[Chapter 3, Section 3]{HL26}.
To this end, we use the $Q$-space representation instead of the Fock representation introduced in the previous section.
Before introducing the $Q$-space representation, we describe the relation between $A_\nu$ and $\Pi_\nu$.
Let $N$ be the number operator on $\mathcal{F}$.
Then
\begin{align*}
    i[N,A_\nu(x)] = \Pi_\nu(x).
\end{align*}
Moreover,
\begin{align*}
    e^{-i\frac{\pi}{2}N} A_\nu(x) e^{i\frac{\pi}{2}N}
    = -\Pi_\nu(x).
\end{align*}
See \cref{a1} in Appendix for details.

Two Gaussian configuration spaces are introduced: the Minkowskian configuration space $Q$ and the Euclidean configuration space $Q_{\mathrm{E}}$.
On the Minkowskian configuration space $Q$, we define a probability space $(Q,\Sigma,\mu)$ carrying a centered Gaussian random variable ${\{\hat A(f)\}}_{f \in \oplus^3 L^2(\mathbb{R}^3)}$ with covariance
\begin{align*}
    \mathbb{E}_{\mu}[{\hat{A}}(f) {\hat{A}}(g)]
    = \frac{1}{2}
    \int_{\mathbb{R}^3}
        \overline{{\hat{f}_{\mu}}(k)}
        \left({\delta}_{\mu \nu} - \frac{k_\mu k_\nu}{\abs{k}^2} \right)
        {\hat{g}_{\nu}}(k)
    \, \mathrm{d}k .
\end{align*}
Here $\hat f$ denotes the Fourier transform of $f$:
\begin{align*}
    {\hat{f}}(k) = \frac{1}{{(2 \pi)}^{3 / 2}} \int_{\mathbb{R}^3} e^{- i k \cdot x} f(x) \, \mathrm{d}x .
\end{align*}
On the other hand, on the Euclidean configuration space $Q_{\mathrm{E}}$, we define a probability space $(Q_{\mathrm{E}}, {\Sigma}_{\mathrm{E}}, {\mu}_{\mathrm{E}})$ carrying a centered Gaussian random variable ${\{\hat{A}_{\mathrm{E}}(F)\}}_{F \in \oplus^3 L^2(\mathbb{R}^4)}$ with covariance
\begin{align*}
    \mathbb{E}_{{\mu}_{\mathrm{E}}}[\hat{A}_{\mathrm{E}}(F) \hat{A}_{\mathrm{E}}(G)]
    &= \frac{1}{2}
    \int_{\mathbb{R}^4}
        \overline{{\hat{F}_{\mu}}(k_0, k)}
        \left({\delta}_{\mu \nu} - \frac{k_\mu k_\nu}{\abs{k}^2} \right)
        {\hat{G}_{\nu}}(k_0, k)
    \,\mathrm{d}k_0 \,\mathrm{d}k.
\end{align*}

The Minkowskian and Euclidean configuration spaces are connected by a family of isometries ${\{\mathrm{j}_t\}}_{t\in\mathbb{R}}$.
Let
$\mathrm{j}_t:\oplus^3L^2(\mathbb{R}^3)\to\oplus^3L^2(\mathbb{R}^4)$
be the isometry defined by
\begin{align*}
    \widehat{\mathrm{j}_t f}(k_0, k)
    = \frac{1}{\sqrt{\pi}}
    \sqrt{\frac{\omega(k)}{\abs{k_0}^2 + {{\omega}(k)}^2}}
    {\hat{f}}(k)
    e^{- i t k_0}.
\end{align*}
It holds that 
\begin{align}\label{eq:jt} %
    \mathrm{j}_t^* \mathrm{j}_s = e^{- \abs{t - s} \hat{\omega}} ,
    \qquad t, s \in \mathbb{R},
\end{align}
where \[\hat{\omega} = \omega(-i \nabla_k) = \sqrt{- \Delta_k}.\]
Let $\hat H_\mathrm{rad} $ be given by 
\begin{align*}
    \hat{H}_{\mathrm{f}} : \hat{A}(f_1) \cdots \hat{A}(f_n): \: = \: 
    : \hat{A}(\hat{\omega} f_1) \cdots \hat{A}(\hat{\omega} f_n) :, 
\end{align*}
where $:\quad:\,$ denotes the Wick product~\cite[\text{Section 1.2.6}]{HL26}.
Moreover let $\hat N$ and $\hat N_{\mathrm{E}}$ be number operators on 
$L^2(Q)$ and $L^2(Q_{\mathrm{E}})$, respectively, i.e., 
\begin{align*}
    &\hat{N} : {\hat{A}}(f_1) \cdots {\hat{A}}(f_n): \: = \: n : {\hat{A}}(f_1) \cdots {\hat{A}}(f_n): , \\
    & \hat{N}_{\mathrm{E}} : {\hat{A}_{\mathrm{E}}}(f_1) \cdots {\hat{A}_{\mathrm{E}}}(f_n): \: = \: n : {\hat{A}_{\mathrm{E}}}(f_1) \cdots {\hat{A}_{\mathrm{E}}}(f_n): .
\end{align*}
We define the conjugate momentum of ${\hat{A}}(f)$ and ${\hat{A}_{\mathrm{E}}}(f)$ by
\begin{align*}
    \hat{\Pi}(f) = i [\hat{N}, {\hat{A}}(f)] ,
    \quad {\hat{\Pi}_{\mathrm{E}}}(f) = i [\hat{N}_{\mathrm{E}}, {\hat{A}_{\mathrm{E}}}(f)] ,
\end{align*}
respectively. 
Let $S$ be a contraction operator from 
$\oplus^3 L^2(\mathbb{R}^3)$ to $\oplus^3 L^2(\mathbb{R}^4)$.
Then the second quantization ${{\Gamma}_{\mathrm{int}}}(S) : L^2(Q) \to L^2(Q_{\mathrm{E}})$ is defined by
\begin{align*}
    {{\Gamma}_{\mathrm{int}}}(S) : {\hat{A}}(f_1) \cdots {\hat{A}}(f_n) \: = \: : {\hat{A}_{\mathrm{E}}}(S f_1) \cdots {\hat{A}_{\mathrm{E}}}(S f_k):
\end{align*}
and ${{\Gamma}_{\mathrm{int}}}(S) \mathbbm{1} = \mathbbm{1}$.
In particular, the operator $\mathrm{J}_t = {{\Gamma}_{\mathrm{int}}}(\mathrm{j}_t)$ is an isometry from $L^2(Q)$ to $L^2(Q_{\mathrm{E}})$ and by \cref{eq:jt} it satisfies that
\begin{align*}
    \mathrm{J}_t^* \mathrm{J}_s = e^{- \abs{t - s} \hat{H}_{\mathrm{f}}} ,
    \qquad t, s \in \mathbb{R} .
\end{align*}
We introduce two phase rotations on $L^2(Q)$ and $L^2(Q_{\mathrm{E}})$:
\begin{align*}
    \hat{\Theta} = e^{i \frac{\pi}{2} \hat{N}},
    \qquad \hat{\Theta}_{\mathrm{E}} = e^{i \frac{\pi}{2} \hat{N}_{\mathrm{E}}},
\end{align*}
respectively. 
The phase rotations $\hat{\Theta}$ and $\hat{\Theta}_{\mathrm{E}}$ satisfy
\begin{align*}
    \hat{\Theta}^{-1} \hat{A}(f) \hat{\Theta} = -{\hat{\Pi}}(f) ,
    \qquad {\hat{\Theta}_{\mathrm{E}}}^{-1} {\hat{A}_{\mathrm{E}}}(F) \hat{\Theta}_{\mathrm{E}} = -{\hat{\Pi}_{\mathrm{E}}}(F) ,
\end{align*}
and $\mathrm{J}_t$ intertwines $\hat{\Theta}$ and $\hat{\Theta}_{\mathrm{E}}$ in the sense that
\begin{align*}
    \mathrm{J}_t \hat{\Theta} = \hat{\Theta}_{\mathrm{E}} \mathrm{J}_t ,
    \qquad t \in \mathbb{R}.
\end{align*}
The phase rotation may be regarded as the analogue of the Fourier transform on the boson Fock space.
Indeed, it satisfies
\begin{align*}
    {\hat \Theta}^4 =\mathbbm{1} ,
\end{align*}
just as the Fourier transform on $L^2(\mathbb{R}^3)$.
The boson Fock space $\mathcal{F}$ is unitarily equivalent to $L^2(Q)$ which is implemented by the so-called Wiener-Ito-Segal isomorphism:
$\mathrm{U} : \mathcal{F} \to L^2(Q)$ given by 
\begin{align*}
    &\mathrm{U} \Omega = \mathbbm{1},\\
    &\mathrm{U} : A(f_1) \cdots A(f_n) : \Omega = : \hat{A}(f_1) \cdots \hat{A}(f_n) :
\end{align*}
As is seen above in this representation, the field operators $A(x)$ are realized as Gaussian random variables.
This formulation is particularly convenient for deriving functional integral representations of the semigroup generated by $H_\alpha$. 
Let
\begin{align*}
    \tilde{\varphi} = {\left(\frac{\hat{\varphi}}{\sqrt{\omega}}\right)}^{\vee}.
\end{align*}
Under the Wiener - Ito - Segal isomorphism $\mathrm{U}$, operators on $\mathcal{F}$ are transformed to operators on $L^2(Q)$.
We can see that 
\begin{align*}
    \hat A(\tilde{\varphi}(\cdot - x)) &= \mathrm{U} A(x) {\mathrm{U}}^{-1} ,
    \quad \hat{H_\mathrm{rad}} = \mathrm{U} H_\mathrm{rad} {\mathrm{U}}^{-1} ,
    \quad \hat{N} = \mathrm{U} N {\mathrm{U}}^{-1} .
\end{align*}
The Pauli-Fierz Hamiltonian $H_\alpha $ is transformed to
\begin{align*}
    \hat H_\alpha = \mathrm{U} H_\alpha  \mathrm{U}^{-1}
    = \frac{1}{2}  {\left(- i \nabla - \alpha \hat A(\tilde{\varphi}(\cdot - x))\right)}^2 + V + \hat{H_\mathrm{rad}}.
\end{align*}
Note that  $\hat{H}_{\alpha}$ is self-adjoint on the Hilbert space $L^2(\mathbb{R}^3 \times Q)$. 
To distinguish operators in the $Q$-space representation from those in the Fock representation, we temporarily used the symbol like $\hat{H}_{\alpha} $ so far.
If no confusion arises, we omit the hat in what follows.
Moreover in what follows we denote $L^2(Q)$ by $\mathcal{F}$, $L^2(\mathbb{R}^3 \times Q)$ by $\mathcal{H}$, and $L^2(Q_{\mathrm{E}})$ by $\mathcal{E}$.
Hereafter, we omit the notations~$\hat{\quad}$ unless there is a particular risk of confusion.

Let $E_\alpha $ denote the infimum of the spectrum of $H_\alpha $.~i.e.
\begin{align*}
    H_\alpha {\varphi_\mathrm{g}}= E_\alpha \varphi_\mathrm{g}.
\end{align*}

\begin{lemma}[\cite{HH10} Theorem 2.18]
    Let $V$ be Kato-decomposable.
    For any $\Phi, \Psi \in \mathcal{H} $, the Feynman-Kac formula below holds:
    \begin{align}\label{eq:FKF} %
        {(\Phi, e^{- T H_\alpha} \Psi)}_\mathcal{H}
        &= \int_{\mathbb{R}^3}
            \mathbb{E}^{x} \left[
                e^{- \int_0^T V(B_t) \,\mathrm{d} t} 
                {(\Phi(B_0), \mathrm{J}_0^* e^{-i \alpha A_\mathrm{E}(K_T)} \mathrm{J}_T \Psi(B_T))}_\mathcal{F}
            \right] 
        \, \mathrm{d}x ,
    \end{align}
where
\begin{align*}
    K_T = \bigoplus_{j = 1}^3 \int_0^T \mathrm{j}_t \tilde{\varphi}(\cdot - B_t) \, \mathrm{d}B_t^j
\end{align*} 
denotes $\bigoplus_{j = 1}^3 L^2(\mathbb{R}^3)$-valued stochastic integral,  and 
    \begin{align*}
        (e^{- T H_\alpha } \Psi)(x)
        &= \mathbb{E}^{x} \left[
            e^{- \int_0^T V(B_t) \, \mathrm{d}t}
            \mathrm{J}_0^* e^{- i \alpha \hat{A}_{\mathrm{E}}(K_T)} \mathrm{J}_T \Psi(B_T)
        \right]
        \qquad {\mathrm{a.e. }} x \in \mathbb{R}^3 .
    \end{align*}
\end{lemma}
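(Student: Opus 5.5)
The plan is a Trotter product argument in the $Q$-space representation. The interaction is absorbed into a stochastic integral through the Euclidean field $A_\mathrm{E}$ and the Markov property of the family $\{\mathrm{J}_t\}$.

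First, I split $H_\alpha$ as $H_\alpha = H_0 + V$, with
\[
H_0=\tfrac12(-i\nabla-\alpha A)^2+H_\mathrm{rad}.
\]
I treat the case $V=0$ (or bounded smooth $V$) first. In that case I would establish the formula by the Trotter--Kato product formula applied to
\[
\bigl(e^{-\frac{T}{n}\frac12(-i\nabla-\alpha A)^2}\,e^{-\frac{T}{n}H_\mathrm{rad}}\bigr)^n .
\]
It is convenient to go further and factor $(-i\nabla-\alpha A)^2$ by gauge considerations. For fixed field configuration, the operator $\tfrac12(-i\nabla-\alpha a(x))^2$ with a smooth real vector potential $a(x)=\hat A(\tilde\varphi(\cdot-x))$ has the Feynman--Kac--Itô representation
\[
\mathbb{E}^x\!\left[e^{-i\alpha\int_0^t a(B_s)\circ \mathrm{d}B_s}f(B_t)\right].
\]
The divergence-free condition $\nabla\cdot A=0$ lets the Stratonovich integral be replaced by an Itô integral.

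Next I insert $e^{-\frac{T}{n}H_\mathrm{rad}}=\mathrm{J}_{t_{j-1}}^*\mathrm{J}_{t_j}$ between the factors. By the Markov property of $\mathrm{J}_t$, namely that $\mathrm{J}_t\mathrm{J}_t^*$ is the conditional expectation onto the time-$t$ sigma field, the product of $n$ blocks collapses into
\[
\mathrm{J}_0^*\,e^{-i\alpha A_\mathrm{E}(K^{(n)})}\,\mathrm{J}_T .
\]
Here $K^{(n)}=\bigoplus_j\sum_k \mathrm{j}_{t_{k}}\tilde\varphi(\cdot-B_{t_k})(B^j_{t_{k+1}}-B^j_{t_k})$ is a Riemann--Itô sum, and I use the identity $\mathrm{J}_t e^{-i\alpha\hat A(f)}=e^{-i\alpha \hat A_\mathrm{E}(\mathrm{j}_t f)}\mathrm{J}_t$.

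I then pass to the limit $n\to\infty$. By the Itô isometry,
\[
\mathbb{E}\|K^{(n)}-K_T\|^2\to 0 ,
\]
since $t\mapsto \mathrm{j}_t\tilde\varphi(\cdot-y)$ is strongly continuous and $\tilde\varphi\in L^2$ by Assumption \ref{ass:uv}. The Gaussian bound $\|e^{-i\alpha A_\mathrm{E}(F)}-e^{-i\alpha A_\mathrm{E}(G)}\|\le \ldots$ in $L^2(Q_\mathrm{E})$ then gives convergence of the integrands, and dominated convergence applies because all the unitaries are bounded. I expect this step, the identification of the discrete product with the Itô-type $K_T$, to be the main obstacle. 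The field-dependent phase at different time slices must be combined coherently, and the Stratonovich correction must vanish; transversality ($k\cdot e=0$) is used exactly there.

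Finally, I add the potential $V$ by a second Trotter product with $e^{-\frac{T}{n}V}$, which produces $e^{-\int_0^TV(B_t)\mathrm{d}t}$. To pass from bounded $V$ to Kato-decomposable $V$, I use monotone convergence on $V_+\wedge m$ and Khasminskii's lemma for $V_-$, which gives $\sup_x\mathbb{E}^x[e^{\int_0^T V_-}]<\infty$ and hence dominated convergence. This also yields the self-adjoint realization as the form sum. The pointwise formula for $(e^{-TH_\alpha}\Psi)(x)$ follows by duality, taking $\Phi=f\otimes\phi$ with arbitrary $f$ and $\phi$.
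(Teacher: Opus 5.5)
This lemma is not proved in the paper; it is quoted from \cite{HH10}. Your outline is essentially the standard derivation behind that citation: a fiberwise Feynman--Kac--It\^o formula with the It\^o form justified by $\nabla\cdot A=0$, Trotter--Kato with $e^{-\frac{T}{n}H_\mathrm{rad}}=\mathrm{J}_{t_{k-1}}^*\mathrm{J}_{t_k}$, the Markov property of $\{\mathrm{J}_t\}$ together with $\mathrm{J}_t e^{-i\alpha\hat A(f)}=e^{-i\alpha\hat A_\mathrm{E}(\mathrm{j}_tf)}\mathrm{J}_t$ to merge the time slices, the $L^2$ limit of the It\^o sums to $K_T$, and finally the extension to Kato-decomposable $V$ via Khasminskii's lemma.

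One point needs tidying. Under Assumption~\ref{ass:uv} the field $x\mapsto\hat A(\tilde\varphi(\cdot-x))$ need not be smooth; it is only a.s.\ locally H\"older continuous. So either invoke a Feynman--Kac--It\^o formula for rough divergence-free vector potentials, or first regularize $\hat\varphi$ and then remove the regularization using norm-resolvent convergence of $H_\alpha$ together with $L^2$ convergence of $K_T$.
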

We also refer the readers to \cite{spo86,hir07, hir14, mat16, mat17, GMM17}. 
In this representation, the motion of the electron is described by a Brownian path, while the quantized radiation field appears as a Gaussian random variable.
Consequently, the matrix elements of the semigroup $e^{- t H_\alpha}$ can be computed as expectations with respect to the joint fluctuations of the electron's random trajectory and the Gaussian fluctuations of the photon field.

\subsection{Upper bounds of exponential decay}
We introduce two classes of external potentials.
\begin{definition}
\normalfont
The potential classes $\mathcal{V}_{\mathrm{upper}}$ and
$\mathcal{V}_{\mathrm{lower}}$ are defined as follows.
\begin{enumerate}[label=(\arabic*)]
    \item
    $V\in\mathcal{V}_{\mathrm{upper}}$ if and only if
    $V$ admits a decomposition $V=W-U$ satisfying:
    \begin{enumerate}[label=(\alph*)]
        \item
        $U\geq 0$ and $U\in L^p(\mathbb{R}^3)$
        for some $p\in(3/2,\infty)$;
        \item
        $W\in L^1_{\mathrm{loc}}(\mathbb{R}^3)$ and
$            W_\infty=\inf_{x\in\mathbb{R}^3}W(x)>-\infty$.
    \end{enumerate}
    \item
    $V\in\mathcal{V}_{\mathrm{lower}}$ if and only if
    $V$ admits a decomposition $V=W-U$ satisfying:
    \begin{enumerate}[label=(\alph*)]
        \item
        $U\geq 0$ and $U\in L^p(\mathbb{R}^3)$
        for some $p\in(3/2,\infty)$;
        \item
        $W\geq 0$ and
        $W\in\mathcal{K}_{\mathrm{loc}}(\mathbb{R}^3)$.
    \end{enumerate}
\end{enumerate}
\end{definition}
For $a>0$, define
\[
     \tilde W_a(x)=\inf\{W(y)\mid |y-x|<a\}.
\]

\begin{lemma}[
    {\cite[Lemma 3.2 and Theorem 3.3]{HH10};
     \cite[Lemma 3.54 and Corollary 3.55]{HL26}}
]\label{upper}
Let $V=W-U\in\mathcal{V}_{\mathrm{upper}}$.
For each $b\in(0,1/2)$, there exist constants
$D_1,D_2,D_3>0$, independent of $T$, $a$, and $x$,
such that, for all $T,a>0$,
\[
    \|\varphi_{\mathrm g}(x)\|_{\mathcal F}
    \leq
    D_1T^{-3/2}
    e^{(D_2\|U\|_{L^p}+E_\alpha)T}
    \left(
        D_3e^{-ba^2/(4T)}e^{-TW_\infty}
        +e^{-T\tilde W_a(x)}
    \right).
\]
\end{lemma}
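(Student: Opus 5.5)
We obtain the bound from the eigenvalue identity $\varphi_{\mathrm g}=e^{T E_\alpha}e^{-TH_\alpha}\varphi_{\mathrm g}$ together with the pointwise Feynman--Kac formula of the preceding lemma. Taking the $\mathcal F$-norm inside the expectation, and using that $\mathrm J_0^*$, $\mathrm J_T$ are contractions and $e^{-i\alpha A_{\mathrm E}(K_T)}$ is unitary, gives
\[
\|\varphi_{\mathrm g}(x)\|_{\mathcal F}\le e^{TE_\alpha}\,\mathbb E^x\Bigl[e^{-\int_0^T V(B_t)\,\mathrm dt}\|\varphi_{\mathrm g}(B_T)\|_{\mathcal F}\Bigr].
\]
From here on only a Schr\"odinger-type bound is needed: write $V=W-U$ and apply H\"older's inequality in the path measure with exponents $(r,r')$ to separate $e^{\int_0^T U(B_t)\,\mathrm dt}$ from the rest.

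\textbf{The $U$ factor.} Since $U\in L^p$ with $p>3/2$, we have $U\in\mathcal K$, and Khasminskii's lemma yields $\sup_y\mathbb E^y[e^{r\int_0^TU(B_t)\,\mathrm dt}]\le C e^{D_2\|U\|_{L^p}T}$. The linear dependence on $\|U\|_{L^p}$ comes from a scaling estimate of the small-time Kato norm.

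\textbf{Splitting the paths.} Next, split according to the event $\{\sup_{0\le t\le T}|B_t-x|\ge a\}$ and its complement.

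On the complement the path stays in the ball of radius $a$ around $x$, so $W(B_t)\ge\tilde W_a(x)$ for all $t\le T$. This gives the factor $e^{-T\tilde W_a(x)}$.

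On the event itself we use only $W\ge W_\infty$. The reflection principle gives
\[
\mathcal W\Bigl(\sup_{t\le T}|B_t-x|\ge a\Bigr)\le C e^{-a^2/(2T)}.
\]
After raising to the H\"older power $1/r'$, this becomes $e^{-ba^2/(4T)}$ for any $b<1/2$, with $r'$ chosen close to $1$ (and $r$ correspondingly large).

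\textbf{The main obstacle: the remaining factor $\|\varphi_{\mathrm g}(B_T)\|_{\mathcal F}$.} This is where the $T^{-3/2}$ comes from, and it must be handled without destroying the preceding estimates. We avoid taking a sup norm of $\varphi_{\mathrm g}$, which is not a priori available, and instead use the Markov property at time $T$: we estimate by Cauchy--Schwarz against the Gaussian transition density $p_T(x,y)\le(2\pi T)^{-3/2}$. Then
\[
\mathbb E^x\bigl[F\,\|\varphi_{\mathrm g}(B_T)\|\bigr]\le (2\pi T)^{-3/2}\int \mathbb E^{x,y}_{\text{bridge}}[F]\,\|\varphi_{\mathrm g}(y)\|\,\mathrm dy .
\]
The $y$-integral is then controlled via $\|\varphi_{\mathrm g}\|_{L^2}=1$, applied after first localizing with the case split above. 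Alternatively, one performs the splitting and the H\"older step under the Brownian bridge measure, which preserves the same exit-probability and Khasminskii bounds uniformly in the endpoint.

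The constants produced this way depend only on $b$, $p$ and the Kato-class constants. They are independent of $T$, $a$ and $x$, which gives exactly the stated bound of \cref{upper}.
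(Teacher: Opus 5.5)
The paper gives no proof of this lemma; it is imported from the cited references, where the argument is Carmona's path-splitting estimate. Your first steps are the right ones: the diamagnetic bound from the Feynman--Kac formula, H\"older plus Khasminskii for $U$, and the split according to $A=\{\sup_{t\le T}|B_t-x|\ge a\}$.

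The gap is in the step you single out as the main obstacle. What you display is not a Cauchy--Schwarz inequality. It is the pointwise bound $p_T(x,y)\le(2\pi T)^{-3/2}$, and after it the remaining $\int\mathbb{E}^{x,y}_{\mathrm{bridge}}[F]\,\|\varphi_{\mathrm g}(y)\|_{\mathcal F}\,\mathrm{d}y$ is an unweighted $\mathrm{d}y$-integral that $\|\varphi_{\mathrm g}\|_{\mathcal H}=1$ cannot control.

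\textbf{On the exit event.} Every bridge with $|y-x|\ge a$ lies in $A$, so $\mathbb{E}^{x,y}_{\mathrm{bridge}}[\mathbbm{1}_A]=1$ there. Your bound therefore involves $e^{-r'TW_\infty}\int_{|y-x|\ge a}\|\varphi_{\mathrm g}(y)\|_{\mathcal F}^{r'}\,\mathrm{d}y$. This is an $L^{r'}$-type integral with $r'$ near $1$, which the $L^2$-norm does not bound. It also carries no trace of $e^{-a^2/(2T)}$: on this event that decay is carried entirely by the Gaussian weight $p_T(x,y)$ you have just discarded.

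\textbf{On $A^c$.} The bridge expectation is supported in $\{|y-x|<a\}$, and Cauchy--Schwarz in $y$ costs a factor $\bigl(\tfrac{4\pi}{3}a^3\bigr)^{1/2}$. That factor is not in the statement, whose constants must be uniform in $a$.

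Passing to the bridge measure does not help, because the obstruction is the $\mathrm{d}y$-integral, not the bridge estimates. Relatedly, in your splitting paragraph you raise $\mathcal W(A)$ to the power $1/r'$ while $\|\varphi_{\mathrm g}(B_T)\|^{r'}$ is still inside the expectation, so the exit probability is never actually isolated.

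\textbf{The repair.} Apply Cauchy--Schwarz in the path measure before H\"older and the splitting, exactly as in \cref{lem:4}:
\[
\|\varphi_{\mathrm g}(x)\|_{\mathcal F}\le e^{TE_\alpha}\,\mathbb{E}^x\bigl[e^{-2\int_0^TV(B_t)\,\mathrm{d}t}\bigr]^{1/2}\,\mathbb{E}^x\bigl[\|\varphi_{\mathrm g}(B_T)\|_{\mathcal F}^2\bigr]^{1/2},
\qquad
\mathbb{E}^x\bigl[\|\varphi_{\mathrm g}(B_T)\|_{\mathcal F}^2\bigr]\le(2\pi T)^{-3/2}.
\]
Then proceed as follows.
\begin{enumerate}
\item H\"older gives $\mathbb{E}^x[e^{-2\int V}]\le\mathbb{E}^x[e^{2r\int U}]^{1/r}\,\mathbb{E}^x[e^{-2r'\int W}]^{1/r'}$.
\item The last factor is at most $e^{-2r'TW_\infty}\mathcal W^x(A)+e^{-2r'T\tilde W_a(x)}$.
\item The inequality $(s+t)^{1/(2r')}\le s^{1/(2r')}+t^{1/(2r')}$ then returns exactly $e^{-TW_\infty}\mathcal W^x(A)^{1/(2r')}+e^{-T\tilde W_a(x)}$.
\item To reach every $b<1/2$ you need the three-dimensional tail $\mathcal W^x(A)\le C_\epsilon e^{-(1-\epsilon)a^2/(2T)}$. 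Your $Ce^{-a^2/(2T)}$ fails with a uniform constant because of a polynomial prefactor, and the coordinatewise reflection bound $6e^{-a^2/(6T)}$ only gives $b\le1/3$.
\item The prefactor obtained is $(2\pi T)^{-3/4}$. It implies the stated $T^{-3/2}$ for $T\le1$, and for $T\ge1$ after absorbing $T^{3/4}\le C_\delta e^{\delta T}$ into $D_2$.
\end{enumerate}

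Alternatively, your premise that no sup bound is available is wrong. \Cref{lem:4} together with Khasminskii for $U$ gives $g_\infty<\infty$ under $\mathcal V_{\mathrm{upper}}$. After that one simply bounds $\|\varphi_{\mathrm g}(B_T)\|_{\mathcal F}\le g_\infty$ and runs H\"older and the splitting on $\mathbb{E}^x[e^{-\int V}]$.

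Finally, Khasminskii combined with the $L^p$-scaling of the small-time Kato norm gives a rate proportional to $\|U\|_{L^p}^{2p/(2p-3)}$, not $\|U\|_{L^p}$. This is harmless only because $D_2$ may depend on $U$.
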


\begin{corollary}\label{cupper}
Suppose that, for some real number $n>0$ and constants
$0<A\leq B$ and $c_1,c_2>0$,
\[
    A^2|x|^{2n}-c_1
    \leq V(x)
    \leq B^2|x|^{2n}+c_2,
    \qquad x\in\mathbb{R}^3.
\]
Let $0<\epsilon<\tfrac12$. 
Then there exist constants $C_\epsilon>0$ such that
\[
    \|\varphi_{\mathrm g}(x)\|_{\mathcal F}
    \leq
    C_\epsilon e^{-(b-\epsilon)\frac{A}{2^{n+4}}|x|^{n+1}},
    \qquad x\in\mathbb{R}^3.
\]
\end{corollary}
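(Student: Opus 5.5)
We apply \cref{upper} with the trivial decomposition $V=W-U$, $W=V$, $U=0$. Since $V\geq A^2|x|^{2n}-c_1\geq -c_1$, we have $W_\infty\geq -c_1>-\infty$. Moreover $W\in L^1_{\mathrm{loc}}$, because $V\leq B^2|x|^{2n}+c_2$ and $V$ is bounded below. Finally $U=0$ lies in every $L^p$. Hence $V\in\mathcal{V}_{\mathrm{upper}}$ with $\|U\|_{L^p}=0$. Fix $b\in(0,1/2)$ as in \cref{upper}, with the associated constants $D_1,D_2,D_3$. Then for all $T,a>0$,
\[
\|\varphi_{\mathrm g}(x)\|_{\mathcal F}\leq D_1T^{-3/2}e^{E_\alpha T}\Bigl(D_3e^{-ba^2/(4T)}e^{c_1T}+e^{-T\tilde W_a(x)}\Bigr).
\]

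The plan is to choose $a$ and $T$ depending on $x$ so that both exponentials decay like $e^{-\mathrm{const}\,|x|^{n+1}}$. Take $a=|x|/2$. For $|y-x|<a$ we have $|y|>|x|/2$, so
\[
\tilde W_a(x)\geq A^2|x|^{2n}/4^n-c_1 .
\]
Next take $T=\kappa|x|^{1-n}$ with $\kappa>0$ to be fixed. The first exponent becomes
\[
-\frac{b|x|^2}{16T}=-\frac{b}{16\kappa}|x|^{n+1},
\]
and the second becomes $-\kappa A^2 4^{-n}|x|^{n+1}$ plus the harmless term $c_1T$. The choice $\kappa=2^n/(4A)$ balances them. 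The first exponent is then $-\frac{bA}{2^{n+2}}|x|^{n+1}$ and the second is $-\frac{A}{2^{n+2}}|x|^{n+1}$. Since $b<1$, both are at least as strong as $-\frac{bA}{2^{n+4}}|x|^{n+1}$, so there is even room to spare compared with the claimed rate.

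It remains to control the prefactors for large $|x|$. These are $T^{-3/2}=\kappa^{-3/2}|x|^{3(n-1)/2}$, which is polynomial in $|x|$, and $e^{(E_\alpha+c_1)T}=e^{(E_\alpha+c_1)\kappa|x|^{1-n}}$. Since $n>0$ we have $1-n<n+1$, so
\[
|E_\alpha+c_1|\,\kappa|x|^{1-n}+\tfrac32(n-1)\log|x|\leq \epsilon\frac{A}{2^{n+4}}|x|^{n+1}+C'_\epsilon
\]
for all $|x|\geq1$. This absorbs the prefactors into the loss $\epsilon$ in the exponent. The only point needing care is this bookkeeping, in particular the regime $0<n<1$ where $T\to\infty$ as $|x|\to\infty$; even there the growth $e^{cT}$ is subexponential relative to $|x|^{n+1}$.

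For $|x|\leq1$ (or any fixed ball) we need a uniform bound on $\|\varphi_{\mathrm g}(x)\|_{\mathcal F}$. We get it from the same lemma with fixed $T=1$ and $a=1$, using $\tilde W_a\geq -c_1$. Enlarging $C_\epsilon$ to cover the ball then yields the stated estimate for all $x\in\mathbb{R}^3$.
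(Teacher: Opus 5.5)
Your proof is correct and follows the paper's argument: you apply \cref{upper} with $a=|x|/2$ and $T\propto|x|^{1-n}$, absorb the prefactors $T^{-3/2}e^{cT}$ into the $\epsilon$-loss using $n>0$, and treat a bounded set separately. The differences are only cosmetic. You use the trivial splitting $U=0$, $W=V$, so the constant cost enters through $W_\infty\geq -c_1$ rather than through $\|V_-\|_{L^p}$. Your choice $T=\frac{2^n}{4A}|x|^{1-n}$ is better balanced than the paper's $T=\frac{2^n}{A}|x|^{1-n}$ and actually gives the stronger rate $\frac{bA}{2^{n+2}}$. Finally, your uniform bound on $|x|\leq1$, obtained from the lemma itself with $T=a=1$, is cleaner than the paper's appeal to continuity.
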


\begin{proof}
Set $W=V_+=\max\{V,0\}$ and $U=V_-=\max\{-V,0\}$.
Then $V=W-U$ and $W,U\geq0$.
By the lower bound on $V$,
\[
    0\leq U(x)\leq c_1\mathbbm{1}_K(x),
    \qquad
    K=
    \left\{
        x\in\mathbb{R}^3\mid 
        |x|\leq
        \left(\frac{c_1}{A^2}\right)^{1/(2n)}
    \right\}.
\]
Thus $U\in L^p(\mathbb{R}^3)$ for every $p\in[1,\infty)$.
Moreover,
\[
    0\leq W(x)\leq B^2|x|^{2n}+c_2,
\]
so $W\in L^1_{\mathrm{loc}}(\mathbb{R}^3)$ and
$W_\infty\geq0$.
Hence $V=W-U\in\mathcal{V}_{\mathrm{upper}}$.
Fix $p\in(3/2,\infty)$ and $b\in(0,1/2)$.
For $r=|x|\geq1$, set
\[
    a=\frac r2,
    \qquad
    T=\frac{2^n}{A}r^{1-n}.
\]
If $|y-x|<a$, then $|y|>r/2$.
Since $W\geq V$, we obtain
\[
    \tilde W_a(x)
    =\inf_{|y-x|<a}W(y)
    \geq \frac{A^2}{2^{2n}}r^{2n}-c_1.
\]
Consequently,
\[
    T\tilde W_a(x)
    \geq \frac{A}{2^n}r^{n+1}-c_1T,
    \qquad
    \frac{a^2}{4T}
    =\frac{A}{16\cdot2^n}r^{n+1}.
\]
Applying Lemma~\ref{upper} and using $W_\infty\geq0$,
we obtain
\begin{align*}
    \|\varphi_{\mathrm g}(x)\|_{\mathcal F}
    \leq
    D_1\left(\frac{A}{2^n}\right)^{3/2}
    r^{3(n-1)/2}
    e^{(D_2\|U\|_{L^p}+|E_\alpha|)T}
        \left(
        D_3e^{-\frac{b}{16}\frac{A}{2^n}r^{n+1}}
        +e^{c_1T}e^{-\frac{A}{2^n}r^{n+1}}
    \right).
\end{align*}
Set $M=D_2\|U\|_{L^p}+|E_\alpha|+c_1$.
Since $b/16<1$, it follows that
\[
    \|\varphi_{\mathrm g}(x)\|_{\mathcal F}
    \leq
    D_1(D_3+1)\left(\frac{A}{2^n}\right)^{3/2}
    r^{3(n-1)/2}
    \exp\left\{
        \frac{2^nM}{A}r^{1-n}
        -\frac{b}{16}\frac{A}{2^n}r^{n+1}
    \right\}.
\]
Choose $\kappa\in(0,b/16)$.
Since $n>0$,
\[
    r^{1-n}=o(r^{n+1}),
    \qquad
    \log r=o(r^{n+1})
    \quad\text{as }r\to\infty.
\]
Therefore,
\[
    \sup_{r\geq1}
    r^{3(n-1)/2}
    \exp\left\{
        \frac{2^nM}{A}r^{1-n}
        -\frac{\epsilon}{16}
         \frac{A}{2^n}r^{n+1}
    \right\}
    <\infty.
\]
Thus there exists a constant $C_\epsilon>0$, independent of $x$,
such that
\[
    \|\varphi_{\mathrm g}(x)\|_{\mathcal F}
    \leq C_\epsilon e^{-(b-\epsilon)\frac{A}{2^{n+4}}|x|^{n+1}},
    \qquad |x|\geq1.
\]
Finally, by continuity,
\[
    \sup_{|x|\leq1}
    \|\varphi_{\mathrm g}(x)\|_{\mathcal F}
    e^{(b-\epsilon)\frac{A}{2^{n+4}}|x|^{n+1}}
    <\infty.
\]
After increasing $C_\epsilon$, the estimate therefore holds
for all $x\in\mathbb{R}^3$.
This completes the proof.
\end{proof}

There is no doubt that the evaluation from above is extremely important, but this evaluation is weaker than ideal spatial decay.
In Agmon theory, the spatial decay is expected as 
$e^{- (1 \pm \varepsilon) A\frac{\abs{x}^{n + 1}}{n + 1}} \lbrace {}^{+ \varepsilon \; \mathrm{lower}}_{- \varepsilon \; \mathrm{upper}}$.
What this implies is that losses appear to be occurring in the evaluation from above.
It is merely observing large deviations and does not follow the typical optimal path seen in Agmon theory.

\section{Lower bounds}\label{sec:3} %
\subsection{Phase rotations}
Since the norm of a vector dominates the norm of its vacuum component, it holds that
\begin{align*}
    \norm{{\varphi_\mathrm{g}}(x) }_\mathcal{F}
    \geq \norm*{\varphi_\mathrm{g}^{(0)}(x)}_{\mathbb{C}}
    = \abs{{(\mathbbm{1}, {\varphi_\mathrm{g}}(x))}_\mathcal{F}} .
\end{align*}
The subsequent lemmas will be used to derive a lower bound on
$\abs{{(\mathbbm{1}, {\varphi_\mathrm{g}}(x))}_\mathcal{F}}$. 
\begin{lemma}
It follows that 
    \begin{align}\label{eq:lem1} %
        {(\mathbbm{1}, {\varphi_\mathrm{g}}(x))}_\mathcal{F}
        = \mathbb{E}^{x}\left[
            e^{- \int_0^T (V(B_t) - E_\alpha) \, \mathrm{d}t}
            {(\mathbbm{1}, e^{- i \alpha A_\mathrm{E}(K_T)} \mathrm{J}_T {\varphi_\mathrm{g}}(B_T))}_{\mathcal{E}}
        \right]
        \qquad {\mathrm{a.e. }} x \in \mathbb{R}^3 .
    \end{align}
\end{lemma}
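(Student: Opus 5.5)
The identity follows from the eigenvalue equation $\varphi_\mathrm{g} = e^{-T(H_\alpha - E_\alpha)}\varphi_\mathrm{g}$, the pointwise Feynman--Kac formula of the preceding lemma, and two structural facts: $\mathrm{J}_0$ is an isometry with $\mathrm{J}_0\mathbbm{1} = \mathbbm{1}$, and $\mathrm{J}_0\mathrm{J}_0^*$ is a projection in $\mathcal{E}$.

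First, apply the pointwise form of \cref{eq:FKF} with $\Psi = \varphi_\mathrm{g}$. Since $H_\alpha\varphi_\mathrm{g} = E_\alpha\varphi_\mathrm{g}$, we have $e^{-T(H_\alpha - E_\alpha)}\varphi_\mathrm{g} = \varphi_\mathrm{g}$. The factor $e^{TE_\alpha}$ is a deterministic constant and can be written as $e^{\int_0^T E_\alpha\,\mathrm{d}t}$ inside the exponential. This gives, for a.e.\ $x$,
\begin{align*}
    \varphi_\mathrm{g}(x) = \mathbb{E}^x\left[ e^{-\int_0^T (V(B_t)-E_\alpha)\,\mathrm{d}t}\, \mathrm{J}_0^* e^{-i\alpha A_\mathrm{E}(K_T)} \mathrm{J}_T \varphi_\mathrm{g}(B_T)\right],
\end{align*}
which is an $\mathcal{F}$-valued (Bochner) expectation.

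Next, take the inner product with the vacuum $\mathbbm{1}\in\mathcal{F}$. To move the bounded linear functional $(\mathbbm{1},\cdot)_\mathcal{F}$ inside the expectation, the integrand must be Bochner integrable. Its $\mathcal{F}$-norm is at most $e^{-\int_0^T (V(B_t)-E_\alpha)\,\mathrm{d}t}\,\|\varphi_\mathrm{g}(B_T)\|_\mathcal{F}$, because $\mathrm{J}_0^*$ and $\mathrm{J}_T$ are contractions and $e^{-i\alpha A_\mathrm{E}(K_T)}$ is unitary. This bound is integrable for a.e.\ $x$, since $V_-$ is Kato class and $\varphi_\mathrm{g}\in\mathcal{H}$. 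I expect this to be the main technical point. The standard route is the bound $\mathbb{E}^x[e^{2\int_0^T V_-(B_t)\,\mathrm{d}t}] \le C e^{cT}$ uniformly in $x$, together with Schwarz's inequality. That leaves $\mathbb{E}^x[\|\varphi_\mathrm{g}(B_T)\|^2]$, which is finite because it equals $(e^{T\Delta/2}\|\varphi_\mathrm{g}(\cdot)\|^2)(x)$ and the heat kernel is bounded. This is the same justification already used in \cite{HH10} for the pointwise formula, so it can be cited.

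Finally, use $(\mathbbm{1}, \mathrm{J}_0^*\Phi)_\mathcal{F} = (\mathrm{J}_0\mathbbm{1}, \Phi)_\mathcal{E} = (\mathbbm{1}, \Phi)_\mathcal{E}$ for $\Phi\in\mathcal{E}$. This holds because $\mathrm{J}_0 = \Gamma_{\mathrm{int}}(\mathrm{j}_0)$ maps the constant function $\mathbbm{1}\in L^2(Q)$ to the constant function $\mathbbm{1}\in L^2(Q_\mathrm{E})$. Applying this with $\Phi = e^{-i\alpha A_\mathrm{E}(K_T)}\mathrm{J}_T\varphi_\mathrm{g}(B_T)$ converts the $\mathcal{F}$-inner product into the $\mathcal{E}$-inner product appearing in \cref{eq:lem1}, and this completes the proof. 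Equivalently, one can start from \cref{eq:FKF} with $\Phi = f\otimes\mathbbm{1}$ for arbitrary $f\in L^2(\mathbb{R}^3)$ and conclude the a.e.\ identity because $f$ is arbitrary. This version avoids Bochner integrals altogether, needing only Fubini, which rests on the same integrability bound.
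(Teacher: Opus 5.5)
Your proposal is correct and essentially the paper's proof. The paper uses exactly your closing alternative: it applies \cref{eq:FKF} to $(g\otimes\mathbbm{1}, e^{-T(H_\alpha-E_\alpha)}\varphi_\mathrm{g})_\mathcal{H}$, uses $\mathrm{J}_0\mathbbm{1}=\mathbbm{1}$ to pull out $\bar g(x)$, and compares with $\int\bar g(x)(\mathbbm{1},\varphi_\mathrm{g}(x))_\mathcal{F}\,\mathrm{d}x$ for arbitrary $g$. Your main route through the pointwise formula and a Bochner-integrability check is an equally valid arrangement of the same ingredients, and the projection property of $\mathrm{J}_0\mathrm{J}_0^*$ that you list is never actually needed.
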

\begin{proof} 
    By Feynman-Kac formula \cref{eq:FKF}, we have
    \begin{align}\label{eq:lem1.1} %
        {(g \otimes \mathbbm{1}, \varphi_\mathrm{g})}_{\mathcal{H} }
        &= 
        {(g \otimes \mathbbm{1}, e^{- T (H_\alpha  - E_\alpha )} \varphi_\mathrm{g})}_{\mathcal{H} } \notag \\
        &= \int_{\mathbb{R}^3}
            \mathbb{E}^{x}\left[
                e^{- \int_0^T (V(B_t) - E_\alpha) \,\mathrm{d}t}
                {(\mathrm{J}_0 g(x) \mathbbm{1}, e^{- i \alpha A_\mathrm{E}(K_T)} \mathrm{J}_T {\varphi_\mathrm{g}}(B_T))}_{\mathcal{E}}
            \right]
        \, \mathrm{d}x \notag \\
        &= \int_{\mathbb{R}^3}
            {\bar{g}}(x)
            \mathbb{E}^{x}\left[
                e^{- \int_0^T (V(B_t) - E_\alpha) \, \mathrm{d}t}
                {(\mathbbm{1}, e^{- i \alpha A_\mathrm{E}(K_T)} \mathrm{J}_T {\varphi_\mathrm{g}}(B_T))}_{\mathcal{E}}
            \right]
        \, \mathrm{d}x .
    \end{align}
    On the other hand,
    \begin{align}\label{eq:lem1.2} %
        {(g \otimes \mathbbm{1}, \varphi_\mathrm{g})}_{\mathcal{H} }
        &= \int_{\mathbb{R}^3}
            {\bar{g}}(x)
            {(\mathbbm{1}, {\varphi_\mathrm{g}}(x))}_\mathcal{F}
        \, \mathrm{d}x .
    \end{align}
    Comparing \cref{eq:lem1.1} and \cref{eq:lem1.2}, we obtain \cref{eq:lem1}.
\end{proof}

In the integrand of \cref{eq:lem1}, there appears the oscillatory factor $e^{-i\alpha A_\mathrm{E}(K_T)}$.
Since $A_\mathrm{E}(K_T)\in\mathbb{R}$, the quantity $e^{-i\alpha A_\mathrm{E}(K_T)}$ is complex-valued.
This makes it difficult to derive lower bounds from \cref{eq:lem1} directly.

\begin{lemma}\label{lem:2} %
    The ground state $\varphi_\mathrm{g}$ of $H_\alpha$ satisfies that 
    $x \mapsto {(\mathbbm{1}, {\varphi_\mathrm{g}}(x))}_\mathcal{F}$ has the continuous version and the continuous version is strictly positive:
    \begin{align*}
        {(\mathbbm{1}, {\varphi_\mathrm{g}}(x))}_\mathcal{F} > 0,
        \qquad x \in \mathbb{R}^3. 
    \end{align*}
\end{lemma}
\begin{proof}
    The continuity of ${(\mathbbm{1}, {\varphi_\mathrm{g}}(\cdot))}_\mathcal{F}$ is proven in \cref{sec:a2} in Appendix and the positivity by~\cite[\text{Corollary 3.12}]{hir00b}.
\end{proof}

Let us define the continuous function $m(\cdot)$ by
\begin{align*}
    m(y)
    = {(\mathbbm{1}, \mathrm{J}_T \Theta^{-1} {\varphi_\mathrm{g}}(y))}_{\mathcal{E}}
    = {(\mathbbm{1}, \Theta^{-1} {\varphi_\mathrm{g}}(y))}_\mathcal{F}
    = {(\mathbbm{1}, {\varphi_\mathrm{g}}(y))}_\mathcal{F} .
\end{align*}
The following lemma provides the key observation that overcomes this difficulty.
 \begin{lemma}[Key lemma I]\label{lem:3} %
    We have
    \begin{align}\label{eq:lem3} %
        &{(\mathbbm{1}, e^{-i\alpha A_\mathrm{E}(K_T)} \mathrm{J}_T {\varphi_\mathrm{g}}(B_T))}_{\mathcal{E}} \notag \\
        &\geq m(B_T)
        \exp(- \alpha^2 \mathbb{E}_{\mu_{\mathrm{E}}}[{A_\mathrm{E}}(K_T) {A_\mathrm{E}}(K_T)])
        \exp(- \alpha \frac{{(A_\mathrm{E}(K_T)\mathbbm{1}, \mathrm{J}_T \Theta^{-1} {\varphi_\mathrm{g}}(B_T))}_{\mathcal{E}}}{m(B_T)})
    \end{align}
\end{lemma}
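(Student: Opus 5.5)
We transport the inner product by the Euclidean phase rotation and then apply Jensen's inequality to a suitable probability density. Since $\Theta_{\mathrm{E}}$ is unitary on $\mathcal{E}$, $\Theta_{\mathrm{E}}\mathbbm{1}=\mathbbm{1}$, and $\mathrm{J}_T\Theta=\Theta_{\mathrm{E}}\mathrm{J}_T$, we may write
\[
(\mathbbm{1}, e^{-i\alpha A_\mathrm{E}(K_T)}\mathrm{J}_T\varphi_\mathrm{g}(B_T))_{\mathcal{E}}
=(\mathbbm{1}, \Theta_{\mathrm{E}}^{-1}e^{-i\alpha A_\mathrm{E}(K_T)}\Theta_{\mathrm{E}}\,\mathrm{J}_T\Theta^{-1}\varphi_\mathrm{g}(B_T))_{\mathcal{E}}
\]
for fixed Brownian path. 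Here $\Theta_{\mathrm{E}}^{-1}A_\mathrm{E}(F)\Theta_{\mathrm{E}}=-\Pi_\mathrm{E}(F)$, so the conjugated operator is $e^{i\alpha\Pi_\mathrm{E}(K_T)}$. Because $K_T$ is real, $e^{i\alpha\Pi_\mathrm{E}(K_T)}$ is the translation-type operator on Gaussian space. Concretely, $e^{i\alpha\Pi_\mathrm{E}(F)}=e^{\alpha(a^\dagger-a)(F)/\sqrt2}$ is a Weyl operator with real argument, i.e. a Cameron–Martin shift of the Gaussian field times the Radon–Nikodym density. Its adjoint acting on $\mathbbm{1}$ gives the coherent vector
\[
e^{-i\alpha\Pi_\mathrm{E}(K_T)}\mathbbm{1}=\exp\!\Bigl(-\alpha A_\mathrm{E}(K_T)-\tfrac{\alpha^2}{2}\mathbb{E}_{\mu_\mathrm{E}}[A_\mathrm{E}(K_T)^2]\Bigr)
\]
(up to the normalization of the covariance, which I will fix by a Wick computation, possibly yielding the factor $\alpha^2\mathbb{E}[A_\mathrm{E}(K_T)^2]$ stated). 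Hence the left-hand side equals
\[
e^{-c\alpha^2\mathbb{E}_{\mu_\mathrm{E}}[A_\mathrm{E}(K_T)^2]}\int_{Q_\mathrm{E}} e^{-\alpha A_\mathrm{E}(K_T)}\,\Psi\,d\mu_\mathrm{E},\qquad \Psi:=\mathrm{J}_T\Theta^{-1}\varphi_\mathrm{g}(B_T).
\]

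Next we need positivity of $\Psi$. By \cite{hir00b}, $\Theta^{-1}\varphi_\mathrm{g}$ is positive (a.e. in $(y,q)$), and $\mathrm{J}_T$ is positivity preserving, being the second quantization of a real isometry. So $\Psi\ge0$, with $\int\Psi\,d\mu_\mathrm{E}=m(B_T)>0$ by Lemma~\ref{lem:2}. The measure $d\nu=\Psi\,d\mu_\mathrm{E}/m(B_T)$ is therefore a probability measure. By Jensen's inequality for the convex function $e^{-s}$,
\[
\int e^{-\alpha A_\mathrm{E}(K_T)}\Psi\,d\mu_\mathrm{E}\ \ge\ m(B_T)\exp\!\Bigl(-\alpha\frac{(A_\mathrm{E}(K_T)\mathbbm{1},\Psi)_\mathcal{E}}{m(B_T)}\Bigr),
\]
which is precisely the last factor in \cref{eq:lem3}. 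Combining this with the Gaussian prefactor, and using $c\le1$ to bound it below by $\exp(-\alpha^2\mathbb{E}[A_\mathrm{E}(K_T)^2])$, yields the claim.

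\textbf{Main obstacle.} The hardest step is making the Weyl-operator identity rigorous when $K_T$ is a random stochastic integral. We will work pathwise: for fixed $B$, $K_T(B)\in\oplus^3L^2(\mathbb{R}^4)$ is a deterministic real vector. The identity $e^{i\alpha\Pi_\mathrm{E}(F)}\mathbbm{1}$-adjoint formula can then be verified on the exponential vectors $e^{A_\mathrm{E}(G)}$ by Gaussian computation and extended by density. Integrability of $e^{-\alpha A_\mathrm{E}(K_T)}\Psi$ follows from the Cauchy–Schwarz inequality, since Gaussian exponentials are in $L^2$. A further technical point is that positivity of $\Psi$ holds only a.e.; if $m(B_T)=0$ on a null set, we use the continuous version from Lemma~\ref{lem:2}.
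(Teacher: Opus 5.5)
Your argument follows the paper's proof step for step. You conjugate by $\Theta_{\mathrm{E}}$ (using $\Theta_{\mathrm{E}}\mathbbm{1}=\mathbbm{1}$ and $\mathrm{J}_T\Theta=\Theta_{\mathrm{E}}\mathrm{J}_T$) to turn $e^{-i\alpha A_\mathrm{E}(K_T)}$ into $e^{i\alpha\Pi_\mathrm{E}(K_T)}$. You then move it to the left as the Gaussian exponential $e^{-i\alpha\Pi_\mathrm{E}(K_T)}\mathbbm{1}$ and apply Jensen for the probability measure $\Psi\,d\mu_\mathrm{E}/m(B_T)$. Your explicit justification that $\Psi=\mathrm{J}_T\Theta^{-1}\varphi_\mathrm{g}(B_T)\ge 0$ is a useful addition, since the paper uses it tacitly in the Jensen step. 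You also do not need the hedge $c\le 1$; unitarity forces $c=1$. With $\sigma^2=\mathbb{E}_{\mu_\mathrm{E}}[A_\mathrm{E}(F)^2]$, the squared norm of $e^{\pm\alpha A_\mathrm{E}(F)-c\alpha^2\sigma^2}$ is $e^{2(1-c)\alpha^2\sigma^2}$, and this must equal $1$.

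There is, however, a sign error in the coherent-vector step, and the paper's proof makes the same slip. With $\Pi_\mathrm{E}=\frac{i}{\sqrt2}(a^\dagger-a)$ one has $i\alpha\Pi_\mathrm{E}(F)=-\frac{\alpha}{\sqrt2}(a^\dagger-a)(F)$, not $+\frac{\alpha}{\sqrt2}(a^\dagger-a)(F)$. Consequently $e^{-i\alpha\Pi_\mathrm{E}(F)}\mathbbm{1}=\exp(+\alpha A_\mathrm{E}(F)-\alpha^2\sigma^2)$. A quick check: $\Pi_\mathrm{E}(F)\mathbbm{1}=i[N_\mathrm{E},A_\mathrm{E}(F)]\mathbbm{1}=iA_\mathrm{E}(F)$, so $e^{-i\alpha\Pi_\mathrm{E}(F)}\mathbbm{1}=\mathbbm{1}+\alpha A_\mathrm{E}(F)+O(\alpha^2)$.

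Write $\langle A\rangle_\nu=(A_\mathrm{E}(K_T)\mathbbm{1},\Psi)_\mathcal{E}/m(B_T)$. Jensen therefore yields the inequality with last factor $\exp(+\alpha\langle A\rangle_\nu)$. The stated version with $-\alpha$ does not follow, since Jensen only gives $\int e^{\alpha A_\mathrm{E}(K_T)}\,d\nu\ge e^{\alpha\langle A\rangle_\nu}$, which is weaker than $e^{-\alpha\langle A\rangle_\nu}$ whenever $\alpha\langle A\rangle_\nu<0$.

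This is harmless for what follows. The next lemma only uses $\pm\alpha(A_\mathrm{E}(K_T)\mathbbm{1},\Psi)_\mathcal{E}\le|\alpha|\,\|A_\mathrm{E}(K_T)\mathbbm{1}\|_\mathcal{E}\,\|\varphi_\mathrm{g}(B_T)\|_\mathcal{F}$. So you should state and prove the lemma with $+\alpha$ in the last exponent, or in the weaker form with $-|\alpha|\,|\langle A\rangle_\nu|$.
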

\begin{proof}
    From the properties of the phase rotation $\Theta$, it follows that
    \begin{align*}
        {(\mathbbm{1}, e^{- i \alpha  A_\mathrm{E}(K_T)} \mathrm{J}_T {\varphi_\mathrm{g}}(B_T))}_{\mathcal{E}}
        &= {({\Theta_\mathrm{E}}^{-1} \mathbbm{1}, {\Theta_\mathrm{E}}^{-1} e^{- i \alpha {A_\mathrm{E}}(K_T)} \mathrm{J}_T {\varphi_\mathrm{g}}(B_T))}_{\mathcal{E}}  \\
        &= {(\mathbbm{1}, e^{+i\alpha {\Pi}_{\mathrm{E}}(K_T)} \Theta_\mathrm{E}^{-1} \mathrm{J}_T {\varphi_\mathrm{g}}(B_T))}_{\mathcal{E}}  \\
        &= {(e^{-i \alpha {\Pi}_{\mathrm{E}}(K_T)} \mathbbm{1}, \Theta_\mathrm{E}^{-1} \mathrm{J}_T {\varphi_\mathrm{g}}(B_T))}_{\mathcal{E}}  \\
        &= {e^{- \alpha^2\mathbb{E}_{\mu_{\mathrm{E}}}[A_\mathrm{E}(K_T) A_\mathrm{E}(K_T)]} (e^{-\alpha A_\mathrm{E}(K_T)}\mathbbm{1},  \mathrm{J}_T \Theta^{-1} {\varphi_\mathrm{g}}(B_T))}_{\mathcal{E}} . 
    \end{align*}
    Furthermore,
    \begin{align*}
        {(e^{- \alpha {A_\mathrm{E}}(K_T)}\mathbbm{1}, \mathrm{J}_T \Theta^{-1} {\varphi_\mathrm{g}}(B_T))}_{\mathcal{E}}  
        = m(B_T) \frac{{(e^{- \alpha {A_\mathrm{E}}(K_T)}\mathbbm{1}, \mathrm{J}_T \Theta^{-1} {\varphi_\mathrm{g}}(B_T))}_{\mathcal{E}}}{m(B_T)}
    \end{align*}
    holds, and by Jensen's inequality,
    \begin{align*}
        \frac{{(e^{- \alpha A_\mathrm{E}(K_T)}\mathbbm{1}, \mathrm{J}_T \Theta^{-1} {\varphi_\mathrm{g}}(B_T))}_{\mathcal{E}}}{m(B_T)}
        &\geq \exp({- \alpha \frac{{(A_\mathrm{E}(K_T)\mathbbm{1}, \mathrm{J}_T \Theta^{-1} {\varphi_\mathrm{g}}(B_T))}_{\mathcal{E}}}{m(B_T)}})
    \end{align*}
    holds true.
\end{proof}

To estimate the right-hand side of \cref{eq:lem3} from below, it suffices to derive an upper bound on the exponent ${({A_\mathrm{E}}(K_T), \mathrm{J}_T \Theta^{-1} {\varphi_\mathrm{g}}(B_T))}_{\mathcal{E}}$ and a lower bound of $m(B_T)$.
\begin{lemma}\label{lem:4} %
    It follows that
    \begin{align*}
        \sup_{x \in \mathbb{R}^3} \norm{ {\varphi_\mathrm{g}}(x) }_\mathcal{F}
        \leq {(2 \pi T)}^{- 3 / 4}
        \sup_{x \in \mathbb{R}^3}
        {\mathbb{E}\left[e^{- 2 \int_0^T (V(B_s + x) - E_\alpha) \,\mathrm{d} s}\right]}^{1 / 2} \norm{\varphi_\mathrm{g}}_\mathcal{H} .
    \end{align*}
\end{lemma}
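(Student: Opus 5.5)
We use the eigenvalue identity $\varphi_\mathrm{g} = e^{-T(H_\alpha - E_\alpha)}\varphi_\mathrm{g}$ together with the pointwise Feynman--Kac representation from the lemma of \cite{HH10}. For a.e.\ $x$,
\begin{align*}
    \varphi_\mathrm{g}(x)
    = \mathbb{E}^x\left[
        e^{-\int_0^T (V(B_t)-E_\alpha)\,\mathrm{d}t}
        \mathrm{J}_0^* e^{-i\alpha A_\mathrm{E}(K_T)} \mathrm{J}_T \varphi_\mathrm{g}(B_T)
    \right].
\end{align*}
We then take the $\mathcal{F}$-norm and bound it by the expectation of the norm of the integrand. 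The operator $\mathrm{J}_0^*$ is a contraction, since $\mathrm{J}_0$ is an isometry. The operator $e^{-i\alpha A_\mathrm{E}(K_T)}$ is, for each fixed Brownian path, multiplication by a unimodular function on $L^2(Q_\mathrm{E})$, hence unitary. The operator $\mathrm{J}_T$ is an isometry. Therefore
\begin{align*}
    \norm{\varphi_\mathrm{g}(x)}_\mathcal{F}
    \leq \mathbb{E}^x\left[
        e^{-\int_0^T (V(B_t)-E_\alpha)\,\mathrm{d}t}
        \norm{\varphi_\mathrm{g}(B_T)}_\mathcal{F}
    \right].
\end{align*}
This removes the phase factor entirely.

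Next we apply the Cauchy--Schwarz inequality with respect to $\mathbb{E}^x$:
\begin{align*}
    \norm{\varphi_\mathrm{g}(x)}_\mathcal{F}
    \leq \mathbb{E}^x\left[e^{-2\int_0^T (V(B_t)-E_\alpha)\,\mathrm{d}t}\right]^{1/2}
    \mathbb{E}^x\left[\norm{\varphi_\mathrm{g}(B_T)}_\mathcal{F}^2\right]^{1/2}.
\end{align*}
Shifting the path, the first factor equals $\mathbb{E}[e^{-2\int_0^T (V(B_s+x)-E_\alpha)\,\mathrm{d}s}]^{1/2}$, which is at most its supremum over $x$. The second factor uses the Gaussian heat kernel:
\begin{align*}
    \mathbb{E}^x\left[\norm{\varphi_\mathrm{g}(B_T)}_\mathcal{F}^2\right]
    = \int_{\mathbb{R}^3} (2\pi T)^{-3/2} e^{-|x-y|^2/(2T)} \norm{\varphi_\mathrm{g}(y)}_\mathcal{F}^2 \,\mathrm{d}y
    \leq (2\pi T)^{-3/2} \norm{\varphi_\mathrm{g}}_\mathcal{H}^2 .
\end{align*}
Taking the square root gives the factor $(2\pi T)^{-3/4}$. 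Combining the two bounds and taking the supremum over $x$ yields the stated estimate, at least as an essential supremum.

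The only delicate point is the passage from an a.e.\ statement to a genuine supremum. The Feynman--Kac identity holds only for a.e.\ $x$, so one should either read the supremum as an essential supremum or use the version of $\varphi_\mathrm{g}$ defined by the right-hand side of the Feynman--Kac formula. That version is the one whose continuity is discussed in \cref{sec:a2}, and for it the bound holds for every $x$.

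A second point to check is that $\norm{e^{-i\alpha A_\mathrm{E}(K_T)}}\leq1$ pathwise. This holds because $K_T$ is a well-defined $L^2$-valued stochastic integral, so that $A_\mathrm{E}(K_T)$ is a real-valued random variable for a.e.\ path. Finiteness of the potential term for Kato-decomposable $V$ is standard and is not needed for the inequality itself.
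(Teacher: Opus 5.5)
Your proposal is correct and follows the paper's proof step for step. You use the Feynman--Kac representation together with $\norm{\mathrm{J}_0^* e^{-i\alpha A_\mathrm{E}(K_T)}\mathrm{J}_T}\leq 1$ to drop the phase, then Cauchy--Schwarz, then the Gaussian kernel bound $(2\pi T)^{-3/2}$ on $\mathbb{E}^x[\norm{\varphi_\mathrm{g}(B_T)}_\mathcal{F}^2]$.

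Your remark on the a.e.\ issue is extra care that the paper omits, but the reference is slightly off: \cref{sec:a2} proves continuity only of the scalar function $m(x)=(\mathbbm{1},\varphi_\mathrm{g}(x))_\mathcal{F}$. The everywhere statement is better justified by defining $\varphi_\mathrm{g}(x)$ through the Feynman--Kac right-hand side, for which your chain of inequalities holds at every $x$.
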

\begin{proof}
    By the Feynman-Kac formula we have
    \begin{align*}
        \norm{{\varphi_\mathrm{g}}(x) }_\mathcal{F}
        = \norm*{e^{- T (H_\alpha - E_\alpha)}{\varphi_\mathrm{g}}(x)}_\mathcal{F}
        \leq \mathbb{E}\left[
            e^{- \int_0^T (V(B_s + x) - E_\alpha) \, \mathrm{d}s}
            \norm{ {\varphi_\mathrm{g}}(B_T + x)}_\mathcal{F}
        \right] .
    \end{align*}
    By the Schwarz inequality we have 
    \begin{align*}
        \norm{ {\varphi_\mathrm{g}}(x) }_\mathcal{F}
        &\leq {\mathbb{E}\left[e^{- 2 \int_0^T (V(B_s + x) - E_\alpha) \, \mathrm{d}s}\right]}^{1 / 2}
        {\mathbb{E}\left[\norm{ {\varphi_\mathrm{g}}(B_T+x) }_\mathcal{F}^2\right]}^{1 / 2}\\
        &\leq
        {\mathbb{E}\left[e^{- 2 \int_0^T (V(B_s + x) - E_\alpha) \, \mathrm{d}s}\right]}^{1/2}
        {\left(
            \int_{\mathbb{R}^3}
                {(2 \pi T)}^{- 3 / 2}
                \norm{ {\varphi_\mathrm{g}}(y + x) }_\mathcal{F}^2
                e^{- \abs{y}^2 / (2 T)}
            \, \mathrm{d}y
        \right)}^{1 / 2}.
    \end{align*}
    Then the lemma follows. 
\end{proof}

Since $V$ is Kato-decomposable, 
we have $\sup_{x \in \mathbb{R}^3} \mathbb{E}[e^{- 2 \int_0^T (V(B_s + x) - E_\alpha) \, \mathrm{d}s}] < + \infty$.   
Let 
\begin{align*}
    g_{\infty} = \sup_{x \in \mathbb{R}^3} \norm{ {\varphi_\mathrm{g}}(x) }_\mathcal{F} .
\end{align*}
\begin{lemma}
    It follows that 
    \begin{align*}
        {(\mathbbm{1}, {\varphi_\mathrm{g}}(x))}_\mathcal{F}
        \geq \mathbb{E}\left[
            e^{- \int_0^T (V(B_t + x) - E_\alpha) \, \mathrm{d}t}
            e^{- \tilde{X}}
            m(B_T + x)
        \right], 
    \end{align*}
    where 
    \begin{align*}
        \tilde{X}
        = \frac{\alpha^2}{2} \norm{ K_T }_{\oplus^3 L^2}^2 
        + \frac{\abs{\alpha} g_{\infty}}{m(B_T)}{\norm{ K_T }_{\oplus^3 L^2}} .
    \end{align*}
\end{lemma}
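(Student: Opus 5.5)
The plan is to combine the Feynman--Kac identity \cref{eq:lem1} with Key Lemma I (\cref{lem:3}), and then bound the two exponents appearing there from above.

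\textbf{Step 1: reduce to a pathwise lower bound.} Start from \cref{eq:lem1} and shift the Brownian motion so that it starts at $0$, writing $B_t+x$ for the path. Since $e^{-\int_0^T (V-E_\alpha)}>0$, it suffices to bound the inner product
\[
{(\mathbbm{1}, e^{-i\alpha A_\mathrm{E}(K_T)} \mathrm{J}_T {\varphi_\mathrm{g}}(B_T))}_{\mathcal{E}}
\]
from below pathwise. By \cref{lem:3}, this quantity is at least
\[
m(B_T)\, e^{-\alpha^2 \mathbb{E}_{\mu_\mathrm{E}}[A_\mathrm{E}(K_T)^2]}\, e^{-\alpha R/m(B_T)},
\qquad R={(A_\mathrm{E}(K_T)\mathbbm{1}, \mathrm{J}_T\Theta^{-1}\varphi_\mathrm{g}(B_T))}_{\mathcal{E}} .
\]
Here $m(B_T)>0$ by \cref{lem:2}, so the division is legitimate.

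\textbf{Step 2: the Gaussian factor.} Use the covariance of $A_\mathrm{E}$ on $Q_\mathrm{E}$. It gives $\mathbb{E}_{\mu_\mathrm{E}}[A_\mathrm{E}(K_T)^2]=\frac12\|K_T\|^2$ up to the transverse projection $d_{\mu\nu}$, which is a contraction. Hence the first exponential is at least $e^{-\frac{\alpha^2}{2}\|K_T\|^2}$.

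\textbf{Step 3: the factor involving $R$.} Bound $-\alpha R\le |\alpha|\,|R|$ and apply Cauchy--Schwarz in $\mathcal{E}$:
\[
|R|\le \|A_\mathrm{E}(K_T)\mathbbm{1}\|_{\mathcal{E}}\;\|\mathrm{J}_T\Theta^{-1}\varphi_\mathrm{g}(B_T)\|_{\mathcal{E}} .
\]
For the second factor, $\mathrm{J}_T$ is an isometry and $\Theta$ is unitary, so it equals $\|\varphi_\mathrm{g}(B_T)\|_{\mathcal F}\le g_\infty$. The quantity $g_\infty$ is finite by \cref{lem:4} and the Kato-decomposability remark following it. For the first factor, $\|A_\mathrm{E}(K_T)\mathbbm{1}\|_{\mathcal{E}}=\mathbb{E}_{\mu_\mathrm{E}}[A_\mathrm{E}(K_T)^2]^{1/2}\le \|K_T\|/\sqrt2\le\|K_T\|$. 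Therefore
\[
e^{-\alpha R/m(B_T)}\ge \exp\Bigl(-\frac{|\alpha| g_\infty}{m(B_T)}\|K_T\|\Bigr).
\]

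\textbf{Step 4: assemble and identify the main difficulty.} Combining Steps 1--3 gives the pathwise bound by $m(B_T)e^{-\tilde X}$. Integrating against the positive weight $e^{-\int_0^T (V-E_\alpha)}$ then yields the claim. The main technical point is Step 1: we need \cref{eq:lem1} for every $x$, not only almost every $x$. The left side is continuous by \cref{lem:2}, and the right side should be continuous in $x$ by the same argument as in \cref{sec:a2}. Failing that, I would state the inequality a.e.\ and use continuity of the left side.

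A notational caveat concerns $m(B_T)$ in $\tilde X$. It is evaluated along the shifted path, i.e.\ as $m(B_T+x)$, with $K_T$ likewise built from the shifted path; I would read the statement with that convention.
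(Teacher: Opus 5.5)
Your proposal is correct and follows the paper's proof essentially line by line. You insert Key Lemma I (\cref{lem:3}) into \cref{eq:lem1}, bound $\alpha^2\mathbb{E}_{\mu_\mathrm{E}}[A_\mathrm{E}(K_T)^2]\le\frac{\alpha^2}{2}\|K_T\|^2$, and control the cross term by Cauchy--Schwarz, using $\|A_\mathrm{E}(K_T)\mathbbm{1}\|_{\mathcal E}=\|K_T\|/\sqrt2$ and $g_\infty$ from \cref{lem:4}. Your two side remarks are also accurate points the paper does not make explicit: $m(B_T)$ in $\tilde X$ must be read as $m(B_T+x)$ along the shifted path, and \cref{eq:lem1} upgrades from almost every $x$ to every $x$ via the continuity argument of \cref{sec:a2}.
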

\begin{proof}
    We have 
    \begin{align*}
        &\alpha^2 \mathbb{E}_{{\mu}_{\mathrm{E}}}[{A_\mathrm{E}}(K_T) {A_\mathrm{E}}(K_T)]
        \leq \frac{\alpha^2}{2} \norm{ {K}_T }_{\oplus^3 L^2}^2,\\
        &\alpha {({A_\mathrm{E}}(K_T)\mathbbm{1}, \mathrm{J}_T \Theta^{-1} {\varphi_\mathrm{g}}(B_T))}_{\mathcal{E}}
        \leq \abs{\alpha} \norm{ {A_\mathrm{E}}(K_T) }_{\mathcal{E}} \norm{ {\varphi_\mathrm{g}}(B_T) }_\mathcal{F} .
    \end{align*}
    Here$\norm{ {A_\mathrm{E}}(K_T)\mathbbm{1} }_{\mathcal{E}}^2 = \frac{1}{2} \norm{ K_T }_{\oplus^3 L^2}^2$ and, by \cref{eq:lem3} and \cref{lem:4} we obtain the lemma.
\end{proof}

\subsection{Estimates of exponential moments of double stochastic integrals}
Let
\begin{align*}
    S_T = \{\abs{B_t} \leq 1, 0 \leq t \leq T\} .
\end{align*}
Then
\begin{align*}
    {(\mathbbm{1}, {\varphi_\mathrm{g}}(x))}_\mathcal{F} 
    &\geq \mathbb{E}\left[
        e^{- \int_0^T (V(B_t + x) - E_\alpha) \, \mathrm{d}t}
        e^{- \tilde{X}}
        m(B_T + x)
    \right] \\
    &\geq \mathbb{E}\left[
        \mathbbm{1}_{S_T}
        e^{- \int_0^T (V(B_t + x) - E_\alpha) \, \mathrm{d}t}
        e^{- \tilde{X}}
        m(B_T + x)
    \right] ,
\end{align*}
where $\mathbbm{1}_{S_T}$ is the indicator function of $S_T$.
Let
\begin{align}
    m = \inf_{y \in B_0} m(y).
\end{align}
Since the unit ball $B_0 \subset \mathbb{R}^3$ is compact, $m$ is strictly positive and we obtain that
    \begin{align*}
        \tilde{X} \leq X
    \end{align*}
on $S_T$, where 
\begin{align}
    X
    = \frac{\alpha^2}{2}  
    \norm{ {K}_T }_{\oplus^3 L^2}^2
    + \frac{\abs{\alpha} g_\infty}{m}
    \norm{ {K}_T }_{\oplus^3 L^2} .
\end{align}
Hence we obtain that
\begin{align*}
    {(\mathbbm{1}, {\varphi_\mathrm{g}}(x))}_\mathcal{F}
    \geq \mathbb{E}\left[
        \mathbbm{1}_{S_T}
        e^{- \int_0^T (V(B_t + x) - E_\alpha) \, \mathrm{d}t}
        e^{- X}
        m(B_T + x)
    \right] .
\end{align*}
We need to estimate $\mathbb{E}[e^{a X}]$ from above. 
{Formally} $\norm{{K}_T}_{\oplus^3 L^2}^2$ is of the form:
\begin{align*}
    \norm{{K}_T}_{\oplus^3 L^2}^2
    = \sum_{\mu, \nu = 1}^3
        \int_0^T
            \int_0^T \left(
                \int_{\mathbb{R}^3}
                    {d_{\mu \nu}}(k)
                    \frac{\abs{\hat{\varphi}(k)}^2}{\omega(k)}
                    e^{- i k \cdot(B_s - B_t)}
                    e^{- \abs{s - t} \omega(k)}
                \, \mathrm{d}k
            \right) \, \mathrm{d}B_s^\mu
        \,\mathrm{d} B_t^\nu.
\end{align*} 
Although this expression is often referred to as a stochastic double integral, this terminology is only formal.
Indeed, the integrand does not satisfy the adaptedness required for stochastic integrals. Consequently, estimating $\mathbb{E}[e^{a X}]$ is far from straightforward.
See e.g.~\cite{spo86,BH09}.
Ideally, one would like to reduce the stochastic integral to an ordinary Riemann integral by applying Ito formula or related techniques.
However, to the best of our knowledge, such a reduction is not available for the Pauli-Fierz model we study in this paper.
Instead, we derive the desired estimate by combining exponential martingale techniques with the Jensen inequality and several auxiliary estimates.
Let
\begin{align*}
    \triangle = 24 \alpha_*^2 \norm{ \frac{\hat{\varphi}}{\sqrt{\omega}} }^2 ,
    \quad \alpha_*^2 = \alpha^2 \left(\frac{1}{2} + {\left(\frac{g_{\infty}}{m}\right)}^2 \right) .
\end{align*}
We show the lemma below:
\begin{lemma}[Key lemma I\hspace{-1.2pt}I]\label{lem:6} %
    Suppose that
    \begin{align*}
        0 \leq \varepsilon  < \frac{1}{\triangle T} .
    \end{align*}
    Then there exist $C_1(\varepsilon)$ and $C_2$ such that 
    \begin{align}\label{eq:lem6.1} %
        \mathbb{E}[e^{\varepsilon X}] \leq {C_1}(\varepsilon) e^{\varepsilon C_2 T} .
    \end{align}
    Here
    \begin{align*}
        {C_1}(\varepsilon)
        &= {\left(
            \frac{
                {\left(1-\varepsilon^2 T^2 \triangle^2\right)}^{- 1 / 2} - 1
            }{
                \varepsilon^2 T^2 \triangle^2 / 2
            }
        \right)}^{1 / 6}
        \exp(
            \frac{1}{4} + \varepsilon \alpha_*^2 \norm{ \frac{\hat{\varphi}}{\omega} }^2
        ) , \\
        C_2
        &= \alpha_*^2 \left(
            4 \norm{ \frac{\hat{\varphi}}{\omega} }^2
            + 5\norm{ \frac{\hat{\varphi}}{\sqrt{\omega}} }^2
            + \frac{1}{2} \norm{ \frac{\hat{\varphi}}{\omega} }^2
        \right)
        + 72 \alpha_*^4 \norm{ \frac{\hat{\varphi}}{\sqrt{\omega}} }^4.
    \end{align*}
    In particular
    \begin{align*}
        {\mathbb{E}[e^{\varepsilon X}]}^{1 / \varepsilon} \leq {C_3}(\varepsilon) e^{C_2 T} .
    \end{align*}
    Here
    \begin{align}\label{eq:lem6.2} %
        {C_3}(\varepsilon)
        = {\left(
            \frac{
                {\left( 1 - \varepsilon^2 T^2 \triangle^2 \right)}^{- \frac{1}{2}} - 1
            }{
                \varepsilon^2 T^2 \triangle^2 / 2
            }
        \right)}^{1 / (6 \varepsilon)}
        e^{1 / (4 \varepsilon)}
        e^{\alpha_*^2 \norm{ \hat{\varphi} / \omega }^2}.
    \end{align}
\end{lemma}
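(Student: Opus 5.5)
First I will remove the linear term in $X$. With $\kappa = \abs{\alpha} g_\infty / m$ and $Y = \norm{K_T}_{\oplus^3 L^2}^2$, the elementary inequality $\kappa\sqrt{Y} \leq \kappa^2 Y + \frac14$ gives
\[
    X \leq \alpha_*^2 Y + \tfrac14 ,
    \qquad \alpha_*^2 = \alpha^2 \Bigl(\tfrac12 + (g_\infty/m)^2\Bigr).
\]
This is where the factor $e^{1/4}$ in ${C_1}(\varepsilon)$ comes from, and it shows where $\alpha_*$ enters. The problem is then to bound $\mathbb{E}[e^{\varepsilon \alpha_*^2 Y}]$.

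The next step is to decompose $Y$. Formally $Y = \int_0^T\int_0^T \langle \mathrm{j}_s\tilde\varphi(\cdot - B_s), \mathrm{j}_t\tilde\varphi(\cdot - B_t)\rangle \, \mathrm{d}B_s \, \mathrm{d}B_t$. By \cref{eq:jt} the kernel is $G_{\mu\nu}(s,t) = \int d_{\mu\nu}(k) \frac{\abs{\hat\varphi}^2}{\omega} e^{-ik\cdot(B_s - B_t)} e^{-\abs{s-t}\omega} \, \mathrm{d}k$. I will make this rigorous with the standard trick of \cite{spo86,HH10}. Write $\mathrm{j}_t\tilde\varphi(\cdot - B_t)$ in Fourier form and apply It\^o's formula to $t \mapsto e^{-ik\cdot B_t}$ (times the time-dependent factor). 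This replaces the non-adapted double integral by three pieces:
\begin{itemize}
    \item an adapted It\^o double integral $\int_0^T \bigl(\int_0^t G(s,t) \, \mathrm{d}B_s\bigr) \cdot \mathrm{d}B_t$, taken twice;
    \item a diagonal Riemann term of size $T \norm{\hat\varphi/\sqrt\omega}^2$, coming from the quadratic variation;
    \item boundary terms controlled by $\norm{\hat\varphi/\omega}^2$.
\end{itemize}
The deterministic pieces give the contributions $\varepsilon C_2 T$ and $e^{\varepsilon\alpha_*^2\norm{\hat\varphi/\omega}^2}$. I will not fix the constants 4, 5, $\frac12$ in $C_2$ in advance; they will be read off from this decomposition. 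To separate the three random pieces I will apply H\"older's inequality with three (or six) equal exponents, which accounts for the power $1/6$ in ${C_1}(\varepsilon)$.

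The martingale pieces are handled with an exponential-martingale argument. For a term $M_T = \int_0^T H_t \cdot \mathrm{d}B_t$ with $H_t = \int_0^t G(s,t) \, \mathrm{d}B_s$, I write
\[
    e^{cM_T} = e^{cM_T - c^2\langle M\rangle_T} \, e^{c^2\langle M\rangle_T},
\]
and use Cauchy--Schwarz together with $\mathbb{E}[e^{2cM_T - 2c^2\langle M\rangle_T}] \leq 1$. This reduces the problem to $\mathbb{E}[e^{2c^2\langle M\rangle_T}]$, where $\langle M\rangle_T = \int_0^T \abs{H_t}^2 \, \mathrm{d}t$. Each $H_t$ is itself a stochastic integral with kernel bounded by $\norm{\hat\varphi/\sqrt\omega}^2$. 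Jensen's inequality in $t$ gives $e^{\int_0^T \abs{H_t}^2 \, \mathrm{d}t} \leq \frac1T\int_0^T e^{T\abs{H_t}^2} \, \mathrm{d}t$. Repeating the exponential-martingale step on $H_t$ produces a Gaussian-type quadratic moment, which is explicitly of the form $(1-\lambda^2)^{-1/2}$ with $\lambda = \varepsilon T \triangle$. The expression $\bigl((1-\lambda^2)^{-1/2} - 1\bigr) / (\lambda^2/2)$ should arise from averaging $(1 - \lambda^2 t^2/T^2)^{-1/2}$-type bounds over $t$, or from the series $\sum_n$ of moments. The term $72\alpha_*^4\norm{\hat\varphi/\sqrt\omega}^4$ in $C_2$ is the leftover $c^2$ correction.

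I expect this nested step to be the main obstacle. The iterated exponential-martingale/Jensen argument has to close with a finite bound exactly when $\varepsilon T \triangle < 1$, and the constant $24$ in $\triangle$ has to be tracked through the H\"older splittings. Once \cref{eq:lem6.1} is established, the ``in particular'' statement follows at once by taking the $1/\varepsilon$ power, which gives ${C_3}(\varepsilon)$ as in \cref{eq:lem6.2}.
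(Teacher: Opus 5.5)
There is a genuine gap in the nested step. This is the real content of the lemma, and you flag it as the main obstacle and leave it unresolved. Your reduction $X\le\alpha_*^2\|K_T\|^2+\frac14$, the H\"older splitting, the exponential-martingale/Cauchy--Schwarz step on the outer It\^o integral, and Jensen in time all match the paper. The problem is the inner process $H_t=\int_0^t G(s,t)\,\mathrm{d}B_s$. Its integrand contains $e^{-ik\cdot(B_s-B_t)}$, so for every $s<t$ it depends on the endpoint $B_t$. Consequently $H_t$ is not an It\^o integral in $s$, and $u\mapsto\int_0^u G(s,t)\,\mathrm{d}B_s$ is not a martingale. The bound $\mathbb{E}[e^{M-\frac12\langle M\rangle}]\le 1$ therefore cannot simply be ``repeated'' on it. The outer integral is fine, because $H_t$ depends only on the path up to time $t$; only the inner one anticipates. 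A second issue is that you need a \emph{quadratic} exponential moment $\mathbb{E}[e^{b|H_t|^2}]$, and a martingale bound for linear exponentials does not give that by itself.

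The paper supplies two ideas at this point.
\begin{itemize}
\item \emph{Linearization.} For fixed outer time $s$ it writes $e^{b|Q|^2}=(4\pi b)^{-3/2}\int_{\mathbb{R}^3}e^{x\cdot Q-|x|^2/(4b)}\,\mathrm{d}x$, so only $\mathbb{E}[e^{x\cdot Q}]$ is needed.
\item \emph{Time reversal.} By It\^o's formula for $e^{-(s-t)\omega(k)}e^{ik\cdot B_t}$ it writes $Q_\mu=\tilde Q_\mu+R_\mu$. Here $\tilde Q_\mu$ is a backward stochastic integral and $R_\mu$ is a Riemann remainder with $|R|^2\le 3\|\hat\varphi/\sqrt\omega\|^4$. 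Under $\hat B_u=B_s-B_{s-u}$, $\tilde Q_\mu$ becomes the genuine It\^o integral $\int\frac{|\hat\varphi(k)|^2}{\omega(k)}\int_0^s e^{-u\omega(k)}e^{-ik\cdot\hat B_u}\,\mathrm{d}\hat B^\mu_u\,\mathrm{d}k$. Its quadratic variation is at most $s\|\hat\varphi/\sqrt\omega\|^4$.
\end{itemize}
Only after these two steps does the exponential martingale apply. The Gaussian integral then gives $(1-\lambda^2 s/T)^{-3/2}$ with $\lambda=\varepsilon T\triangle$. Averaging over $s\in[0,T]$ yields exactly $\bigl((1-\lambda^2)^{-1/2}-1\bigr)/(\lambda^2/2)$. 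The dependence is linear in $s$ and the exponent $-3/2$ comes from the three Gaussian dimensions; it is not an average of $(1-\lambda^2t^2/T^2)^{-1/2}$ as you guessed. The term $72\alpha_*^4\|\hat\varphi/\sqrt\omega\|^4$ in $C_2$ comes from the factor $e^{2b|R|^2}$ produced by $|Q|^2\le2|\tilde Q|^2+2|R|^2$, not from a $c^2$ correction.

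Your first decomposition is also vaguer than it needs to be. The paper applies the It\^o product rule to $W_\mu(T)W_\nu(T)$ to get $I_{\mathrm{sing}}+I_1+I_2$. The boundary terms with $\|\hat\varphi/\omega\|^2$ arise only from the longitudinal part $I_2$, i.e. the $k_\mu k_\nu/|k|^2$ piece of $d_{\mu\nu}$. There $k\cdot\mathrm{d}B_s$ and $e^{-ik\cdot B_s}$ combine into an exact differential.
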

We show \cref{lem:7,lem:8,lem:9,lem:10} to prove \cref{lem:6}.
We see that
\begin{align*}
    X \leq \frac{\alpha^2}{2} \norm{ K_T }^2 + {\left(\frac{\abs{\alpha} g_{\infty}}{m} \right)}^2 \norm{ K_T }^2+ \frac{1}{4} .
\end{align*}
Then
\begin{align*}
    \mathbb{E}[e^{\varepsilon X}] \leq e^{1 / 4} \mathbb{E}[e^{\varepsilon \alpha_*^2 \norm{ K_T }^2}] .
\end{align*}
We set $W = (W_1, W_2, W_3)$ with
\begin{align*}
    W_{\mu} = {W_{\mu}}(T) = \int_0^T \mathrm{j}_s \tilde{\varphi}(\cdot - B_s) \, \mathrm{d}B_s^\mu . 
\end{align*}
Note that $\tilde{\varphi}$ is real-valued and that
$\mathrm{j}_s$ maps real-valued functions to real-valued functions.
Hence, $W_\mu$ is real-valued.
Let
\begin{align*}
    {\varrho_s}(k) = \mathrm{j}_s \tilde{\varphi}(\cdot - B_s)(k) .
\end{align*} 
\begin{lemma}\label{lem:7} %
    It follows that 
    \begin{align*}
        \mathbb{E}[{\exp}(\varepsilon \alpha_*^2 \norm{ K_T }^2)]
        \leq {\mathbb{E}[e^{3\varepsilon \alpha_*^2 I_{\mathrm{sing}}}]}^{1 / 3}
        {\mathbb{E}[e^{3\varepsilon \alpha_*^2 I_1}]}^{1 / 3}
        {\mathbb{E}[e^{3\varepsilon \alpha_*^2 I_2}]}^{1 / 3} ,
    \end{align*}
    where
    \begin{align*}
        I_{\mathrm{sing}}
        &= 2 \int_0^T {(\varrho_s, W(s))}_{L^2} \cdot \, \mathrm{d}B_s, \\
        I_1
        &= 2 \int_0^T \norm{ \varrho_s }_{L^2}^2 \, \mathrm{d}s, \\
        I_2
        &= - 2 \int_{\mathbb{R}^3} \frac{1}{\abs{k}^2}
            \int_0^T \varrho_s(k)(k \cdot W(s)) (k \cdot \, \mathrm{d}B_s)
        \, \mathrm{d}k.
    \end{align*}
\end{lemma}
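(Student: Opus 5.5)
The plan is to expand $\norm{K_T}^2 = \sum_\mu \norm{W_\mu(T)}_{L^2}^2$ by Itô's formula for the Hilbert-space-valued martingale $W(t)=\int_0^t \varrho_s\,\mathrm{d}B_s$. Each component $W_\mu(t)$ is a real $L^2(\mathbb{R}^4)$-valued continuous martingale, because $\varrho_s$ is adapted: it depends on $B_s$ only. Hence
\begin{align*}
    \norm{W(T)}^2 = 2\int_0^T (W(s), \varrho_s)_{L^2}\cdot \mathrm{d}B_s + \int_0^T \sum_\mu \norm{\varrho_s^{(\mu)}}^2\, \mathrm{d}s .
\end{align*}
The subtlety is that $K_T$ carries the transverse projection $d_{\mu\nu}(k)$. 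This is either built into the Euclidean covariance of $A_\mathrm{E}$, or equivalently $\norm{K_T}^2$ should be read as $\sum_{\mu\nu}(W_\mu, d_{\mu\nu}W_\nu)$ in Fourier space.

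Writing $d_{\mu\nu}=\delta_{\mu\nu}-k_\mu k_\nu/|k|^2$, I split the quadratic form into the $\delta$-part and the longitudinal part. The $\delta$-part gives, via Itô, $I_{\mathrm{sing}}$ together with half of the quadratic-variation term. The longitudinal part $-\int |k|^{-2}|k\cdot W(T)(k)|^2\,\mathrm{d}k$ is handled by applying Itô's formula pointwise in $k$ to the scalar martingale $k\cdot W(t)(k)$. This produces the stochastic integral $I_2$ plus a quadratic-variation term $-\int_0^T\int |k|^{-2}|k|^2|\varrho_s(k)|^2\,\mathrm{d}k\,\mathrm{d}s$, which is nonpositive and can be dropped or bounded by $\int\norm{\varrho_s}^2\,\mathrm{d}s$. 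Collecting terms, and using the factors of $2$ in the definitions, I expect to get
\begin{align*}
    \norm{K_T}^2 \le I_{\mathrm{sing}} + I_1 + I_2 .
\end{align*}
The constants in $I_1,I_2$ are chosen generously so that this inequality holds even with the crude bound on the longitudinal quadratic-variation term. The interchange of the $\mathrm{d}k$ integral with the stochastic integral in $I_2$ requires a stochastic Fubini theorem, which I justify using $\hat\varphi/\sqrt\omega\in L^2$ and the isometry property of $\mathrm{j}_s$.

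Finally, I exponentiate $\varepsilon\alpha_*^2\norm{K_T}^2\le \varepsilon\alpha_*^2(I_{\mathrm{sing}}+I_1+I_2)$ and apply Hölder's inequality with exponents $(3,3,3)$. This gives exactly the product of the three cube roots in the statement.

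The main obstacle is the rigorous application of Itô's formula in the Hilbert space $\oplus^3L^2(\mathbb{R}^4)$ together with the transverse projection. One has to verify that $s\mapsto\varrho_s$ is progressively measurable with $\int_0^T\norm{\varrho_s}^2\,\mathrm{d}s<\infty$; this holds since $\norm{\varrho_s}=\norm{\tilde\varphi}=\norm{\hat\varphi/\sqrt\omega}$ deterministically. If the infinite-dimensional Itô formula is awkward to invoke, I would instead apply the scalar Itô formula to $|W_\mu(t)(k)|^2$ for each fixed $(k_0,k)$ and integrate in $\mathrm{d}k_0\,\mathrm{d}k$, again using stochastic Fubini.
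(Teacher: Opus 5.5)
You follow the same route as the paper: Ito's formula for $W_\mu(T)W_\nu(T)$ pointwise in the Fourier variables, contraction with $d_{\mu\nu}(k)$, then H\"{o}lder with exponents $(3,3,3)$. However, your bookkeeping of the Ito correction is wrong. As a result, the step in which you drop the longitudinal quadratic-variation term does not give the stated inequality.

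Every component $W_\mu(t)=\int_0^t\varrho_s\,\mathrm{d}B_s^\mu$ has the \emph{same} scalar integrand $\varrho_s$. So the $\delta$-part is
\[
\sum_{\mu=1}^3\|W_\mu(T)\|^2 = I_{\mathrm{sing}} + 3\int_0^T\|\varrho_s\|^2\,\mathrm{d}s = I_{\mathrm{sing}}+\frac{3}{2}I_1 ,
\]
and not $I_{\mathrm{sing}}$ plus ``half'' of $I_1$. Your own normalization $\|\varrho_s\|=\|\hat\varphi/\sqrt\omega\|$ already gives $\sum_\mu\|\varrho^{(\mu)}_s\|^2=3\|\varrho_s\|^2$. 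The longitudinal part equals $I_2-\int_0^T\|\varrho_s\|^2\,\mathrm{d}s=I_2-\frac{1}{2}I_1$.

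The net coefficient of $\int_0^T\|\varrho_s\|^2\,\mathrm{d}s$ is therefore exactly $\sum_\mu d_{\mu\mu}(k)=3-1=2$. This gives the identity $\|K_T\|^2=I_{\mathrm{sing}}+I_1+I_2$, which is what the paper derives. The factor $2$ in $I_1$ is sharp, not generous. If you drop the nonpositive term $-\int_0^T\|\varrho_s\|^2\,\mathrm{d}s$, you only get $\|K_T\|^2\le I_{\mathrm{sing}}+\frac{3}{2}I_1+I_2$. Since $I_1=2T\|\hat\varphi/\sqrt\omega\|^2>0$, your argument would then prove the lemma only with $\frac{3}{2}I_1$ in place of $I_1$. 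Replacing that term by $+\int_0^T\|\varrho_s\|^2\,\mathrm{d}s$ is worse still.

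The repair is to keep the longitudinal correction and use the cancellation $3-1=2$; your H\"{o}lder step then goes through verbatim. Because $I_1$ is deterministic, your weaker bound would only enlarge the constant $C_2$ later on, but it does not prove the lemma as stated.

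Two of your remarks are correct and more careful than the paper's written argument. First, $\|K_T\|^2$ must be read with the transverse projection $d_{\mu\nu}$, which is what produces both $I_2$ and the $-\frac{1}{2}I_1$ term. Second, interchanging $\mathrm{d}k$ with the stochastic integral in $I_2$ needs a stochastic Fubini argument.
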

\begin{proof}
    Since
    $\sum_{i = 1}^3 e_{\mu}^i e_{\nu}^i = d_{\mu \nu}(k) = \delta_{\mu \nu} - \frac{k_\mu k_\nu}{\abs{k}^2}$, we have
    \begin{align*}
        \norm{ {K}_T }_{\oplus^3 L^2}^2
        = \sum_{\mu, \nu = 1}^3 ({W_{\mu}}(T),d_{\mu \nu} {W_{\nu}}(T))_{L^2}
        = \sum_{\mu, \nu = 1}^3
            \int_{\mathbb{R}^3} {W_{\mu}}(T) {W_{\nu}}(T) {d_{\mu \nu}}(k) \, \mathrm{d}k.
    \end{align*}
    Employing the Ito formula we have 
    \begin{align*}
        \mathrm{d}(W_{\mu} W_{\nu})
        = \mathrm{d}{W_{\mu}} \cdot W_{\nu} + W_{\mu} \cdot \mathrm{d}{W_{\nu}} + 
        \mathrm{d}{W_{\mu}}\cdot \mathrm{d}{W_{\nu}},
    \end{align*}
    and hence
    \begin{align*}
        {W_{\mu}}(T) {W_{\nu}}(T)
        = \int_0^T \varrho_s {W_{\nu}}(s) \, \mathrm{d}B_s^\mu
        + \int_0^T \varrho_s {W_{\mu}}(s) \, \mathrm{d}B_s^\nu
        + \delta_{\mu \nu} \int_0^T \norm{ \varrho_s }_{L^2}^2 \, \mathrm{d}s.
    \end{align*}
    Therefore,
    \begin{align*}
\norm{ {K}_T }_{\oplus^3 L^2}^2
        &= 2 \int_0^T {(\varrho_s, W(s))}_{L^2} \cdot \, \mathrm{d}B_s
        + 2 \int_0^T \norm{ \varrho_s }_{L^2}^2 \, \mathrm{d}s \\
        &- 2 \int_{\mathbb{R}^3} \frac{1}{\abs{k}^2}
            \int_0^T {\varrho_s}(k) \cdot (k \cdot W(s)) (k \cdot \, \mathrm{d}B_s)
        \, \mathrm{d}k .
    \end{align*}
    Hence we have 
    \begin{align*}
        \mathbb{E}[\exp(\varepsilon \alpha_*^2 \norm{K_T}^2)]
        &= \mathbb{E}\left[
            \exp(
                \varepsilon \alpha_*^2
                \sum_{\mu, \nu = 1}^3
                    \int_{\mathbb{R}^3}
                        {W_{\mu}}(T) {W_{\nu}}(T)
                        {d_{\mu \nu}}(k)
                    \, \mathrm{d}k
            )
        \right] \\
        &\leq {\mathbb{E}[e^{3 \varepsilon \alpha_*^2 I_{\mathrm{sing}}}]}^{1/3}
        {\mathbb{E}[e^{3 \varepsilon \alpha_*^2 I_1}]}^{1/3}
        {\mathbb{E}[e^{3 \varepsilon \alpha_*^2 I_2}]}^{1/3} .
    \end{align*}
\end{proof}

The next series of lemmas is devoted to estimating 
$\mathbb{E}[e^{a I_{\mathrm{sing}}}]$, $ \mathbb{E}[e^{a I_1}]$ and $\mathbb{E}[e^{a I_2}]$, where
\begin{align*}
    a = 3 \varepsilon \alpha_*^2 .
\end{align*}
Estimating $\mathbb{E}[e^{a I_{\mathrm{sing}}}]$ is more difficult than
estimating $\mathbb{E}[e^{a I_1}]$ and $\mathbb{E}[e^{a I_2}]$.
The latter two can be represented in terms of Riemann integrals,
whereas the former cannot. 

\begin{lemma}\label{lem:8} %
    We have
    \begin{align}\label{eq:lem8} %
        \mathbb{E}[e^{a I_{\mathrm{sing}}}]
        \leq e^{6 {(2 a)}^2 T \norm{ {\hat{\varphi}} / {\sqrt{\omega}} }^4}
        {\left(
            \frac{
                {(1 - 16 {(2a)}^2 T^2 \norm{ \varphi /{\sqrt{\omega}} }^4)}^{- 1 / 2} -1
            }{
                8 {(2 a)}^2 T^2 \norm{ \varphi / {\sqrt{\omega}} }^4
            }
        \right)}^{1 / 2} .
    \end{align}
\end{lemma}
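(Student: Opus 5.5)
The plan is to treat $I_{\mathrm{sing}}=2\int_0^T (\varrho_s,W(s))_{L^2}\cdot \mathrm{d}B_s$ as a genuine Itô integral. Its integrand $h_s:=(\varrho_s,W(s))_{L^2}\in\mathbb{R}^3$ is adapted, since $\varrho_s$ depends on $B_s$ and $W(s)$ on $(B_r)_{r\le s}$. We then control its exponential moment with the exponential martingale trick followed by Cauchy--Schwarz. Write
\[
e^{aI_{\mathrm{sing}}}=\exp\Bigl(2a\int_0^T h_s\cdot \mathrm{d}B_s-4a^2\int_0^T|h_s|^2\,\mathrm{d}s\Bigr)\exp\Bigl(4a^2\int_0^T|h_s|^2\,\mathrm{d}s\Bigr).
\]
Applying the Schwarz inequality, the first factor squared is the stochastic exponential of $4a\int h\cdot \mathrm{d}B$, whose expectation is at most $1$ because it is a positive supermartingale. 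This gives
\[
\mathbb{E}[e^{aI_{\mathrm{sing}}}]\le \mathbb{E}\Bigl[e^{8a^2\int_0^T|h_s|^2\,\mathrm{d}s}\Bigr]^{1/2}.
\]
The exponent $1/2$ in \cref{eq:lem8} is consistent with this step.

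Next we bound $|h_s|$ pathwise. Since $\mathrm{j}_s$ is an isometry and translation preserves the $L^2$ norm, $\|\varrho_s\|_{L^2}=\|\tilde\varphi\|=\|\hat\varphi/\sqrt{\omega}\|$. Hence $|h_s|^2\le \|\hat\varphi/\sqrt\omega\|^2\,\|W(s)\|^2$, where $\|W(s)\|^2=\sum_\mu\|W_\mu(s)\|_{L^2}^2$. This yields
\[
\mathbb{E}[e^{aI_{\mathrm{sing}}}]\le \mathbb{E}\Bigl[\exp\Bigl(2(2a)^2\|\hat\varphi/\sqrt\omega\|^2\int_0^T\|W(s)\|^2\,\mathrm{d}s\Bigr)\Bigr]^{1/2}.
\]
By Jensen's inequality in the time variable, $e^{\frac1T\int_0^T c\,T\|W(s)\|^2\,\mathrm{d}s}\le \frac1T\int_0^T e^{cT\|W(s)\|^2}\,\mathrm{d}s$, so it suffices to bound $\sup_{s\le T}\mathbb{E}[e^{\lambda\|W(s)\|^2}]$ with $\lambda=2(2a)^2T\|\hat\varphi/\sqrt\omega\|^2$.

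To estimate $\|W(s)\|^2$ we repeat the Itô decomposition from the proof of \cref{lem:7}:
\[
\|W(s)\|^2=2\int_0^s (\varrho_r,W(r))\cdot \mathrm{d}B_r+3\int_0^s\|\varrho_r\|^2\,\mathrm{d}r .
\]
The second term is deterministic, equal to $3s\|\hat\varphi/\sqrt\omega\|^2$, and this should produce the factor $e^{6(2a)^2T\|\hat\varphi/\sqrt\omega\|^4}$ up to constants. The martingale part is again handled by the exponential-martingale/Cauchy--Schwarz step, which returns to the same kind of quantity. This gives a recursive inequality, or alternatively one can compare $\|W(s)\|^2$ with the squared norm of a $3$-dimensional Gaussian-like martingale whose quadratic variation is bounded by $\|\hat\varphi/\sqrt\omega\|^2 s$. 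Via a time change and the Dambis--Dubins--Schwarz representation, $\mathbb{E}[e^{\lambda\|W(s)\|^2}]$ is then dominated by a chi-square type moment $(1-c\lambda s\|\hat\varphi/\sqrt\omega\|^2)^{-1/2}$. Integrating this in $s$ over $[0,T]$ produces exactly the form $\bigl((1-16(2a)^2T^2\|\cdot\|^4)^{-1/2}-1\bigr)/\bigl(8(2a)^2T^2\|\cdot\|^4\bigr)$, since $\frac1T\int_0^T(1-\kappa s)^{-3/2}\,\mathrm{d}s=\frac{2}{\kappa T}\bigl((1-\kappa T)^{-1/2}-1\bigr)$.

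The main obstacle is this last step: obtaining a closed, non-circular bound on $\mathbb{E}[e^{\lambda\|W(s)\|^2}]$ when $W$ is Hilbert-space valued with random, path-dependent integrand. The natural first attempt is to view $W$ as an $L^2$-valued martingale with $\mathrm{d}\langle W\rangle\le\|\hat\varphi/\sqrt\omega\|^2\,\mathrm{d}s$. One then applies the standard Gaussian-domination bound for exponential square moments of continuous martingales with bounded quadratic variation (via time change) and tracks constants to match \cref{eq:lem8}. The condition $16(2a)^2T^2\|\hat\varphi/\sqrt\omega\|^4<1$ is exactly where $T$-dependence enters.
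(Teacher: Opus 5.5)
\textbf{Verdict: genuine gap.} Your reduction is sound, but the proposal stops at the one estimate that carries the proof.

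\textbf{What is sound.} Your opening steps match the paper's. The exponential-martingale/Schwarz step gives $\mathbb{E}[e^{aI_{\mathrm{sing}}}]\le\mathbb{E}[e^{8a^2\int_0^T|h_s|^2\,\mathrm{d}s}]^{1/2}$, and Jensen in time follows, exactly as in the paper. Your pathwise Cauchy--Schwarz bound $|h_s|^2\le\|\hat\varphi/\sqrt{\omega}\|^2\|W(s)\|^2$ is also legitimate, and it avoids the non-adapted integrand the paper has to handle.

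\textbf{The gap.} You never prove a bound on $\mathbb{E}[e^{\lambda\|W(s)\|^2}]$, and the tool you name does not apply. Dambis--Dubins--Schwarz concerns real-valued continuous local martingales. Here $W(s)$ is $\oplus^3L^2(\mathbb{R}^4)$-valued with a random integrand, and $\|W(s)\|^2=2\int_0^s h_r\cdot\mathrm{d}B_r+3s\|\hat\varphi/\sqrt{\omega}\|^2$ is a semimartingale with drift. Its martingale part has quadratic variation $4\int_0^s|h_r|^2\,\mathrm{d}r$, which is controlled only through $\|W\|^2$ again. So there is no chi-square identity to invoke, and the ``recursive inequality'' you mention is never closed.

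The bookkeeping is also inconsistent. You state a moment bound with exponent $-1/2$ but then integrate $(1-\kappa s)^{-3/2}$. The prefactor $e^{6(2a)^2T\|\hat\varphi/\sqrt{\omega}\|^4}$ does not come from the It\^o term $3s\|\hat\varphi/\sqrt{\omega}\|^2$. In the paper it arises as follows:
\begin{itemize}
\item The paper keeps the scalar $Q_\mu(s)=(\varrho_s,W_\mu(s))$ and linearizes $e^{b|Q|^2}$ by a Gaussian integral over $x\in\mathbb{R}^3$.
\item It splits $Q_\mu=\tilde Q_\mu+R_\mu$. Here $\tilde Q_\mu$ is a backward integral which, in the reversed Brownian motion $\hat B_u=B_s-B_{s-u}$, becomes an It\^o integral with quadratic variation at most $s\|\hat\varphi/\sqrt{\omega}\|^4$.
\item The remainder satisfies $|R|^2\le 3\|\hat\varphi/\sqrt{\omega}\|^4$, and the prefactor comes from $|Q|^2\le 2|\tilde Q|^2+2|R|^2$.
\end{itemize}

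\textbf{How to complete your route.} Use a comparison argument rather than DDS. Let $c=\|\hat\varphi/\sqrt{\omega}\|$ and $N_s=\|W(s)\|^2$, so $\mathrm{d}N=2h\cdot\mathrm{d}B+3c^2\,\mathrm{d}s$ with $|h|^2\le c^2N$. Fix $s$ with $2\lambda c^2 s<1$ and set $u(t,z)=(1-2\lambda c^2(s-t))^{-3/2}\exp\{\lambda z/(1-2\lambda c^2(s-t))\}$. This is the exponential moment of a three-dimensional squared Bessel process run at speed $c^2$. It satisfies $u_t+3c^2u_z+2c^2zu_{zz}=0$, $u_{zz}\ge 0$ and $u(s,z)=e^{\lambda z}$.
\begin{itemize}
\item The drift of $u(t,N_t)$ is $2(|h|^2-c^2N)u_{zz}\le 0$.
\item Hence $u(t,N_t)$ is a nonnegative local supermartingale, so $\mathbb{E}[e^{\lambda N_s}]\le u(0,0)=(1-2\lambda c^2 s)^{-3/2}$.
\item Take $\lambda=2(2a)^2Tc^2$ and average over $s\in[0,T]$. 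This is admissible because the lemma's implicit condition $16(2a)^2T^2c^4<1$ implies $2\lambda c^2T<1$. You get $\mathbb{E}[e^{aI_{\mathrm{sing}}}]\le g(y)^{1/2}$ with $y=4(2a)^2T^2c^4$ and $g(y)=\bigl((1-y)^{-1/2}-1\bigr)/(y/2)$.
\item Since $g$ has a power series with positive coefficients, $g(y)\le g(4y)$.
\item $g(4y)^{1/2}$ is exactly the second factor on the right-hand side of \cref{eq:lem8}, and the prefactor there is $\ge 1$, so the lemma follows.
\end{itemize}
With this step supplied, your approach is a genuinely different and somewhat sharper proof: no prefactor, a better constant, and no backward integrals. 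Without it, the proposal is only a reduction.
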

\begin{proof}
    Let
    \begin{align*}
        \beta = 2 a = 6 \varepsilon \alpha_*^2 .   
    \end{align*}
    By the Jensen inequality and exponential martingale property we have 
    \begin{align*}
        \mathbb{E}\left[
            e^{\beta \int_0^T (\varrho_s, W(s)) \cdot \, \mathrm{d}B_s}
        \right]
        \leq {\mathbb{E}\left[
            e^{2 \beta^2 \int_0^T {(\varrho_s, W(s))}^2 \, \mathrm{d}s}
        \right]}^{1 / 2}
        \leq {\left(
            \frac{1}{T}
            \int_0^T
                \mathbb{E}\left[
                    e^{2 \beta^2 T {(\varrho_s, W(s))}^2}
                \right]
            \, \mathrm{d}s
        \right)}^{1 / 2} .
    \end{align*}
    We estimate $\mathbb{E}\left[e^{2 \beta^2 T {(\varrho_s, W(s))}^2}\right]$.
    We have
    \begin{align*}
        {(\varrho_s, W(s))}_{L^2}^2
        = \sum_{\mu = 1}^3 {(\varrho_s,W_\mu(s))}_{L^2}^2
        = \sum_{\mu = 1}^3 {\left(
            \int_0^s
                {(\varrho_s, \varrho_t)}_{L^2} 
            \, \mathrm{d}B_t^\mu
        \right)}^2 .
    \end{align*}
    Let
    \begin{align}\label{eq:Qformal} %
        Q_{\mu} = {Q_{\mu}}(s)
        = \int_0^s
            {(\varrho_s, \varrho_t)}_{L^2} 
        \, \mathrm{d}B_t^\mu
    \end{align}
    and 
    \begin{align*}
        b = 2 \beta^2 T = 2 {(6 \varepsilon \alpha_*^2)}^2 T .
    \end{align*}
    We note that \cref{eq:Qformal} is only a formal expression, since the integrand $t \mapsto (\varrho_s, \varrho_t)$ is not adapted.
    More precisely, $Q_\mu$ is represented as
    \begin{align*}
        Q_\mu = 
        \int_{\mathbb{R}^3}
            \frac{\abs{\hat{\varphi}(k)}^2}{\omega(k)}
            e^{- i k \cdot B_s}
            \int_0^s
                e^{- (s - t) \omega(k)}
                e^{i k \cdot B_t} 
            \, \mathrm{d}B_t^\mu
        \, \mathrm{d}k .
    \end{align*}
    Then we see that
    \begin{align*}
        {\exp}(b \abs{Q}^2)
        = \frac{1}{{(4 \pi b)}^{3 / 2}}
        \int_{\mathbb{R}^3}
            \exp(x \cdot Q - \frac{\abs{x}^2}{4 b})
        \, \mathrm{d}x .
    \end{align*}
    Consequently,
    \begin{align}\label{eq:QGauss} %
        \mathbb{E}\left[
            \exp(b |Q|^2)
        \right]
        = \frac{1}{{(4 \pi b)}^{3 / 2}}
        \int_{\mathbb{R}^3}
            \exp(- \frac{\abs{x}^2}{4 b})
            \mathbb{E}\left[
                \exp(\sum_{\mu = 1}^3 x_\mu Q_\mu)
            \right]
        \, \mathrm{d}x .
    \end{align}
    We shall estimate
$\mathbb{E}\left[\exp(x_\mu Q_\mu)\right]$ 
    for a fixed $\mu$. However, we cannot directly apply the exponential
    martingale inequality, since $Q_\mu$ is not an Ito integral. 
    We therefore decompose $Q_\mu$ into a martingale part and a remainder term.

    For fixed $s > 0$ and $k \in \mathbb{R}^3$, set
    \begin{align*}
        {F_t}(k)
        = e^{- (s - t) \omega(k)} e^{- i k \cdot (B_s - B_t)}
        \qquad 0 \leq t \leq s.
    \end{align*}
    We define the backward stochastic integral by
    \begin{align*}
        \int_0^s {F_t}(k) \, \mathrm{d}\tilde{B}_t^\mu
        = \lim_{n \to \infty}
        \sum_{j = 0}^{n - 1}
            {F_{t_{j + 1}}}(k)
            (B_{t_{j + 1}}^\mu - B_{t_j}^\mu) ,
    \end{align*}
    where $0 = t_0 < t_1 < \cdots < t_n = s$ and $t_j = s j / n$. 
    Let
    \begin{align*}
        {G_t}(k) = e^{- (s - t) \omega(k)} e^{i k \cdot B_t} .
    \end{align*}
    Then
    \begin{align*}
        {F_t}(k) = e^{- i k \cdot B_s} {G_t}(k) .
    \end{align*}
    By Ito formula,
    \begin{align*}
        \mathrm{d}{G_t}(k)
        = {G_t}(k)\left\{
            i k \cdot \mathrm{d}B_t
            + \left(\omega(k) - \frac{\abs{k}^2}{2} \right) \mathrm{d}t
        \right\} .
    \end{align*}
    Hence we have the relationship between the stochastic integral 
    and the backward stochastic integral: 
    \begin{align*}
        \int_0^s {F_t}(k) \, \mathrm{d}\tilde{B}_t^\mu
        = e^{- i k \cdot B_s}
        \left\{
            \int_0^s {G_t}(k) \, \mathrm{d}B_t^\mu
            + i k_\mu \int_0^s {G_t}(k) \, \mathrm{d}t
        \right\} .
    \end{align*}
    Therefore,
    \begin{align*}
        &e^{- i k \cdot B_s}
        \int_0^s
            e^{- (s - t) \omega(k)}
            e^{i k \cdot B_t}
        \, \mathrm{d}B_t^\mu \\
        &= \int_0^s
            e^{- (s - t) \omega(k)}
            e^{- i k \cdot (B_s - B_t)}
        \, \mathrm{d}\tilde{B}_t^\mu
        - i k_\mu
        \int_0^s
            e^{- (s - t) \omega(k)}
            e^{- i k \cdot (B_s - B_t)}
        \, \mathrm{d}t .
    \end{align*}
    Consequently, setting
    \begin{align*}
        &\tilde{Q}_\mu
        = \int_{\mathbb{R}^3}
            \frac{\abs{\hat{\varphi}(k)}^2}{\omega(k)}
            \left(
                \int_0^s
                    e^{- (s - t) \omega(k)}
                    e^{- i k \cdot (B_s - B_t)}
                \, \mathrm{d}\tilde{B}_t^\mu
            \right)
        \, \mathrm{d}k, \\
        &R_\mu
        = - i
        \int_{\mathbb{R}^3}
            \frac{\abs{\hat{\varphi}(k)}^2}{\omega(k)}
            k_\mu
            \left(
                \int_0^s
                    e^{- (s - t) \omega(k)}
                    e^{- i k \cdot (B_s - B_t)}
                \, \mathrm{d}t
            \right)
        \mathrm{d}k ,
    \end{align*}
    we see that
    \begin{align*}
        Q_\mu = \tilde{Q}_\mu + R_\mu .
    \end{align*}
    Let $R = (R_1, R_2, R_3)$. 
    It is immediate to see that 
    \begin{align*}
        \abs{R}^2 \leq 3 \norm{ \frac{\hat{\varphi}}{\sqrt{\omega}} }^4 .
    \end{align*}
    We investigate $\tilde{Q}_\mu$.
    Let $0 = t_0 < t_1 < \cdots < t_n = s$ be a partition of $[0,s]$, and set $u_j = s - t_{n - j}$ for $j = 0, \ldots, n$.
    Then $0 = u_0 < u_1 < \cdots < u_n = s$.
    We also set $\hat{B}_u = B_s - B_{s - u}$ for $0 \leq u \leq s$.
    Then ${(\hat{B}_u)}_{0 \leq u \leq s}$ is a Brownian motion and 
    \begin{align*}
        \int_0^s
            e^{- (s - t) \omega(k)}
            e^{- i k \cdot (B_s - B_t)}
        \, \mathrm{d}\tilde{B}_t^\mu
        &= \lim_{n \to \infty}
        \sum_{i = 0}^{n - 1}
            e^{- (s - t_{i + 1}) \omega(k)}
            e^{- i k \cdot (B_s - B_{t_{i + 1}})}
            (B_{t_{i + 1}}^\mu - B_{t_i}^\mu) \\
        &= \lim_{n \to \infty}
        \sum_{j = 0}^{n-1}
            e^{- u_j \omega(k)}
            e^{- i k \cdot(B_s - B_{s - u_j})}
            (B_{s - u_j}^\mu - B_{s - u_{j + 1}}^\mu) .
    \end{align*}
    Since $\hat{B}_{u_{j + 1}}^\mu - \hat{B}_{u_j}^\mu = B_{s - u_j}^\mu - B_{s - u_{j + 1}}^\mu$, we obtain
    \begin{align*}
        \int_0^s
            e^{- (s - t) \omega(k)}
            e^{- i k \cdot(B_s - B_t)}
        \, \mathrm{d}\tilde{B}_t^\mu
        &= \lim_{n \to \infty}
        \sum_{j = 0}^{n - 1}
            e^{- u_j \omega(k)}
            e^{- i k \cdot \hat{B}_{u_j}}
            (\hat{B}_{u_{j + 1}}^\mu - \hat{B}_{u_j}^\mu) \\
        &= \int_0^s
            e^{- u \omega(k)}
            e^{-i k \cdot \hat{B}_u}
        \, \mathrm{d}\hat{B}_u^\mu.
    \end{align*}
    Let
    \begin{align*}
        \ev*{\tilde{Q}_\mu}
        = \int_0^s \left(
            \int_{\mathbb{R}^3 \times \mathbb{R}^3}
                \frac{\abs{\hat{\varphi}(k)}^2}{\omega(k)}
                \frac{\abs{\hat{\varphi}(k')}^2}{\omega(k')}
                e^{- u (\omega(k) + \omega(k'))}
                e^{- i (k + k') \cdot \hat{B}_u}
            \, \mathrm{d}k \mathrm{d}k'
        \right) \, \mathrm{d}u .
    \end{align*}
    We see the bound:
    \begin{align}\label{eq:Qbound1} %
        \ev*{\tilde{Q}_\mu}
        \leq \int_{\mathbb{R}^3 \times \mathbb{R}^3}
            \frac{\abs{\hat{\varphi}(k)}^2}{\omega(k)}
            \frac{\abs{\hat{\varphi}(k')}^2}{\omega(k')}
            \frac{1}{\omega(k) + \omega(k')}
            (1 - e^{- s (\omega(k) + \omega(k'))})
        \, \mathrm{d}k \, \mathrm{d}k'
        \leq s \norm{ \frac{\hat{\varphi}}{\sqrt{\omega}} }^4.
    \end{align}
    By the exponential martingale property of $\tilde{Q}_\mu$ we have
    \begin{align}\label{eq:Qexp} %
        \mathbb{E}[e^{x_\mu \tilde{Q}_\mu}]
        \leq {\mathbb{E}[e^{2 x_\mu^2 \ev*{\tilde{Q}_\mu}}]}^{1/2}
        \leq e^{x_\mu^2 s \norm{ {\hat{\varphi}} / {\sqrt{\omega}} }^4}.
    \end{align}
    By \cref{eq:QGauss} and \cref{eq:Qexp} we can obtain that 
    \begin{align*}
        \mathbb{E}[e^{b \abs{Q}^2}]
        &\leq \mathbb{E}[e^{2b \abs*{\tilde{Q}}^2 + 2 b \abs{R}^2}] 
        \leq \mathbb{E}[e^{2b \abs*{\tilde{Q}}^2}] e^{6 b \norm{ {\hat{\varphi}} / {\sqrt{\omega}} }^4} \\
        &\leq \frac{e^{6 b \norm{ {\hat{\varphi}} / {\sqrt{\omega}} }^4}}{{(8 \pi b)}^{3 / 2}}
        \int_{\mathbb{R}^3}
            e^{- {\abs{x}^2} / (8 b)}
            \prod_{\mu = 1}^3 \mathbb{E}[e^{x_\mu Q_\mu}]
        \, \mathrm{d}x \\
        &\leq \frac{e^{6 b \norm{ {\hat{\varphi}} / {\sqrt{\omega}} }^4}}{{(8 \pi b)}^{3 / 2}}
        \int_{\mathbb{R}^3}
            e^{- {\abs{x}^2} / (8 b)}
            e^{s \norm{ {\hat{\varphi}} / {\sqrt{\omega}} }^4 \abs{x}^2}
            \, \mathrm{d}x\\
        &= e^{6 b\norm{ {\hat{\varphi}} / {\sqrt{\omega}} }^4}
        {\left(1 - 8 b s \norm{ \frac{\hat{\varphi}}{\sqrt{\omega}} }^4\right)}^{- 3 / 2}.
    \end{align*}
    Then
    \begin{align*}
        \frac{1}{T} \int_0^T
            {\left(1 - 8 b s \norm{\frac{\hat{\varphi}}{\sqrt{\omega}}}^4 \right)}^{- 3 / 2}
        \, \mathrm{d}s
        = \frac{1}{4 b T \norm{{\hat{\varphi}} / {\sqrt{\omega}}}^4}
        \left\{
            {\left(1 - 8 b T \norm{\frac{\hat{\varphi}}{\sqrt{\omega}}}^4\right)}^{- 1 / 2}
            - 1
        \right\} .
    \end{align*}
    Finally we have 
    \begin{align*}
        \mathbb{E}\left[
            e^{\beta \int_0^T (\varrho_s, W) \cdot \,\mathrm{d} B_s}
        \right]
        \leq
        e^{6 \beta^2 T \norm{\hat{\varphi} / \sqrt{\omega}}^4}{\left(
            \frac{
                {\left(1 - 16 \beta^2 T^2 \norm{\hat{\varphi} / \sqrt{\omega}}^4 \right)}^{- 1 / 2} - 1
            }{
                8 \beta^2 T^2 \norm{\hat{\varphi} / \sqrt{\omega}}^4
            }
        \right)}^{1 / 2} .
    \end{align*}
    Then we have \cref{eq:lem8}.
\end{proof}

Next we estimate $\mathbb{E}[e^{a I_1}]$. 
\begin{lemma}\label{lem:9} %
    We have 
    \begin{align}\label{eq:lem9} %
        \mathbb{E}[e^{a I_1}]
        \leq \exp(2 T a \norm{\frac{\hat{\varphi}}{\sqrt{\omega}}}^2) .
    \end{align}
\end{lemma}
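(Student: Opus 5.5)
The plan is to show that $I_1$ is in fact deterministic, so that \cref{eq:lem9} holds with equality and no probabilistic estimate is needed. By definition $\varrho_s=\mathrm{j}_s\tilde{\varphi}(\cdot-B_s)$. Since $\mathrm{j}_s:L^2(\mathbb{R}^3)\to L^2(\mathbb{R}^4)$ is an isometry (the case $t=s$ of \cref{eq:jt}, i.e.\ $\mathrm{j}_s^*\mathrm{j}_s=\mathbbm{1}$), we get pathwise
\begin{align*}
    \norm{\varrho_s}_{L^2}^2=\norm{\tilde{\varphi}(\cdot-B_s)}_{L^2(\mathbb{R}^3)}^2 .
\end{align*}

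Next I would remove the dependence on the Brownian path using translation invariance of Lebesgue measure. Alternatively, in Fourier space translation by $B_s$ is multiplication by the phase $e^{-ik\cdot B_s}$, which has modulus one. Either way,
\begin{align*}
    \norm{\tilde{\varphi}(\cdot-B_s)}_{L^2}^2=\norm{\tilde{\varphi}}_{L^2}^2=\norm{\frac{\hat{\varphi}}{\sqrt{\omega}}}^2 ,
\end{align*}
where the last equality is Plancherel together with $\tilde{\varphi}=(\hat{\varphi}/\sqrt{\omega})^{\vee}$. This norm is finite by \cref{ass:uv}.

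Consequently $I_1=2\int_0^T\norm{\varrho_s}^2\,\mathrm{d}s=2T\norm{\hat{\varphi}/\sqrt{\omega}}^2$ for every path. Taking the expectation of $e^{aI_1}$ gives exactly $\exp(2Ta\norm{\hat{\varphi}/\sqrt{\omega}}^2)$, which is \cref{eq:lem9}.

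There is no real obstacle. The only point needing care is bookkeeping about which $L^2$ norm is meant for the vector of components of $\varrho_s$. If $\norm{\varrho_s}^2$ is read with the three-component convention, the trace of $d_{\mu\nu}$ has already been absorbed in \cref{lem:7}, so each component contributes $\norm{\tilde{\varphi}}^2$. In that case I would check that the constant $2$ in $I_1$ matches, adjusting the bound by a harmless factor if necessary.
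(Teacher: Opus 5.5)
Your argument is correct and is exactly the paper's: the paper simply observes that $I_1 = 2\int_0^T \norm{\varrho_s}^2\,\mathrm{d}s = 2T\norm{\hat{\varphi}/\sqrt{\omega}}^2$ is deterministic, and your use of the isometry of $\mathrm{j}_s$, translation invariance and Plancherel supplies the justification, so the bound holds with equality. Your closing hedge is unnecessary, since $\varrho_s$ is scalar-valued and the factor $2$ in $I_1$ is exactly the trace $\sum_{\mu} d_{\mu\mu}(k) = 2$, which arises when the $\delta_{\mu\nu}$ bracket term from the Ito formula in \cref{lem:7} is contracted with $d_{\mu\nu}(k)$.
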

\begin{proof}
    We can see that $I_1 = 2 \int_0^T \norm{\varrho_s}^2 \, \mathrm{d}s = 2 T \norm{\hat{\varphi} / \sqrt{\omega}}^2$.
    Then \cref{eq:lem9} follows. 
\end{proof}

Finally we estimate $\mathbb{E}[e^{a I_2}]$. 
\begin{lemma}\label{lem:10} %
    We have 
    \begin{align}\label{eq:lem10} %
        \abs{\mathbb{E}[e^{a I_2}]}
        \leq \exp(
            a \norm{\frac{\hat{\varphi}}{\omega}}^2
            + 2 a T \left(
                2 \norm{\frac{\hat{\varphi}}{\omega}}^2
                + \frac{3}{2} \norm{\frac{\hat{\varphi}}{\sqrt{\omega}}}^2
                + \frac{1}{4} \norm{\hat{\varphi}}^2
            \right)
        ) .
    \end{align}
\end{lemma}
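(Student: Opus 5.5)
The plan is to show that $I_2$ is bounded by a \emph{deterministic} constant, at least in real part, after all its stochastic pieces are rewritten as Riemann integrals. This matches the shape of \cref{eq:lem10}, which is linear in $a$ and has no $T^2$ term. Exponentiating then gives the claim directly, with no martingale argument needed. The absolute value on the left-hand side suggests working with the formal complex-valued representation in $k$. So I would bound $\abs{e^{aI_2}}\le e^{a\,\mathrm{Re}\, I_2}$ and then control $\mathrm{Re}\, I_2$ pathwise.

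First I would write $I_2$ explicitly in the $(k_0,k)$ Fourier variables, as was done for $Q_\mu$ in the proof of \cref{lem:8}. The $k$-dependence of $\varrho_s(k)$ then carries the phase $e^{-ik\cdot B_s}$ and the factor $\hat\varphi(k)/\sqrt{\omega(k)}$. The inner product $\varrho_s(k)\,\overline{(k\cdot W(s))}$ becomes, after integrating out the $k_0$-variable using \cref{eq:jt}, an expression of the form
\[
\int \frac{\abs{\hat\varphi(k)}^2}{\omega(k)}\,\frac{1}{\abs{k}^2}\int_0^T\!\!\int_0^s e^{-(s-t)\omega(k)}e^{-ik\cdot(B_s-B_t)}(k\cdot \mathrm{d}B_t)(k\cdot \mathrm{d}B_s)\,\mathrm{d}k .
\]
The key observation is that the factor $k\cdot \mathrm{d}B_s$ paired with the phase $e^{-ik\cdot B_s}$ is an exact differential up to drift. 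By It\^o's formula,
\[
\mathrm{d}\bigl(e^{-ik\cdot B_s}\bigr)=e^{-ik\cdot B_s}\Bigl(-i\,k\cdot \mathrm{d}B_s-\tfrac{\abs{k}^2}{2}\,\mathrm{d}s\Bigr),
\]
and the same identity applies to the inner variable $t$ through $G_t(k)$ as in \cref{lem:8}. The two $k\cdot \mathrm{d}B$ factors exactly absorb the prefactor $1/\abs{k}^2$.

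Next I would apply integration by parts (It\^o's product rule) in $s$ and then in $t$, using the relation between forward and backward integrals derived in \cref{lem:8}. This converts $I_2$ into four kinds of term.
\begin{enumerate}
\item A boundary term at $s=T$ (and $t=s$). Its modulus is bounded by $\int \abs{\hat\varphi}^2/\omega^2 = \norm{\hat\varphi/\omega}^2$, which produces the $a\norm{\hat\varphi/\omega}^2$ term.
\item Riemann integrals in which $\partial_s e^{-(s-t)\omega}$ brings down a factor $\omega$. Integrating $e^{-(s-t)\omega}$ in $t$ gives at most $1/\omega$, so each such term contributes $T\norm{\hat\varphi/\omega}^2$.
\item Drift terms carrying $\abs{k}^2/2$. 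These cancel one $1/\abs{k}^2$ and contribute $T\norm{\hat\varphi}^2/4$, together with $T\norm{\hat\varphi/\sqrt\omega}^2$-type pieces.
\item The quadratic-covariation term $\mathrm{d}B_t\,\mathrm{d}B_s$ at $t=s$. This gives $\int_0^T \abs{k}^2\abs{k}^{-2}\abs{\hat\varphi}^2/\omega\,\mathrm{d}s$, which is at most $T\norm{\hat\varphi/\sqrt\omega}^2$.
\end{enumerate}
Since every phase factor has modulus one, summing these with the factor $2$ from the definition of $I_2$ should give exactly the constant in \cref{eq:lem10}.

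The main obstacle is the first step. The integrand is non-adapted, so the double integral must be given a rigorous meaning and the manipulation justified. I would handle this as in \cref{lem:8}: fix $s$, rewrite the inner $\mathrm{d}B_t$-integral as a backward integral via time reversal $\hat B_u=B_s-B_{s-u}$, and carry out the It\^o product rule on Riemann sums before passing to the limit. The remaining care is to make sure that, after integrating by parts, no genuinely stochastic integral survives. If one does survive, I would show that its real part has zero contribution by symmetry in $k\mapsto -k$, using $\overline{\hat\varphi(k)}=\hat\varphi(-k)$.
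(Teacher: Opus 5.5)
The overall strategy is the paper's: two It\^o integrations by parts (the paper's $Z_u$ and $R_s$), disposal of the purely imaginary $k\cdot B_T$ term by $k\mapsto-k$, and deterministic bounds on what remains. There is a genuine gap in your item~1, exactly where the heuristic that the two factors $k\cdot\mathrm{d}B$ absorb $1/\abs{k}^2$ breaks down. Integration by parts trades $k\cdot\mathrm{d}B$ for the differential of a phase, and the endpoint values of that phase carry no power of $k$. The endpoint contribution to $I_2$ is
\[
2\int_{\mathbb{R}^3}\frac{\abs{\hat{\varphi}(k)}^2}{\omega(k)\abs{k}^2}\bigl(R_T(k)-1\bigr)\,\mathrm{d}k ,
\]
with $R_T$ as in the paper, and $R_T-1$ contains $e^{-T\omega(k)}e^{-ik\cdot B_T}-1$. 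The fact that phases have modulus one only yields $\abs{R_T-1}\le 2+\abs{k}^2/(2\omega)$. But $\int\abs{\hat\varphi}^2\omega^{-3}\,\mathrm{d}k=\infty$ is compatible with \cref{ass:uv}, for example when $\hat\varphi$ is the indicator of the unit ball. No sharper modulus bound exists either. For $\abs{k}\ll\min(1,1/T)$ one has $R_T=e^{-ik\cdot B_T}+O(T\abs{k})$. Hence this term is essentially $-2\int\abs{\hat\varphi}^2\omega^{-3}(1-\cos k\cdot B_T)\,\mathrm{d}k$, which tends to $-\infty$ like $-c\log\abs{B_T}$ when $\hat\varphi(0)\neq0$. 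A term-by-term modulus bound therefore cannot prove the lemma.

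The fix is the one-sidedness you announced at the start and then dropped. Only an upper bound on $\mathrm{Re}\,I_2$ is needed, and the endpoint term enters with a positive weight. Moreover $\mathrm{Re}\,R_T-1\le\abs{R_T}-1\le\abs{k}^2/(2\omega)$, using $\abs{R_s}\le1+\abs{k}^2/(2\omega)$. This bounds the endpoint term above by $\norm{\hat\varphi/\omega}^2$, which is exactly the $a\norm{\hat\varphi/\omega}^2$ in \cref{eq:lem10}. The remaining Riemann-integral terms carry a factor $\omega$ or $\abs{k}^2$, so they can be bounded in modulus as you propose. For this to work you must keep the constant $-1$ attached to $R_T$ and track signs carefully. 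With the paper's $U_T$ one actually has $I_2=-2U_T+2i\int\abs{\hat\varphi}^2\omega^{-1}\abs{k}^{-2}\,k\cdot B_T\,\mathrm{d}k$. The paper's bound on the first term of $\abs{U_T}$ is valid only in this one-sided form, so the same gap is present in the printed proof.

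There is also a shorter route. It\^o's formula applied to the longitudinal part of $\norm{W(T)}^2$ gives $I_2=T\norm{\hat\varphi/\sqrt{\omega}}^2-\sum_{\mu,\nu}(W_\mu(T),\frac{k_\mu k_\nu}{\abs{k}^2}W_\nu(T))\le T\norm{\hat\varphi/\sqrt{\omega}}^2$, which already implies the lemma for $a\ge0$.

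Finally, your concern about non-adaptedness is unnecessary here. For fixed $k$ the inner integral equals $e^{-s\omega}e^{-ik\cdot B_s}\int_0^s e^{t\omega}e^{ik\cdot B_t}\,k\cdot\mathrm{d}B_t$. So $I_2$ is an ordinary It\^o integral, and no backward integrals are needed.
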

\begin{proof}
    Let
    \begin{align*}
        &Y_s(k)
        = {\varrho_s}(k) k \cdot W(s)
        = \int_0^s
            \frac{\abs{\hat{\varphi}(k)}^2}{\omega(k)}
            e^{-(s-u)\omega(k)}
            e^{-ik\cdot (B_s-B_u)}
        (k \cdot \, \mathrm{d}B_u) , \\
        &Z_u
        = e^{- (s - u) \omega(k)}
        e^{- i k \cdot (B_s - B_u)} ,
        \qquad 0 \leq u \leq s .
    \end{align*}
    Then
    \begin{align*}
        I_2
        = - \int_{\mathbb{R}^3}
            \frac{2}{\abs{k}^2}
            \int_0^T
                Y_s(k)
            (k \cdot \, \mathrm{d}B_s)
        \, \mathrm{d}k .
    \end{align*}
    Since
$
        e^{- i k \cdot (B_s - B_u)}
        = e^{- i k \cdot B_s} e^{i k \cdot B_u}$, 
    we apply Ito formula to
    \begin{align*}
        F(u, x)
        = e^{- (s - u) \omega(k)} e^{- i k \cdot (B_s - x)} .
    \end{align*}
    Since
$        \partial_u F
        = \omega(k) F$,
$\nabla_x F = i k F$ and 
$\Delta_x F = -\abs{k}^2 F$,
    Ito formula yields
    \begin{align*}
        dZ_u
        &= \partial_u F(u, B_t) \, \mathrm{d}t
        + \nabla_x F(u, B_t) \cdot \, \mathrm{d}B_u
        + \frac{1}{2} \Delta_x F(u, B_u) \mathrm{d}u \\
        &= \left(\omega(k) - \frac{\abs{k}^2}{2}\right) Z_u \mathrm{d}u
        + i Z_u k \cdot \, \mathrm{d}B_u .
    \end{align*}
    Hence
    \begin{align*}
        Z_u k \cdot \, \mathrm{d}B_u
        = \frac{1}{i} \mathrm{d}Z_u
        - \frac{1}{i} \left(\omega(k) - \frac{\abs{k}^2}{2}\right) Z_u \mathrm{d}u .
    \end{align*}
    Integrating over $[0,s]$, we obtain
    \begin{align*}
        \int_0^s
            e^{- (s - u) \omega(k)}
            e^{- i k \cdot (B_s - B_u)}
        k \cdot \, \mathrm{d}B_u 
        &= \frac{1}{i} \left(Z_s - Z_0\right)
        - \frac{1}{i}
        \int_0^s
            \left(\omega(k) - \frac{\abs{k}^2}{2} \right) Z_u
        \mathrm{d}u \\
       & = i (R_s - 1),
    \end{align*}
    where we set
    \begin{align*}
        R_s = e^{- s \omega(k)} e^{- i k \cdot B_s}
        + \int_0^s
            \left(\omega(k) - \frac{\abs{k}^2}{2}\right)
            e^{- (s - u) \omega(k)}
            e^{- i k \cdot (B_s - B_u)}
        \, \mathrm{d}u .
    \end{align*}
    Thus we obtain the identity:
    \begin{align*}
        I_2
        = - 2 \int_{\mathbb{R}^3}
            \frac{\abs{\hat{\varphi}(k)}^2}{\abs{k}^2\omega(k)}
            \int_0^T i (R_s - 1) (k \cdot \, \mathrm{d}B_s)
        \,\mathrm{d}k .
    \end{align*}
    We shall compute $\int_0^T i (R_s - 1) (k \cdot \, \mathrm{d}B_s)$. 
    Since
    \begin{align*}
        \mathrm{d}R_s
        = - i R_s(k \cdot \, \mathrm{d}B_s)
        - \left(\omega(k) + \frac{\abs{k}^2}{2} \right) R_s \, \mathrm{d}s
        + \left(\omega(k)-\frac{\abs{k}^2}{2}\right) \, \mathrm{d}s ,
    \end{align*}
    we have
    \begin{align*}
        i R_s (k \cdot \, \mathrm{d}B_s)
        = - \mathrm{d}R_s
        - \left(\omega(k) + \frac{\abs{k}^2}{2} \right) R_s \, \mathrm{d}s
        + \left(\omega(k) - \frac{\abs{k}^2}{2} \right) \, \mathrm{d}s .
    \end{align*}
    Hence
    \begin{align*}
        &\int_0^T i (R_s - 1) (k \cdot \, \mathrm{d}B_s) \\
        &= - \left(R_T - 1 \right) 
        - \int_0^T
            \left(\omega(k) + \frac{\abs{k}^2}{2} \right) R_s
        \, \mathrm{d}s
        + T \left(\omega(k) - \frac{\abs{k}^2}{2} \right)
        - i k \cdot B_T .
    \end{align*}
    We obtain the decomposition: 
    \begin{align*}
        I_2
        = 2 U_T - 2 i
        \int_{\mathbb{R}^3}
            \frac{\abs{\hat{\varphi}(k)}^2}{\omega(k) \abs{k}^2} k \cdot B_T
        \, \mathrm{d}k ,
    \end{align*}
    where
    \begin{align*}
        U_T
        =& - \int_{\mathbb{R}^3}
            \frac{\abs{\hat{\varphi}(k)}^2}{\omega(k) \abs{k}^2}
            \left({R_t}(k) - 1 \right)
        \, \mathrm{d}k 
        - \int_{\mathbb{R}^3}
            \int_0^T
                \frac{\abs{\hat{\varphi}(k)}^2}{\omega(k) \abs{k}^2}
                \left(\omega(k) + \frac{\abs{k}^2}{2} \right)
                R_s(k)
            \, \mathrm{d}s
        \, \mathrm{d}k \\
        &+ T \int_{\mathbb{R}^3}
            \frac{\abs{\hat{\varphi}(k)}^2}{\omega(k) \abs{k}^2}
            \left(\omega(k) - \frac{\abs{k}^2}{2}\right)
        \, \mathrm{d}k . 
    \end{align*}
    We note that $U_T$ involves no stochastic integrals. 
    Since
    \begin{align*}
        \abs{R_s(k)}
        \leq 1 + \frac{\abs{k}^2}{2 \omega(k)} ,
    \end{align*}
    we have
    \begin{align*}
        \abs{U_T}
        &\leq \int_{\mathbb{R}^3}
            \frac{\abs{\hat{\varphi}(k)}^2}{\abs{k}^2}
            \frac{\abs{k}^2}{2 {\omega(k)}^2}
        \, \mathrm{d}k
        + \int_{\mathbb{R}^3}
            \int_0^T
                \frac{\abs{\hat{\varphi}(k)}^2}{\omega(k) \abs{k}^2}
                \left(\omega(k) + \frac{\abs{k}^2}{2} \right)
                \left(1 + \frac{\abs{k}^2}{2\omega(k)} \right) 
            \, \mathrm{d}t
        \, \mathrm{d}k \\
        &\quad + T \int_{\mathbb{R}^3}
            \frac{\abs{\hat{\varphi}(k)}^2}{\omega(k) \abs{k}^2}
            \abs{\left(\omega(k) - \frac{\abs{k}^2}{2} \right)}
        \, \mathrm{d}k \\
        &\leq \frac{1}{2} \norm{\frac{\hat{\varphi}}{\omega}}^2
        + T \left(2 \norm{\frac{\hat{\varphi}}{\omega}}^2
        + \frac{3}{2} \norm{\frac{\hat{\varphi}}{\sqrt{\omega}}}^2
        + \frac{1}{4} \norm{\hat{\varphi}}^2 \right)
    \end{align*}
    Then we have
    \begin{align*}
        \abs{\mathbb{E}[e^{a I_2}]}
        &= \abs{\mathbb{E}\left[
            \exp(2 a U_T
                - 2 i
                \int_{\mathbb{R}^3}
                    \frac{\abs{\hat{\varphi}(k)}^2}{\abs{k}^2} k \cdot B_t
                \, \mathrm{d}k
            )
        \right]}\\
        &\leq \mathbb{E}[\exp(2 a U_T)] 
        \leq \exp(
            a \norm{\frac{\hat{\varphi}}{\omega}}^2
            + 2 a T \left(2 \norm{\frac{\hat{\varphi}}{\omega}}^2
            + \frac{3}{2} \norm{\frac{\hat{\varphi}}{\sqrt{\omega}}}^2
            + \frac{1}{4} \norm{\hat{\varphi}}^2 \right)
        )
    \end{align*}
    and the lemma is proven. 
\end{proof}

{Proof of \cref{lem:6}}

\begin{proof}
    Combining \cref{eq:lem8}, \cref{eq:lem9} and \cref{eq:lem10}, we see that 
    \begin{align*}
        \mathbb{E}[e^{\varepsilon X}]
        \leq& {\left(
            \frac{
                {\left(
                    1 - 16 {(2 a)}^2 T^2 \norm{{\hat{\varphi}} / {\sqrt{\omega}}}^4
                \right)}^{- 1 / 2} - 1
            }{
                8 {(2 a)}^2 T^2 \norm{ {\hat{\varphi}} / {\sqrt{\omega}}}^4
            }
        \right)}^{1 / 6}
        \exp(
            \frac{1}{4} + \frac{a}{3} \norm{\frac{\hat{\varphi}}{\omega}}^2
        ) \\
        &\times \exp(
            2 {(2 a)}^2 T \norm{\frac{\hat{\varphi}}{\sqrt{\omega}}}^4
        )
        \exp(
            2 a T
            \left(
                \frac{2}{3} \norm{\frac{\hat{\varphi}}{\omega} }^2
                + \frac{5}{6} \norm{\frac{\hat{\varphi}}{\sqrt{\omega}}}^2
                + \frac{1}{12} \norm{\hat{\varphi}}^2
            \right)
        ) \\
        =& \mathrm{RHS of } \cref{eq:lem6.1},
    \end{align*}
    where $a = 3 \varepsilon \alpha_*^2$.
    Note that
    ${\left(1 - 16 {(2a)}^2 T^2 \norm{\hat{\varphi} / \sqrt{\omega}}^4 \right)}^{- 1 / 2}>1$ if and only if $\varepsilon < 1 / (\triangle T)$.
    Then the proof is complete.
\end{proof}

We can obtain an alternative bound from \cref{eq:Qbound1}.
\begin{lemma}
    Let
    \begin{align*}
        \tilde{\triangle}
        = 12 \sqrt{2} \alpha_*^2 \norm{\frac{\hat{\varphi}}{\omega^{3/4}}}^2 .
    \end{align*}
    Suppose that
    \begin{align*}
        0 \leq \varepsilon < \frac{1}{\tilde{\triangle} \sqrt{T}}.
    \end{align*}
    Then
    \begin{align*}
        \mathbb{E}[e^{\varepsilon X}]
        \leq \tilde {C_1}(\varepsilon) e^{C_2 \varepsilon T},
    \end{align*}
    where $C_2$ is the same constant as in \cref{lem:6}, and
    \begin{align*}
        {\tilde{C}_1}(\varepsilon)
        = {\left(
            1 
            - \varepsilon^2 T \tilde{\triangle}^2
        \right)}^{- 1 / 4}
        \exp(
            \frac{1}{4}
            + \varepsilon \alpha_*^2 \norm{\frac{\hat{\varphi}}{\omega}}^2
        ) .
    \end{align*}
\end{lemma}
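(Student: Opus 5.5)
Rather than redo the whole argument, I will reuse the proof of \cref{lem:6} and change only the estimate of $\mathbb{E}[e^{a I_{\mathrm{sing}}}]$ from \cref{lem:8}. The factor $e^{1/4}$ and the reduction to $\mathbb{E}[e^{\varepsilon\alpha_*^2\norm{K_T}^2}]$ stay the same. So does the Hölder splitting of \cref{lem:7} into $I_{\mathrm{sing}}$, $I_1$, $I_2$ with $a=3\varepsilon\alpha_*^2$, and so do the bounds \cref{eq:lem9} and \cref{eq:lem10}. Together these produce $e^{1/4}\exp(\varepsilon\alpha_*^2\norm{\hat\varphi/\omega}^2)$ and the linear-in-$T$ exponent $e^{C_2\varepsilon T}$, up to the exact bookkeeping of constants, which I will match to $C_2$ at the end. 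The only new ingredient is a bound on the prefactor that depends on $T$ through $\sqrt T$ rather than $T$.

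In the proof of \cref{lem:8} I follow the same route up to \cref{eq:Qbound1}. That is, I apply Jensen and the exponential martingale bound with $\beta=2a$, write $(\varrho_s,W(s))^2=\abs{Q}^2$, and decompose $Q=\tilde Q+R$ with $\abs{R}^2\le 3\norm{\hat\varphi/\sqrt\omega}^4$. For the quadratic variation of the backward integral $\tilde Q_\mu$, I replace the crude bound $s\norm{\hat\varphi/\sqrt\omega}^4$ by the first expression in \cref{eq:Qbound1}, dropping the factor $1-e^{-s(\cdots)}$:
\begin{align*}
\langle\tilde Q_\mu\rangle\le \int\!\!\int \frac{\abs{\hat\varphi(k)}^2\abs{\hat\varphi(k')}^2}{\omega(k)\omega(k')(\omega(k)+\omega(k'))}\,\mathrm{d}k\,\mathrm{d}k'
\le \frac12\norm{\frac{\hat\varphi}{\omega^{3/4}}}^4 .
\end{align*}
The second inequality comes from $\omega+\omega'\ge 2\sqrt{\omega\omega'}$. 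This bound is uniform in $s$, so the $s$-average in \cref{lem:8} becomes trivial.

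The Gaussian-integral step \cref{eq:QGauss} with $b=2\beta^2T$ then gives
\begin{align*}
\mathbb{E}[e^{b\abs{Q}^2}]\le e^{6b\norm{\hat\varphi/\sqrt\omega}^4}\Bigl(1-4b\norm{\hat\varphi/\omega^{3/4}}^4\Bigr)^{-3/2}.
\end{align*}
Taking square roots for the martingale step, and then the $1/3$ power from \cref{lem:7}, the prefactor becomes a power of $(1-8\beta^2T\norm{\hat\varphi/\omega^{3/4}}^4)$. Since $\beta=6\varepsilon\alpha_*^2$, this quantity equals $1-288\,\varepsilon^2\alpha_*^4T\norm{\hat\varphi/\omega^{3/4}}^4=1-\varepsilon^2T\tilde\triangle^2$. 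The exponent is $-\tfrac{3}{2}\cdot\tfrac12\cdot\tfrac13=-\tfrac14$, which is exactly the power in $\tilde C_1(\varepsilon)$, and the hypothesis $\varepsilon<1/(\tilde\triangle\sqrt T)$ is precisely the condition that this quantity is positive. The remaining exponential $e^{6b\norm{\hat\varphi/\sqrt\omega}^4}$ has exponent linear in $T$ and $\varepsilon^2\le\varepsilon\cdot\mathrm{const}$. Collecting terms as in the proof of \cref{lem:6} absorbs it into $e^{C_2\varepsilon T}$, and this is where the $72\alpha_*^4\norm{\hat\varphi/\sqrt\omega}^4$ term of $C_2$ comes from.

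The main obstacle I expect is the constant bookkeeping in this absorption: making sure the $\varepsilon^2$-term truly fits inside $C_2\varepsilon$ (possibly using $\varepsilon\tilde\triangle\sqrt T<1$). A second point to check is whether the factor $2$ in $2b\abs{\tilde Q}^2$ and the square root from the martingale step combine to give exactly $12\sqrt2$ in $\tilde\triangle$. If the constants are off by a harmless factor, I would adjust $\tilde\triangle$ accordingly, since the structural point is the $T$-uniform quadratic-variation bound above.
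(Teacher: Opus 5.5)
This is correct and is essentially the paper's own proof. The paper likewise replaces the $s$-dependent bound in \cref{eq:Qbound1} by the uniform bound $\langle \tilde{Q}_\mu \rangle \le \frac12 \norm{\hat{\varphi}/\omega^{3/4}}^4$, deduces $\mathbb{E}[e^{2b\abs{\tilde{Q}}^2}] \le (1-4b\norm{\hat{\varphi}/\omega^{3/4}}^4)^{-3/2}$, and then finishes as in \cref{lem:6}; your checks that $4b\norm{\hat{\varphi}/\omega^{3/4}}^4=\varepsilon^2 T\tilde{\triangle}^2$ and that the exponent is $-1/4$ are right. The absorption you worry about, of $e^{72\varepsilon^2\alpha_*^4 T\norm{\hat{\varphi}/\sqrt{\omega}}^4}$ into $e^{C_2\varepsilon T}$ (and likewise $e^{\varepsilon/4}\le e^{1/4}$), is done in the paper exactly as in \cref{lem:6}, i.e.\ tacitly using $\varepsilon\le 1$; the hypothesis $\varepsilon\tilde{\triangle}\sqrt{T}<1$ does not give this, although it does bound that exponent by the $T$-independent constant $\norm{\hat{\varphi}/\sqrt{\omega}}^4/(4\norm{\hat{\varphi}/\omega^{3/4}}^4)$, so as written the bound is justified for $\varepsilon\le 1$, and otherwise the constants in $\tilde{C}_1(\varepsilon)$ must be adjusted.
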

\begin{proof}
    We first obtain the following bound:
    \begin{align*}
        \ev*{\tilde{Q}_\mu}
        \leq \int_{\mathbb{R}^3 \times \mathbb{R}^3}
            \frac{\abs{\hat{\varphi}(k)}^2}{\omega(k)}
            \frac{\abs{\hat{\varphi}(k')}^2}{\omega(k')}
            \frac{1 - e^{- s (\omega(k) + \omega(k'))}}{\omega(k) + \omega(k')}
        \, \mathrm{d}k \, \mathrm{d}k'
        \leq \frac{1}{2} \norm{\frac{\hat{\varphi}}{\omega^{3/4}}}^4 .
    \end{align*}
    In particular, this bound is uniform in $s$.
    Proceeding as in the proof of \cref{eq:lem8}, the exponential martingale property yields
    \begin{align*}
        \mathbb{E}\left[e^{2 b \abs{\tilde{Q}}^2}\right]
        \leq {\left(1 - 4 b \norm{\frac{\hat{\varphi}}{\omega^{3/4}}}^4
        \right)}^{- 3 / 2} .
    \end{align*}
    Since the right-hand side is independent of $s$, the assertion now follows by proceeding as in the proof of \cref{lem:6}.
\end{proof}

\subsection{Lower bounds of exponential decay}
The exponential decay of bound states is a fundamental problem in the spectral theory of Schr\"{o}dinger operators and quantum field models.
While upper bounds on the decay of bound states have been extensively studied, lower bounds are in general more delicate. In particular, the standard approach based on the Agmon distance does not seem to be directly applicable to the lower bounds considered here.
We therefore adopt a different approach to derive lower bounds on the exponential decay of bound states.

\begin{lemma}
    Suppose that $1 / p + 1 / q = 1$ and 
    \begin{align}\label{eq:lem12.1} %
        \frac{q}{p} < \frac{1}{\triangle T}. 
    \end{align}
    Then    
    \begin{align}\label{eq:lem12.2} %
        {(\mathbbm{1}, {\varphi_\mathrm{g}}(x))}_\mathcal{F} 
        \geq \frac{e^{{C_2} T}}{{C_3}(q / p)}
        {\mathbb{E}\left[
            \mathbbm{1}_{S_T}
            e^{- \frac{1}{p} \int_0^T (V(B_t + x) - E_\alpha) \, \mathrm{d}t}
            {m(B_T + x)}^{1 / p}
        \right]}^p ,
    \end{align}
    where ${C_3}(\cdot)$ is given by \cref{eq:lem6.2}:
    \begin{align}
        {C_3}\left(\frac{q}{p} \right)
        = {\left(
            \frac{
                {\left(1 - {(q / p)}^2 T^2 \triangle^2 \right)}^{- 1 / 2} - 1
            }{
                {(q / p)}^2 T^2 \triangle^2 / 2
            }
        \right)}^{1 / (6 q / p)}
        e^{1 / (4 q / p)}
        e^{\alpha_*^2 \norm{{\hat{\varphi}} / {\omega}}^2} .
    \end{align}
\end{lemma}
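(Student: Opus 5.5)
The plan is a reverse H\"older argument. It combines the lower bound obtained at the start of the subsection on exponential moments with the exponential moment estimate of \cref{lem:6}, taken at $\varepsilon = q/p$. Hypothesis \cref{eq:lem12.1} is exactly what makes $\varepsilon = q/p$ admissible in \cref{lem:6}.

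First I would start from the inequality already derived after the definition of $S_T$:
\begin{align*}
    {(\mathbbm{1}, {\varphi_\mathrm{g}}(x))}_\mathcal{F}
    \geq \mathbb{E}\left[ F\, e^{-X} \right],
    \qquad
    F = \mathbbm{1}_{S_T} e^{- \int_0^T (V(B_t + x) - E_\alpha)\, \mathrm{d}t} m(B_T + x) \geq 0 .
\end{align*}
This bound uses Key lemma I (\cref{lem:3}), the strict positivity of $m$ from \cref{lem:2}, the bound $g_\infty<\infty$ from \cref{lem:4}, and the replacement of $\tilde X$ by $X$ on $S_T$. Next I would write $F^{1/p} = (F e^{-X})^{1/p} e^{X/p}$ and apply H\"older's inequality with exponents $p$ and $q$:
\begin{align*}
    \mathbb{E}\left[F^{1/p}\right]
    \leq \mathbb{E}\left[F e^{-X}\right]^{1/p} \, \mathbb{E}\left[e^{\frac{q}{p} X}\right]^{1/q} .
\end{align*}
Raising to the power $p$ and rearranging gives
\begin{align*}
    \mathbb{E}\left[F e^{-X}\right]
    \geq \frac{\mathbb{E}\left[F^{1/p}\right]^p}{\mathbb{E}\left[e^{\frac{q}{p} X}\right]^{p/q}} .
\end{align*}
Here $F^{1/p} = \mathbbm{1}_{S_T} e^{-\frac1p\int_0^T(V(B_t+x)-E_\alpha)\,\mathrm{d}t}\, m(B_T+x)^{1/p}$, which is precisely the expectation appearing on the right-hand side of \cref{eq:lem12.2}.

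It remains to control the denominator. Set $\varepsilon = q/p$. Then $p/q = 1/\varepsilon$, so the denominator equals $\mathbb{E}[e^{\varepsilon X}]^{1/\varepsilon}$. Condition \cref{eq:lem12.1} is exactly $\varepsilon < 1/(\triangle T)$, so the second assertion of \cref{lem:6} applies and gives the bound $\mathbb{E}[e^{\varepsilon X}]^{1/\varepsilon} \le C_3(q/p)\,e^{C_2T}$ in terms of \cref{eq:lem6.2}. Substituting this into the reverse H\"older bound produces an inequality of the form \cref{eq:lem12.2}, with the prefactor read off directly from this bound.

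All the genuine difficulty, namely the exponential moment of the non-adapted double stochastic integral, has already been absorbed into \cref{lem:6}. Here the only points needing care are two. First, the denominator must be finite and nonzero; this is guaranteed by $\varepsilon<1/(\triangle T)$ and by $X\ge 0$. Second, the exponential factor of \cref{lem:6} must enter the final prefactor with the sign actually supplied by that lemma.

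\textbf{Main obstacle.} I expect the main obstacle to be this last point. Inverting the upper bound $C_3(q/p)e^{C_2T}$ from \cref{lem:6} naturally yields $e^{-C_2T}/C_3(q/p)$, which is weaker than the stated prefactor $e^{C_2T}/C_3(q/p)$. To obtain the inequality exactly as worded, one needs \cref{lem:6} in the form $\mathbb{E}[e^{\varepsilon X}]^{1/\varepsilon}\le C_3(\varepsilon)e^{-C_2T}$. I would therefore first re-check the constant bookkeeping in the proof of \cref{lem:6}, in particular how the factors from \cref{lem:8,lem:9,lem:10} combine, before relying on that lemma in this form.
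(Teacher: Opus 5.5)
Your argument is the paper's proof step for step. It uses the same starting inequality, the same reverse H\"older bound with $f=F$ and $g=e^{-X}$, and \cref{lem:6} at $\varepsilon=q/p$.

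The obstacle you flag is real, and the paper does not overcome it. Its proof simply asserts $\mathbb{E}[e^{\frac qpX}]^{p/q}\le C_3(q/p)e^{-C_2T}$ ``by \cref{lem:6}'', whereas \cref{lem:6} gives $C_3(\varepsilon)e^{+C_2T}$. This is a sign slip. What your argument proves,
\[
(\mathbbm{1},\varphi_{\mathrm g}(x))_{\mathcal F}\ge\frac{e^{-C_2T}}{C_3(q/p)}\,\mathbb{E}\Bigl[\mathbbm{1}_{S_T}e^{-\frac1p\int_0^T(V(B_t+x)-E_\alpha)\,\mathrm{d}t}m(B_T+x)^{1/p}\Bigr]^p ,
\]
is the correct form of the lemma. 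The same correction applies to \cref{eq:thm14}. In \cref{maincoro} the factor $e^{\pm C_2T}$ is harmless, since $T=o(|x|^{n+1})$.

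Do not try to re-derive \cref{lem:6} with $e^{-C_2T}$: that bound is false in general. Since $X\ge0$, one has $\mathbb{E}[e^{\varepsilon X}]^{1/\varepsilon}\ge1$ for every admissible $\varepsilon$. Now take $\varepsilon=1/(2\triangle T)$. Then $C_3(\varepsilon)=R^{\triangle T/3}e^{\triangle T/2}e^{\alpha_*^2\|\hat\varphi/\omega\|^2}$ with $R=8(2/\sqrt3-1)<e^{1/4}$. Writing $u=\alpha_*^2\|\hat\varphi/\sqrt\omega\|^2$, so that $\triangle=24u$, one gets
\[
\log C_3(\varepsilon)-C_2T<9u(1-8u)\,T+\alpha_*^2\|\hat\varphi/\omega\|^2\bigl(1-\tfrac92T\bigr).
\]
This is negative whenever $u\ge1/8$ and $T\ge1$, so in that regime $C_3(\varepsilon)e^{-C_2T}<1$.

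One further caveat, which you inherit from the paper rather than introduce. After shifting the starting point, the denominator coming from \cref{lem:3} is $m(B_T+x)$, not $m(B_T)$. So the replacement $\tilde X\le X$ on $S_T$ requires the constant $\inf_{|y-x|\le1}m(y)$, not the infimum over the unit ball at the origin. Hence $\alpha_*$, $\triangle$, $C_2$ and $C_3$ in fact depend on $x$.
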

\begin{proof}
    Shifting the starting point of Brownian motion to the origin, we have
    \begin{align*}
        {(\mathbbm{1}, {\varphi_\mathrm{g}}(x))}_\mathcal{F} 
        \geq \mathbb{E}\left[
            \mathbbm{1}_{S_T}
            e^{- \int_0^T (V(B_t + x) - E_\alpha ) \, \mathrm{d}t}
            e^{- X}
            m(B_T + x)
        \right] .
    \end{align*}
    In general by the H\"{o}lder inequality we obtain that for $1 / p + 1 / q = 1$
    \begin{align*}
        \mathbb{E}[f^{1 / p}]
        = \mathbb{E}[{(f g)}^{1 / p} g^{- 1 / p}]
        \leq {\mathbb{E}[f g]}^{1 / p}
        {\mathbb{E}[g^{- q / p}]}^{1 / q}
    \end{align*}
    and then
    \begin{align}\label{eq:holder} %
        \mathbb{E}[f g]
        \geq \frac{
            {\mathbb{E}[f^{1 / p}]}^p
        }{
            {\mathbb{E}[g^{- q / p}]}^{p / q}
        } .
    \end{align}
    By \cref{eq:holder}, we have the lower bound:
    \begin{align*}
        {(\mathbbm{1}, {\varphi_\mathrm{g}}(x))}_\mathcal{F} 
        \geq \frac{
            {\mathbb{E}[
                \mathbbm{1}_{S_T}
                e^{- \frac{1}{p} \int_0^T (V(B_t+x) - E_\alpha ) \, \mathrm{d}t}
                {m(B_T + x)}^{1 / p}
            ]}^p
        }{
            {\mathbb{E}[
                e^{{\frac{q}{p}} X}
            ]}^{p / q}
        } .
    \end{align*}
    By \cref{eq:lem12.1} and \cref{lem:6},
    \begin{align*}
        {\mathbb{E}\left[e^{\frac{q}{p} X}\right]}^{p / q}
        \leq {C_3}\left(\frac{q}{p}\right) e^{- {C_2} T}, 
    \end{align*}
    Then \cref{eq:lem12.2} follows. 
\end{proof}

We derive an exponential lower bound of the ground state of $H_\alpha$.
First we recall some estimates of probabilities of sets of paths. 
Let $a > 0$, $c < d$, and 
\begin{align*}
    {P_j}(a, [c, d], T) = \left\{\sup_{0 \leq s \leq T} \abs{B_s^j} \leq a \right\} \cap \left\{B_T^j \in [c, d] \right\} ,
    \qquad j = 1, 2, 3 .
\end{align*}
Let
\begin{align*}
    \rho(\xi)
    = 1
    - e^{- 21 \xi / 8}
    - e^{- 5 \xi / 8}
    - e^{- 165 \xi / 8} .
\end{align*}
The following statement has been established. 
\begin{lemma}
    Suppose that $a > 0$, $\alpha > 0$ and $T > 0$ satisfy that $\alpha < a / 2$ and $a^2 / T > \beta$. 
    Here $\beta$ is the unique positive solution of $\rho(\xi) = 0$.  
    Then for every $x \in [- (a - \alpha), a - \alpha]$ we have
    \begin{align*}
        \mathcal{W}({P_j}(a, [x - \alpha, x + \alpha], T))
        \geq \frac{\alpha}{\sqrt{2 \pi T}}
        \rho\left( \frac{a^2}{t} \right)
        e^{- a^2 / 2} ,
        \qquad j = 1, 2, 3 .
    \end{align*}
\end{lemma}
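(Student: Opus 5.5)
The statement is a purely one-dimensional estimate for Brownian motion: the three coordinates $B^j$ are independent standard Brownian motions, so it suffices to treat a single coordinate. I would therefore work with a one-dimensional Brownian motion $b$ started at $0$ and bound from below
\[
\mathcal{W}\Bigl(\sup_{0\le s\le T}|b_s|\le a,\ b_T\in[x-\alpha,x+\alpha]\Bigr)=\int_{x-\alpha}^{x+\alpha} q_T^{a}(y)\,\mathrm{d}y .
\]
Here $q_T^{a}$ is the sub-probability density at time $T$ of the Brownian motion killed on leaving $(-a,a)$. The plan is to write $q_T^a$ explicitly by the method of images (iterated reflection principle), keep the dominant term, and control the remaining terms by the function $\rho$.

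Concretely, let $g_T(z)=(2\pi T)^{-1/2}e^{-z^2/(2T)}$. The reflection principle, applied successively at the barriers $\pm a$, gives the alternating series
\[
q_T^a(y)=\sum_{k\in\mathbb{Z}}(-1)^k\,g_T\bigl((-1)^k y-2ka\bigr),\qquad |y|<a .
\]
The first step is to group the terms so that the tail can be summed: the $k=0$ term, then the pair $k=\pm1$, and then all $|k|\ge2$, which are dominated by a geometric series. Since $|y|\le a$ on the range of integration (this is where $|x|\le a-\alpha$ is used), every term is compared with the common Gaussian factor $(2\pi T)^{-1/2}e^{-a^2/(2T)}$. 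Each image term then contributes a relative factor of the form $e^{-c\,a^2/T}$ with an explicit constant $c$. I expect the three exponentials in $\rho$, with constants $\tfrac{21}{8}$, $\tfrac58$ and $\tfrac{165}{8}$, to arise in exactly this way, from the nearest images and the summed tail. The hypothesis $\alpha<a/2$ serves to keep the relevant points $y$ a definite distance from the walls, so that these exponents are bounded below by multiples of $a^2/T$.

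This should give the pointwise bound
\[
q_T^a(y)\ge (2\pi T)^{-1/2}\rho\bigl(a^2/T\bigr)e^{-a^2/(2T)}
\]
on $[x-\alpha,x+\alpha]$. The condition $a^2/T>\beta$ guarantees $\rho(a^2/T)>0$, because $\rho$ is increasing with $\rho(\beta)=0$. Integrating over an interval of length $2\alpha$ then gives the claim, with the constant $\alpha$ instead of $2\alpha$ leaving room to spare. I read the $t$ in $\rho(a^2/t)$ as $T$. The natural Gaussian factor that comes out of this argument is $e^{-a^2/(2T)}$, and I would track carefully whether the displayed $e^{-a^2/2}$ corresponds to a normalization $T=1$ or is simply $e^{-a^2/(2T)}$; the argument yields the latter.

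The main obstacle is the bookkeeping in the image series. The series alternates in sign, and for $y$ near $\pm a$ the nearest image is almost as large as the main term. The grouping and the use of $|y|\le a-\alpha$ with $\alpha<a/2$ must therefore be chosen carefully to produce exponents with the exact constants appearing in $\rho$. If a direct reproduction of $\tfrac{21}{8},\tfrac58,\tfrac{165}{8}$ proves awkward, I would instead prove the estimate with some explicit $\rho$ of the same shape, since only the positivity of $\rho(a^2/T)$ for large $a^2/T$ is used later.
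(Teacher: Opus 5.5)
\textbf{There is a genuine gap in the key step.} Your plan rests on the pointwise bound $q_T^a(y)\ge (2\pi T)^{-1/2}\rho(a^2/T)e^{-a^2/(2T)}$ on all of $[x-\alpha,x+\alpha]$. That bound is false, and the reason you give for it is wrong. The hypothesis $\alpha<a/2$ is an \emph{upper} bound on $\alpha$; it does not keep $y$ away from the walls. For $x=a-\alpha$ the interval is $[a-2\alpha,a]$, which touches the wall where $q_T^a(a)=0$. Even on its inner half, the distance to the wall is only $\alpha$, not a multiple of $a$. There the nearest image $g_T(2a-y)$ has relative size $e^{-2a(a-y)/T}$ compared with $g_T(y)$. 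This is close to $1$ when $a(a-y)/T$ is small, so it is not suppressed by any $e^{-ca^2/T}$.

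Integrating over the interval does not rescue this. In fact the statement itself fails in this regime. Since $\{\sup_{s\le T}|b_s|\le a\}\subset\{\sup_{s\le T}b_s\le a\}$, the one-sided reflection principle gives
\[
\mathcal{W}\bigl(P_j(a,[a-2\alpha,a],T)\bigr)\le\int_{a-2\alpha}^{a}\bigl(g_T(y)-g_T(2a-y)\bigr)\,\mathrm{d}y\le\frac{e^{-a^2/(2T)}}{\sqrt{2\pi T}}\,\frac{2T}{a}\Bigl(\cosh\frac{2a\alpha}{T}-1\Bigr).
\]
The right-hand side is of order $(4a\alpha^2/T)\,e^{-a^2/(2T)}/\sqrt{2\pi T}$. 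It therefore falls below $\frac{\alpha}{\sqrt{2\pi T}}\rho(a^2/T)e^{-a^2/(2T)}$ once $a\alpha/T$ is small with $a^2/T$ fixed.

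A concrete case: $T=1$, $a=2$, $\alpha=0.01$, $x=1.99$ satisfy all the hypotheses. The probability is at most about $8.0\cdot10^{-4}\,e^{-2}/\sqrt{2\pi}$, while the claimed lower bound is about $9.2\cdot10^{-3}\,e^{-2}/\sqrt{2\pi}$. Both readings of $e^{-a^2/2}$ agree here since $T=1$. So no bookkeeping of the image series can prove the lemma for $x$ ranging over all of $[-(a-\alpha),a-\alpha]$. Your fallback of a different $\rho$ depending only on $a^2/T$ cannot work either. The paper gives no argument here, only citations to Borodin--Salminen, Carmona and HL26, and the range of $x$ has to be restricted.

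\textbf{What your method does prove.} It establishes the lemma for $x$ away from the walls, and it explains the constants.

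For $|y|\le a$, subtracting the two reflected events (the paths that pass above $a$ or below $-a$) gives the Bonferroni bound
\[
q_T^a(y)\ge g_T(y)-g_T(2a-y)-g_T(2a+y).
\]
You do not need the full alternating series.

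Now restrict to $|y|\le a/2$. There $g_T(y)\ge g_T(a)$. The function $y\mapsto g_T(2a-y)+g_T(2a+y)$ is even, and it is increasing on $[0,a/2]$ as soon as $\tanh(a^2/T)\ge 1/4$. This holds whenever $a^2/T>\beta\approx 0.5$. Using $g_T(3a/2)=g_T(a)e^{-5a^2/(8T)}$ and $g_T(5a/2)=g_T(a)e^{-21a^2/(8T)}$, you get
\[
q_T^a(y)\ge g_T(a)\bigl(1-e^{-5a^2/(8T)}-e^{-21a^2/(8T)}\bigr)\ge (2\pi T)^{-1/2}\rho(a^2/T)e^{-a^2/(2T)},\qquad |y|\le a/2 .
\]

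Finally, suppose $|x|\le a/2$ and $\alpha<a/2$. Then the half of $[x-\alpha,x+\alpha]$ nearer the origin lies in $[-a/2,a/2]$ and has length $\alpha$. Integrating over it gives the claimed bound with $e^{-a^2/(2T)}$. The factor $\alpha$ rather than $2\alpha$ comes from this halving, not from spare room.
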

\begin{proof}
    We refer the reader to see~\cite[\text{1.15.8, p.174}]{BS02},~\cite{car78} and~\cite[\text{Section 2.9.2}]{HL26}.
\end{proof}
Let 
    \begin{align*}
        {W_a}(x) = \sup\{W(y) \mid \abs{y_j - x_j} < a_j, j = 1, 2, 3\}. 
    \end{align*}
\begin{theorem}\label{thm:14} %
    Suppose that $a_j, \alpha_j, b_j > 0$ and $T$ satisfy that ${a_j} / 2 > \alpha_j$, $a_j^2 / T > \beta$, and
    \begin{align*}
        \abs{[- a_j, a_j] \cap [- x_j - b_j, - x_j + b_j]} > 2 \alpha_j
    \end{align*}
    for $j = 1, 2, 3$.
    Here $\abs{A}$ denotes the Lebesgue measure of the set $A$.
    Let $1 / p + 1 / q = 1$, $p, q > 1$ and
    \begin{align*}
        \frac{q}{p} < \frac{1}{\triangle T}.
    \end{align*}
    Suppose that $V = W - U \in \mathcal{V}_{\mathrm{lower}}$. 
    Then
    \begin{align}\label{eq:thm14} %
        {(\mathbbm{1}, {\varphi_\mathrm{g}}(x))}_\mathcal{F}
        \geq \frac{e^{C_2 T}}{C_3(q / p)}
        e^{T E_\alpha}
        e^{- \int_0^T W_a(x) \, \mathrm{d}t}
        \ell_K
        {\left(
            \prod_{j = 1}^3
                \frac{\alpha_j}{\sqrt{2 \pi T}}
                \rho\left(\frac{a_j^2}{T}\right)
                e^{- a_j^2 / (2 T)}
        \right)}^p .
    \end{align}
    where
$\ell_K$ is given by $\ell_K = \inf_{B_T + x \in K}(\mathbbm{1}, {\varphi_\mathrm{g}}(x + B_T))$ for $K = [- b_1, b_1] \times [- b_2, b_2] \times [- b_3, b_3]$.
\end{theorem}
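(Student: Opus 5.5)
The plan is to start from the preceding lemma, which under $q/p<1/(\triangle T)$ gives
\begin{align*}
    (\mathbbm{1},\varphi_\mathrm{g}(x))_\mathcal{F}
    \geq \frac{e^{C_2T}}{C_3(q/p)}
    \mathbb{E}\Bigl[\mathbbm{1}_{S_T}e^{-\frac1p\int_0^T(V(B_t+x)-E_\alpha)\,\mathrm{d}t}\,m(B_T+x)^{1/p}\Bigr]^p ,
\end{align*}
and then bound the expectation from below by restricting to a set of paths that are good for both the potential and the endpoint. Throughout, $S_T$ should be read as the tube matched to the parameters $a_j$, i.e.\ $\sup_{0\le s\le T}|B^j_s|\le a_j$ for each $j$. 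This is the same localization as before with the unit ball replaced by a box, so that $X$ is again controlled with $m$ replaced by the infimum of $m$ over the box. I would state this replacement explicitly rather than re-prove \cref{lem:6}.

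First I handle the potential. Since $V=W-U\le W$ with $U\ge0$, we have $-V(B_t+x)\ge -W(B_t+x)$. On the event where $|B^j_t|<a_j$ for all $t\le T$ and all $j$, the definition of $W_a$ gives $W(B_t+x)\le W_a(x)$. Hence on this event
\begin{align*}
    e^{-\frac1p\int_0^T(V(B_t+x)-E_\alpha)\,\mathrm{d}t}\ge e^{\frac{T}{p}E_\alpha}e^{-\frac Tp W_a(x)},
\end{align*}
which is deterministic. After raising to the $p$-th power it produces exactly the factors $e^{TE_\alpha}e^{-\int_0^TW_a(x)\,\mathrm{d}t}$ in \cref{eq:thm14}.

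Second I handle the endpoint. I further restrict to $B_T+x\in K$, which means $B^j_T\in[-x_j-b_j,-x_j+b_j]$ for each $j$. On this set $m(B_T+x)\ge \ell_K$, so $m(B_T+x)^{1/p}\ge\ell_K^{1/p}$, and taking the $p$-th power gives the factor $\ell_K$. What remains is $\mathcal{W}$ of the event
\begin{align*}
    \bigcap_j\Bigl\{\sup_{s\le T}|B^j_s|\le a_j,\ B^j_T\in[-x_j-b_j,-x_j+b_j]\Bigr\}.
\end{align*}
The coordinates are independent, so this probability factorizes into $\prod_j\mathcal{W}(\cdot)$. For each $j$, the hypothesis $|[-a_j,a_j]\cap[-x_j-b_j,-x_j+b_j]|>2\alpha_j$ lets me choose a center $y_j$ with $[y_j-\alpha_j,y_j+\alpha_j]$ contained in that intersection and $|y_j|\le a_j-\alpha_j$. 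The $j$-th event then contains $P_j(a_j,[y_j-\alpha_j,y_j+\alpha_j],T)$. Since $a_j/2>\alpha_j$ and $a_j^2/T>\beta$, the previous probability lemma applies and gives the lower bound
\begin{align*}
    \frac{\alpha_j}{\sqrt{2\pi T}}\rho(a_j^2/T)e^{-a_j^2/(2T)}.
\end{align*}
Multiplying over $j$ and raising everything to the power $p$ yields \cref{eq:thm14}.

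The main obstacle is the consistency of the localization. The set $S_T$ and the constant $m$ used in \cref{lem:6} (through $\alpha_*$ and $\triangle$) must match the tube of half-widths $a_j$. There are also two details to check. One is that the interval chosen inside the intersection satisfies the centering condition $|y_j|\le a_j-\alpha_j$ required by the probability lemma. The other is that the open/closed boundary conventions in $W_a$ versus $\sup|B^j_s|\le a_j$ cause no trouble; they are harmless since boundary hitting has measure zero or one can use $W_a$ with $\le$. Beyond this the argument is bookkeeping of the $p$-th powers.
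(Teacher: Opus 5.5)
Your outline is the paper's proof. You restrict to the event that $|B^j_s|\le a_j$ for $s\le T$ and $B_T+x\in K$, use $V\le W\le W_a(x)$ on the tube and $m(B_T+x)\ge \ell_K$ at the endpoint, factor the probability over coordinates, and apply the Carmona-type lemma to an interval of length $2\alpha_j$ inside $[-a_j,a_j]\cap[-x_j-b_j,-x_j+b_j]$. Requiring the whole interval to lie in this intersection, hence $|y_j|\le a_j-\alpha_j$, is more careful than the paper's $k_j\in[-a_j,a_j]$. You also correctly read the paper's $P_j(\alpha_j,\dots)$ as a tube of half-width $a_j$. The paper simply applies the H\"older lemma with $\mathbbm{1}_P$ in place of $\mathbbm{1}_{S_T}$ without comment. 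You rightly flag that this needs justification. However, the justification you propose, replacing $m$ in $\alpha_*$ by the infimum of $m$ over the box $\prod_j[-a_j,a_j]$, is the wrong one.

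The localization enters the control of $X$ only through the pointwise step $\tilde X\le X$, i.e.\ a lower bound on the denominator in $|\alpha| g_\infty\|K_T\|/m(\cdot)$. After shifting the path to start at $0$, that denominator is $m(B_T+x)$, not $m(B_T)$. So confining $B$ to a box centred at the origin does not bound it. The paper's own step ``$\tilde X\le X$ on $S_T$'' has the same defect, which is why mimicking it misleads you. Your bound is justified only through the endpoint constraint, e.g.\ when $K\subseteq\prod_j[-a_j,a_j]$.

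More seriously, your choice makes $\alpha_*$, $\triangle$, $C_2$, $C_3$ depend on $a$. Since $m$ is continuous and square integrable, $\inf_{\prod_j[-a_j,a_j]}m\to0$ as the $a_j\to\infty$. Hence $\triangle$ and $C_3$ blow up precisely in the regime $a_j\sim|x|$ of \cref{maincoro}, and what you prove is not \cref{eq:thm14} with its fixed constants.

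The missing observation is that the tube plays no role for $X$ at all. After H\"older, $\mathbb{E}[e^{\frac qp X}]$ is taken over the whole Wiener space, and \cref{lem:6} holds verbatim with any positive constant in place of $m$. What you need comes from the endpoint restriction you already impose: on $\{B_T+x\in K\}$ one has $m(B_T+x)\ge \ell_K=\inf_{y\in K}m(y)>0$, independently of $x$ and $a$. Running the argument with $\ell_K$ in place of $m$ in $\alpha_*$ gives \cref{eq:thm14} as stated whenever $\ell_K$ is at least the infimum of $m$ over the unit ball, e.g.\ when $K$ is contained in the unit ball. Otherwise it gives the same inequality with modified constants that are still independent of $x$ and $a$, which is all \cref{maincoro} uses.

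Finally, when you restate the H\"older lemma, note that \cref{lem:6} gives $\mathbb{E}[e^{\frac qp X}]^{p/q}\le C_3(q/p)e^{C_2T}$. So the prefactor is $e^{-C_2T}/C_3(q/p)$; the $e^{+C_2T}$ in your first display is a sign slip inherited from the paper's preceding lemma.
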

\begin{proof}
    Let
    \begin{align*}
        P
        = \bigcap_{j = 1}^3 {P_j}(\alpha_j, [- b_j - x_j, b_j - x_j], T) .
    \end{align*}
    We see that for $\alpha_j < a_j / 2, a_j^2 / t > \beta$ and
    \begin{align*}
        \mathcal{W}\left(
            \bigcap_{j = 1}^3
                {P_j}(\alpha_j, [k_j, k_j + 2 \alpha_j], T)
        \right)
        \geq \prod_{j = 1}^3
            \frac{\alpha_j}{\sqrt{2 \pi t}}
            \rho\left(\frac{a_j^2}{T} \right)
            e^{- a_j^2 / (2 T)}
    \end{align*}
    for any $k_j \in [- a_j, a_j]$.
    By the assumption there exists $k_j \in [- a_j, a_j]$ such that
    \begin{align*}
        [k_j, k_j + 2 \alpha_j] \subset [- b_j - x_j, b_j - x_j] .
    \end{align*}
    We have
    \begin{align*}
        \mathcal{W}\left(P
        \right)
        \geq \mathcal{W}\left(
            \bigcap_{j = 1}^3
                {P_j}(\alpha_j, [k_j, k_j + 2 \alpha_j], T)
        \right) 
        \geq \prod_{j = 1}^3
            \frac{\alpha_j}{\sqrt{2 \pi T}}
            \rho\left(\frac{a_j^2}{T} \right)
            e^{- a_j^2 / (2 T)} .
    \end{align*}
    Since
    \begin{align*}
        {(\mathbbm{1}, {\varphi_\mathrm{g}}(x))}_\mathcal{F} 
        \geq \frac{e^{C_2 T}}{{C_3}(q/p)}   
        {\mathbb{E}[
            \mathbbm{1}_P
            e^{- \frac{1}{p} \int_0^T (V(B_t+x) - E_\alpha) \, \mathrm{d}t}
            {m(B_T + x)}^{1 / p}
        ]}^p ,
    \end{align*}
    we see that
    \begin{align*}
        {(\mathbbm{1}, {\varphi_\mathrm{g}}(x))}_\mathcal{F} 
        &\geq \frac{e^{C_2 T}}{{C_3}(q/p)}
        e^{T E_\alpha}
        {\mathbb{E}[
            \mathbbm{1}_P
            e^{- \frac{1}{p} \int_0^T W_a(x) \, \mathrm{d}t}
            \ell_K^{1 / p}
        ]}^p \\
        &\geq \frac{e^{C_2 T}}{{C_3}(q / p)}
        e^{T E_\alpha}
        e^{- \int_0^T W_a(x) \, \mathrm{d}t}
        \ell_K
        {\mathbb{E}[\mathbbm{1}_P]}^p \\
        &\geq \frac{e^{C_2 T}}{{C_3}(q / p)}
        e^{T E_\alpha}
        e^{- \int_0^T W_a(x) \, \mathrm{d}t}
        \ell_K  
        {\left(
            \prod_{j = 1}^3
                \frac{\alpha_j}{\sqrt{2 \pi T}}
                \rho\left(\frac{a_j^2}{T}\right)
                e^{- a_j^2 / (2 T)}
        \right)}^p.
    \end{align*}
    Then the theorem follows. 
\end{proof}

\begin{corollary}\label{maincoro}
Suppose that, for some real number $n>0$ and constants
$0<A\leq B$ and $c_1,c_2>0$,
\[
    A^2|x|^{2n}-c_1
    \leq V(x)
    \leq B^2|x|^{2n}+c_2,
    \qquad x\in\mathbb{R}^3.
\]
Let $0<\epsilon<b<1/2$.
Then there exist constants $D_\epsilon,C_\epsilon>0$
such that
\[
    D_\epsilon e^{-(\sqrt2 2^n B+\epsilon)|x|^{n+1}}
    \leq \|\varphi_{\mathrm g}(x)\|_{\mathcal F}
    \leq
    C_\epsilon e^{-(b-\epsilon)\frac{A}{2^{n+4}}|x|^{n+1}},
    \qquad x\in\mathbb{R}^3.
\]
\end{corollary}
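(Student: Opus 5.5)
The upper bound is exactly Corollary~\ref{cupper}, so only the lower bound needs an argument. The plan is to combine $\norm{\varphi_\mathrm{g}(x)}_\mathcal{F}\geq (\mathbbm{1},\varphi_\mathrm{g}(x))_\mathcal{F}$ with Theorem~\ref{thm:14}, choosing the parameters so that every factor except $e^{-\int_0^T W_a(x)\,\mathrm{d}t}$ and the Gaussian factor $e^{-p\sum_j a_j^2/(2T)}$ is bounded below by a constant independent of $x$. Since $V\geq A^2|x|^{2n}-c_1$, we may take $W=V_+$ and $U=V_-\leq c_1\mathbbm{1}_K$, so $V\in\mathcal{V}_{\mathrm{lower}}$. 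Note also that $W\leq B^2|x|^{2n}+c_2$.

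The key difficulty is the constraint $q/p<1/(\triangle T)$: if $T$ is fixed and $p$ is chosen close to $1$, then $q$ becomes large, so we must instead take $p$ large and $q/p=1/(p-1)$ small. A Gaussian cost of $e^{-p\,a^2/(2T)}$ over a distance $|x|$ is then too expensive if $T$ is small. Instead I will fix $T=T_0$ and $p$ with $1/(p-1)<1/(\triangle T_0)$, and run Theorem~\ref{thm:14} \emph{not from $x$ to the origin} but along a chain. Concretely:
\begin{enumerate}
\item Fix the unit cube $K$ with $b_j=1$, so that $\ell_K>0$ by Lemma~\ref{lem:2} and continuity.
\item For $|x|$ large, apply Theorem~\ref{thm:14} with $a_j\approx|x_j|+1$ and $\alpha_j$ small. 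The supremum $W_a(x)$ over the box of radii $a_j$ around $x$ is at most $B^2(2\cdot\sqrt{3}\cdot|x|)^{2n}+c_2$, roughly $B^2 2^{2n}(\ldots)|x|^{2n}$.
\item Optimize $T$: the exponent is $T\,B^2 2^{2n}|x|^{2n}+p|x|^2/(2T)$, minimized at $T\sim|x|^{1-n}/(2^nB)$, giving a total of order $\sqrt{2p}\,2^nB|x|^{n+1}$.
\end{enumerate}
This matches the claimed constant $\sqrt2\,2^nB$ only if $p\to1$ is allowed. Therefore I should actually choose $p=1+\delta$ close to $1$ and use the freedom in $T$. In the regime $n>1$ we have $T\to0$ as $|x|\to\infty$, so $q/p=1/\delta<1/(\triangle T)$ holds for large $|x|$. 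In that case $C_3(q/p)$ stays bounded as $T\to0$: the ratio $\bigl((1-u)^{-1/2}-1\bigr)/(u/2)$ tends to $1$ as $u\to0$, and $e^{1/(4q/p)}$ is a constant. Likewise $e^{C_2T}$, $e^{TE_\alpha}$ and the factors $\rho(a_j^2/T)$ (with $a_j^2/T\to\infty$) are harmless. The polynomial prefactors $(\alpha_j/\sqrt{2\pi T})^p$ are absorbed into $e^{-\epsilon|x|^{n+1}}$. Choosing $\delta$ small relative to $\epsilon$ then yields $\sqrt2\,2^nB+\epsilon$. The box geometry must be arranged (each $|x_j|\leq|x|$) so that the sup-potential constant is $2^n$ and not worse.

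The main obstacle is the regime $n\leq1$, where the optimal $T\sim|x|^{1-n}$ does not shrink and is unbounded for $n<1$, which violates $q/p<1/(\triangle T)$ for fixed $p$. There I would take $T$ below the threshold $\delta/\triangle$ and iterate Theorem~\ref{thm:14} along a chain of intermediate points $x=y_0,y_1,\dots,y_N\approx0$. At each step I would replace $\ell_K$ by the lower bound already obtained at the next point, using Lemma~\ref{lem:2} positivity to pass infima over small cubes. The step exponents add up to a Riemann-sum approximation of $\int\bigl(T W+p|\dot\gamma|^2/(2T)\bigr)$, while the multiplicative constants raised to powers $p^N$ must be controlled. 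This power-$p$ compounding is the step I expect to be delicate. If it fails, I would settle for proving the bound for large $|x|$ in the regime where $T\to0$, and handle bounded $|x|$ by continuity and strict positivity of $m$ on compacts.
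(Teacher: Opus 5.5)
Your treatment of $n>1$ is correct, and it is essentially the paper's argument: a single application of Theorem~\ref{thm:14} with $b_j=1$, $a_j=|x_j|+O(1)$ (the paper uses $1+|x_j|+\sqrt{|x|}$), $T\propto|x|^{1-n}$, and $|y|\le|x|+|(a_1,a_2,a_3)|\le 2|x|+O(1)$ on the box. You also do bookkeeping that the paper omits. The probability factor in Theorem~\ref{thm:14} carries the exponent $p$, and $q/p=1/(p-1)<1/(\triangle T)$ forces $p>1+\triangle T$. When $n>1$ you have $T\to0$, so $p=1+\delta$ is admissible for large $|x|$, and the rate $\sqrt{2p}\,2^nB$ is within $\epsilon$ of $\sqrt2\,2^nB$.

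For $0<n\le1$, however, your proposal has a genuine gap, and the chain of applications of Theorem~\ref{thm:14} fails for a different reason than the one you anticipate. Compounding in $p$ is harmless, since $\ell_K$ enters Theorem~\ref{thm:14} linearly. The real problem is that the infimum of $m$ over the target region is also built into the exponential-moment constants. Lemma~\ref{lem:3} puts $1/m(B_T)$ into the exponent, and this is controlled by $g_\infty/\inf m$. Hence $\alpha_*^2=\alpha^2\bigl(\frac{1}{2}+(g_\infty/\inf m)^2\bigr)$ determines both $\triangle$ and $C_3(q/p)\ge e^{\alpha_*^2\|\hat\varphi/\omega\|^2}$. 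Now take a link whose target cube surrounds an intermediate point $y_{k+1}$ far from the origin. There $\inf m\le\|\varphi_{\mathrm g}(y_{k+1})\|_{\mathcal F}$ is exponentially small by Corollary~\ref{cupper}. Replacing $g_\infty$ by a local supremum does not help, because the only available bound on that ratio is $e^{(c_{\mathrm{low}}-c_{\mathrm{up}})|y_{k+1}|^{n+1}}$ with $c_{\mathrm{low}}>c_{\mathrm{up}}$. Each such link therefore costs a doubly exponentially small factor $1/C_3$ and allows only an exponentially small step time $T_0<(p-1)/\triangle$, so the chain collapses. Your fallback simply leaves $n\le1$ unproved.

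The paper's proof does not supply the missing idea either. It takes $T=|x|^{1-n}/(\sqrt2\,2^nB)$ for every $n>0$ and writes the factor from Theorem~\ref{thm:14} without the exponent $p$. It also never checks $q/p<1/(\triangle T)$, which for $n<1$ (where $T\to\infty$) forces $p\to\infty$. In fact, keep $K$ fixed, as you must to keep $\ell_K$ and $\triangle$ under control. Then the overlap condition in Theorem~\ref{thm:14} forces $a_j>|x_j|-b_j$, so every admissible choice gives
\[
p\sum_ja_j^2/(2T)\ \ge\ \tfrac{\triangle}{2}\sum_ja_j^2\ \ge\ \tfrac{\triangle}{2}\bigl(|x|^2-O(|x|)\bigr).
\]
So a single application of Theorem~\ref{thm:14} yields at best Gaussian decay $e^{-\frac{\triangle}{2}|x|^2(1+o(1))}$. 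That is too weak for $n<1$, and for $n=1$ it gives the rate $2\sqrt2B+\triangle/2$ rather than $2\sqrt2B+\epsilon$. Your diagnosis of the regime $n\le1$ is therefore right. What is missing there, in your proposal and in the paper alike, is an exponential-moment bound for $X$ whose admissible range of $\varepsilon$ does not shrink with $T$. That is exactly the obstruction described in Section~\ref{sec:5}.
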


\begin{proof}
The upper bound follows from 
Corollary~\ref{cupper}. 
Use the decomposition $V=W-U$ with $W=V_+$ and $U=V_-$.
For $r=|x|\geq1$, set
\[
    T=\frac{1}{\sqrt2 2^n B}r^{1-n},
    \qquad
    a_j=1+|x_j|+\sqrt r,
    \qquad
    \alpha_j=\frac12,
    \qquad
    b_j=1,
    \qquad j=1,2,3.
\]
If $|y_j-x_j|<a_j$ for $j=1,2,3$, then
$|y|\leq2r+\sqrt3(1+\sqrt r)$.
Hence
\begin{align*}
    W_a(x)
    &\leq B^2\left(
        \sum_{j=1}^3(2|x_j|+1+\sqrt r)^2
    \right)^n+c_2\\
    &\leq B^2\bigl(2r+\sqrt3(1+\sqrt r)\bigr)^{2n}+c_2\\
    &=B^2 2^{2n}r^{2n}+o(r^{2n}). 
\end{align*}
It follows that
\begin{align}
\label{tu1}
    TW_a(x)
    \leq \frac{ 2^nB}{\sqrt2 }r^{n+1}+o(r^{n+1}).
\end{align}
Moreover,
$
    \frac{a_j^2}{T}\geq\sqrt2 2^n B r^n\to \infty$ as $r\to\infty$ for 
    $j=1,2,3$. 
Since
\[
    \sum_{j=1}^3a_j^2
    =
    r^2+2(1+\sqrt r)\sum_{j=1}^3|x_j|
    +3(1+\sqrt r)^2
    =r^2+o(r^{2}),
\]
we have
\begin{align}
\label{tu2}
    \frac1{2T}\sum_{j=1}^3a_j^2
    =\frac{2^nB}{\sqrt2}r^{n+1}+o(r^{n+1}).
\end{align}
Applying \cref{eq:thm14}, we obtain
\begin{align*}
    (\mathbbm{1},\varphi_{\mathrm g}(x))_{\mathcal F}
    &\geq
    \frac{\ell_K}{C_3(q/p)}
    \left(\frac{T^{-1/2}}{2\sqrt{2\pi}}\right)^3
    \prod_{j=1}^3\rho(a_j^2/T)\\ 
&\times     \exp((E_\alpha+C_2)T)
\exp\left(
        -TW_a(x)
        -\frac1{2T}\sum_{j=1}^3a_j^2
    \right).
\end{align*}
Note that $\lim_{r\to\infty}C_3(q/p)=1$.
The estimates for $C_3(q/p)$ and $\rho$ give
\[
    \frac1{C_3(q/p)}\geq\frac12,
    \qquad
    \rho(a_j^2/T)\geq\frac12,
    \qquad j=1,2,3,
\]
for all sufficiently large $r$.
Since $T=o(r^{n+1})$, by \eqref{tu1} and \eqref{tu2}  we consequently obtain,
for some constant $c>0$,
\[
    (\mathbbm{1},\varphi_{\mathrm g}(x))_{\mathcal F}
    \geq
    c\,r^{3(n-1)/2}
    \exp\left\{
        -\sqrt2 2^n B r^{n+1}+o(r^{n+1})
    \right\}.
\]
Then 
there exist constants
$R_\epsilon,c_*>0$ such that
\[
    (\mathbbm{1},\varphi_{\mathrm g}(x))_{\mathcal F}
    \geq c_*e^{-(\sqrt2\,B2^n+\epsilon)|x|^{n+1}},
    \qquad |x|\geq R_\epsilon.
\]
Finally,
$(\mathbbm{1},\varphi_{\mathrm g}(x))_{\mathcal F}$
is strictly positive and continuous in $x$.
Thus the lower bound extends to all $x\in\mathbb{R}^3$
after decreasing the constant.
This completes the proof.
\end{proof}

\section{The dipole approximation and the Agmon metric}\label{sec:4} %
\subsection{Estimates of exponential moments of double stochastic integrals}
In this section, we consider the Pauli-Fierz Hamiltonian under the dipole approximation. It is obtained by replacing $A(x)$ with $A(0)$.
Then
\begin{align*}
    H_\alpha(0) = \frac{1}{2}{(-i\nabla - \alpha A(0))}^2 + V + H_\mathrm{rad} .
\end{align*}
Under the dipole approximation, the estimate of the exponential functional becomes considerably simpler.
Let
\begin{align*}
    \tilde W_\mu
    = \int_0^T \mathrm{j}_s \tilde{\varphi} \, \mathrm{d}B_s^\mu ,
    \qquad \mu = 1, 2, 3 . 
\end{align*}
We consider the quadratic form
\begin{align*}
    X_\mathrm{dip} = {X_\mathrm{dip}}(T)
    = \sum_{\mu, \nu = 1}^3 {(\tilde W_\mu, d_{\mu\nu} \tilde W_\nu)}_{L^2}. 
\end{align*}
Define the compact operator $D_T$ on $L^2([0,T]; \mathbb{R}^3)$ by
\begin{align*}
    {{(D_T f)}_\mu}(s)
    = \sum_{\nu = 1}^3 \int_0^T D_{\mu \nu}(s,t) {f_\nu}(t) \, \mathrm{d}t ,
\end{align*}
where
\begin{align*}
    D_{\mu \nu}(s, t)
    = {(\mathrm{j}_s \tilde{\varphi}, d_{\mu \nu} \mathrm{j}_t \tilde{\varphi})}_{L^2} .
\end{align*}
Since $d_{\mu \nu}$ commutes with $\mathrm{j}_s$,
\begin{align*}
    D_{\mu \nu}(s, t)
    = \int_{\mathbb{R}^3}
        \frac{\abs{\hat{\varphi}(k)}^2}{\omega(k)}
        e^{- \abs{s - t} \omega(k)}
        d_{\mu \nu}(k)
    \, \mathrm{d}k .
\end{align*}
Then we obtain that
\begin{align*}
    \mathbb{E}\left[
        e^{\varepsilon \alpha^2 X_\mathrm{dip}}
    \right]
    ={\det(\mathbbm{1} - 2 \varepsilon \alpha^2 D_T)}^{- 1 / 2},
    \qquad \varepsilon \geq 0 ,
\end{align*}
where $\det_F$ denotes the Fredholm determinant.
More explicitly, if ${\{\lambda_n\}}_{n \geq 1}$ are the eigenvalues of the compact operator $D_T$, counted with multiplicity, then
\begin{align*}
    \det(\mathbbm{1} - 2 \varepsilon \alpha^2 D_T)
    = \prod_{n = 1}^{\infty} (1 - 2 \varepsilon \alpha^2 \lambda_n) . 
\end{align*}
Since $\omega$ and $\hat{\varphi} $ are radial, we see that
\begin{align*}
    D_{\mu \nu}(s, t)
    = \frac{2}{3} \delta_{\mu\nu} {C_T}(s, t) ,
\end{align*}
where
\begin{align*}
    {C_T}(s, t)
    = \int_{\mathbb{R}^3} 
        \frac{\abs{\hat{\varphi}(k)}^2}{\omega(k)}
        e^{- \abs{s - t} \omega(k)}
    \, \mathrm{d}k .
\end{align*}
If $C_T$ also denotes the corresponding integral operator on $L^2([0, T])$, then
\begin{align*}
    D_T
    = \frac{2}{3} {C_T} \otimes \mathbbm{1}_{3 \times 3} .
\end{align*}

\begin{lemma}\label{lem:16} %
    Suppose that
    \begin{align*}
        0 \leq \varepsilon
        < \frac{3}{8 \alpha^2 \norm{\hat{\varphi} / \omega}^2} .
    \end{align*}
    Then there exists a constant $c_\varepsilon > 0$, independent of $T$, such that
    \begin{align*}
        \mathbb{E}\left[
            e^{\varepsilon \alpha^2 X_\mathrm{dip}}
        \right]
        \leq e^{c_\varepsilon T}
    \end{align*}
    for all $T>0$.
    More precisely, one can take
    \begin{align}\label{eq:lem16} %
        c_\varepsilon
        = \frac{
            2 \varepsilon \alpha^2 \norm{\hat{\varphi} / \sqrt{\omega}}^2
        }{
            1 - 8 \varepsilon \alpha^2 / 3 \; \norm{\hat{\varphi} / \omega}^2
        } .
    \end{align}
\end{lemma}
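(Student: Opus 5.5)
The plan is to reduce the Fredholm determinant to a scalar one for the kernel $C_T$, and then bound $-\log\det$ by a trace using a uniform bound on the operator norm of $C_T$. Since $D_T=\frac23 C_T\otimes\mathbbm{1}_{3\times3}$, the eigenvalues of $D_T$ are $\frac23\lambda_n$, each with multiplicity $3$, where $\{\lambda_n\}$ are the eigenvalues of $C_T$ on $L^2([0,T])$. Setting $\gamma=\frac43\varepsilon\alpha^2$, the identity recalled just before \cref{lem:16} becomes
\begin{align*}
    \mathbb{E}\left[e^{\varepsilon\alpha^2 X_\mathrm{dip}}\right]
    = \prod_{n\geq1}(1-\gamma\lambda_n)^{-3/2}
    = \exp\Bigl(-\tfrac32\sum_{n\geq1}\log(1-\gamma\lambda_n)\Bigr).
\end{align*}
So it suffices to show that $0\le\gamma\lambda_n\le\theta<1$ for all $n$, with $\theta$ independent of $T$, and that $\sum_n\lambda_n$ is proportional to $T$.

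\textbf{Step 1: positivity and trace.} The kernel is a Gram kernel,
\begin{align*}
    C_T(s,t)=\int_{\mathbb{R}^3}\frac{\abs{\hat\varphi(k)}^2}{\omega(k)}e^{-\abs{s-t}\omega(k)}\,\mathrm{d}k=(\mathrm{j}_s\tilde\varphi,\mathrm{j}_t\tilde\varphi)_{L^2},
\end{align*}
by \cref{eq:jt}. Hence $(f,C_Tf)=\norm{\int_0^T f(t)\mathrm{j}_t\tilde\varphi\,\mathrm{d}t}^2\ge0$, so $\lambda_n\ge0$. The kernel is continuous in $(s,t)$ on $[0,T]^2$ by dominated convergence, since $\hat\varphi/\sqrt\omega\in L^2$. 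Mercer's theorem then shows that $C_T$ is trace class, with
\begin{align*}
    \sum_n\lambda_n=\int_0^T C_T(s,s)\,\mathrm{d}s=T\norm{\hat\varphi/\sqrt\omega}^2 .
\end{align*}
Invoking Mercer correctly is the one point needing care; it is also what makes the Fredholm determinant well defined.

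\textbf{Step 2: a $T$-uniform operator norm bound.} By the Schur test with the symmetric kernel,
\begin{align*}
    \norm{C_T}\le\sup_s\int_0^T C_T(s,t)\,\mathrm{d}t\le\int_{\mathbb{R}^3}\frac{\abs{\hat\varphi(k)}^2}{\omega(k)}\int_{\mathbb{R}}e^{-\abs{u}\omega(k)}\,\mathrm{d}u\,\mathrm{d}k=2\norm{\hat\varphi/\omega}^2 .
\end{align*}
Thus $\gamma\lambda_n\le\theta:=\frac83\varepsilon\alpha^2\norm{\hat\varphi/\omega}^2<1$, and this is exactly where the hypothesis on $\varepsilon$ enters. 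This step is the crux: it is the deterministic analogue of the path-dependent estimate in \cref{lem:6}, and here it does not degrade with $T$.

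\textbf{Step 3: conclusion.} For $0\le y\le\theta<1$ we have $-\log(1-y)\le y/(1-y)\le y/(1-\theta)$. Applying this with $y=\gamma\lambda_n$ and using Step 1,
\begin{align*}
    -\tfrac32\sum_n\log(1-\gamma\lambda_n)\le\frac{3\gamma}{2(1-\theta)}\,T\norm{\hat\varphi/\sqrt\omega}^2=\frac{2\varepsilon\alpha^2\norm{\hat\varphi/\sqrt\omega}^2}{1-\frac83\varepsilon\alpha^2\norm{\hat\varphi/\omega}^2}\,T=c_\varepsilon T,
\end{align*}
which is \cref{eq:lem16}.
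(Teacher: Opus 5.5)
Your proof is correct and follows essentially the same route as the paper: reduce to $\det(\mathbbm{1}-\frac{4\varepsilon\alpha^2}{3}C_T)^{-3/2}$, bound $\norm{C_T}\le 2\norm{\hat{\varphi}/\omega}^2$ by the Schur test, and control $-\log\det$ via $\mathrm{Tr}\,C_T = T\norm{\hat{\varphi}/\sqrt{\omega}}^2$. Your eigenvalue-wise inequality $-\log(1-y)\le y/(1-y)$ is the same estimate as the paper's $\sum_n \frac{a^n}{n}\mathrm{Tr}\,C_T^n \le \sum_n a^n \norm{C_T}^{n-1}\mathrm{Tr}\,C_T$, and your Gram-kernel and Mercer argument supplies the positivity and trace-class facts that the paper only asserts.
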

\begin{proof}
    We have
    \begin{align*}
        \mathbb{E}\left[
            e^{\varepsilon \alpha^2 X_\mathrm{dip}}
        \right]
        = {\det(\mathbbm{1} - \frac{4 \varepsilon \alpha^2}{3} C_T)}^{- 3 / 2} .
    \end{align*}
    It is straightforward to see that
    \begin{align*}
        \norm{C_T}_{L^2}
        \leq \sup_{0 \leq s \leq T} \int_0^T {C_T}(s, t) \, \mathrm{d}t
        \leq 2 \int_{\mathbb{R}^3}
            \frac{\abs{\hat{\varphi}(k)}^2}{{\omega(k)}^2} \, \mathrm{d}k
        = 2\norm{\frac{\hat{\varphi}}{\omega}}^2 .
    \end{align*}
    Hence,
    \begin{align*}
        \frac{4 \varepsilon \alpha^2}{3} \norm{C_T}
        \leq \frac{8 \varepsilon \alpha^2}{3} \norm{\frac{\hat{\varphi}}{\omega}}^2
        < 1 .
    \end{align*}
    Set $a = \frac{4 \varepsilon \alpha^2}{3}$.
    Since $C_T \geq 0$ and $a \norm{C_T} < 1$, we have
    \begin{align*}
        - \log(\mathbbm{1} - a C_T)
        = \sum_{n = 1}^{\infty} \frac{a^n}{n} C_T^n
    \end{align*}
    and therefore
    \begin{align*}
        - \Tr \log(\mathbbm{1} - a C_T)
        = \sum_{n = 1}^{\infty} \frac{a^n}{n} \Tr C_T^n
        \leq \sum_{n = 1}^{\infty} a^n \norm{C_T}^{n - 1} \Tr C_T
        = \frac{a}{1 - a \norm{C_T}} \Tr C_T .
    \end{align*}
    Moreover,
    \begin{align*}
        \Tr C_T
        = \int_0^T {C_T}(s, s) \, \mathrm{d}s
        = T \norm{\hat{\varphi} / \sqrt{\omega}}^2 .
    \end{align*}
    Consequently,
    \begin{align*}
        \log
            \mathbb{E}\left[
                e^{\varepsilon \alpha^2 X_\mathrm{dip}}
            \right]
        = - \frac{3}{2} \Tr \log(\mathbbm{1} - a C_T)
        \leq \frac{3}{2} \frac{a}{1 - a \norm{C_T}} \Tr C_T
        \leq \frac{
            2 \varepsilon \alpha^2 \norm{\hat{\varphi} / \sqrt{\omega}}^2
        }{
            1 - 8 \varepsilon \alpha^2 / 3 \; \norm{\hat{\varphi} / \omega}^2
        }
        T .
    \end{align*}
    Thus the lemma follows. 
\end{proof}

\subsection{Lower bounds of exponential decay by Agmon distance}
We shall show  several propositions regarding the Agmon distance and related quantities.
\begin{assumption}\label{ass:17} %
{\rm Let $V$ satisfy the following properties.
    \begin{enumerate}
        \item[(1)] $V$ is continuous.
        \item[(2)] $V(x) \geq 0$ for all $x \in \mathbb{R}^3$, and $\sup_{\abs{x - y} \leq 1} V(y) > 0$ for all $x \in \mathbb{R}^3$.
        \item[(3)] $\lim_{\abs{x} \to + \infty} V(x) = + \infty$.
    \end{enumerate}
}\end{assumption}
We set
\begin{align*}
    V_{\sup}(x) = \sup_{\abs{x - y} \leq 1} V(y) .
\end{align*}
Under \cref{ass:17}, since $V$ is continuous on $\mathbb{R}^3$, it is locally bounded and belongs to the local Kato class.
The requirement $V_{\sup}(x) > 0$ ensures that there is no extended region in which the particle behaves as a free particle.
Finally, the condition $\lim_{\abs{x} \to + \infty} V(x) = + \infty$ implies that the potential confines the particle, preventing it from escaping to spatial infinity.
The above assumptions are natural in the context of Agmon theory.
In particular, they guarantees that the Agmon distance is complete.
To define the Agmon distance, we define the following path space:
\begin{align*}
    \mathcal{P}_{x \to y} &= \{\gamma \in C^1([0, T]; \mathbb{R}^3) \mid \gamma_0 = x, \gamma_T = y\}. 
\end{align*}
Let
\begin{align*}
    \mathscr{G}(\gamma)
    = \int_0^T \sqrt{2 V_{\sup}(\gamma_t)} \abs{\dot{\gamma}_t}
    \, \mathrm{d}t .
\end{align*}
The integrand of $\mathscr{G}(\gamma)$ can be regarded as the geometric mean of
$V_{\sup}(\gamma_t)$ and $1 / 2 \; \abs{\dot\gamma_t}^2$. 
The quantity $\mathscr{G}^*(x)$ is independent of the choice of $T$, since $\mathscr{G}(\gamma)$ is invariant under reparametrizations of the path.
Correspondingly, we introduce the arithmetic mean by
\begin{align*}
    \mathscr{A}(\gamma, T)
    = \int_0^T
        \left(V_{\sup}(\gamma_t) + \frac{1}{2} \abs{\dot{\gamma}_t}^2\right)
    \, \mathrm{d}t . 
\end{align*}
By the arithmetic and geometric inequality, 
$\mathscr{A}(\gamma, T) \geq \mathscr{G}(\gamma)$ holds.

For the sake of the subsequent analysis, we first approximate the potential by a smooth function. 
Let $V_{\sup} = w$. 
For any $b > 0$ there exists smooth $Y \in C^{\infty}(\mathbb{R}^3)$  such that 
\begin{align}\label{eq:Y} %
    (1 - b) w(x) \leq Y(x) \leq (1 + b) w(x).
\end{align}
Set 
\begin{align*}
    w_b = \frac{1}{1 - b} Y .
\end{align*}
In what follows, we work with $w_b$ in place of $w$.
Let
\begin{align*}
    {\mathscr{G}_b}(\gamma)
    &= {\mathscr{G}_b}(\gamma, T)
    = \int_0^T
        \sqrt{2 {w_b}(\gamma_s)}
        \abs{\dot{\gamma_s}}
    \, \mathrm{d}s , \\
    {\mathscr{A}_b}(\mathrm{q}, \tau)
    &= \int_0^{\tau} \left(
        {w_b}(\mathrm{q}_s)
        + \frac{1}{2} \abs{\dot{\mathrm{q}_s}}^2
    \right) \, \mathrm{d}s .
\end{align*}
We now relate the minimization problem for the Agmon length functional $\mathscr{G}_b$ to a variational problem associated with the energy functional $\mathscr{A}_b$. 
We shall compare
\begin{align*}
    &\inf\{
        \mathscr{G}_b(\gamma)
        \mid 
        \gamma_0 = 0 , 
        \gamma_T = x , 
        \gamma \in C^{\infty}([0, T]; \mathbb{R}^3)
    \}, \\
    &\inf\{
        {\mathscr{A}_b}(\mathrm{q}, \tau)
        \mid 
        \tau > 0, 
        \mathrm{q}_0 = x , 
        \mathrm{q}_\tau = 0 , 
        \mathrm{q} \in C^{\infty}([0, \tau]; \mathbb{R}^3)
    \} .
\end{align*}
Let $\gamma^* \in C^{\infty}([0, T]; \mathbb{R}^3)$ be the minimizer of ${\mathscr{G}_b}(\gamma)$ such that $\gamma^*_0 = 0$ and $\gamma^*_T = x$.  
The existence of a minimizer is proven in \cref{lem:a3}.  
There exists a minimizer $(\mathrm{q}^*, \tau^*) \in C^{\infty}([0, T]; \mathbb{R}^3) \times (0, + \infty)$ of ${\mathscr{A}_b}(\mathrm{q}, T)$. 
Both minimizers are coincide:
\begin{align}\label{eq:eq} %
    {\mathscr{G}_b}(\gamma^*)
    = {\mathscr{A}_b}(\mathrm{q}^*, \tau^*).
\end{align} 
This is shown in \cref{lem:a5}. 
Strictly speaking, the parametrization of the minimizer $\gamma^*$ connecting $0$ and $x$ depends on $T$.
However, since $\mathscr{G}_b$ is invariant under reparametrization, the set
\begin{align*}
    \{\gamma^*_t \mid 0 \leq t \leq T\}
\end{align*}
is independent of $T$, though it depends on the endpoint $x$.
Note also that $\tau^* = \tau^*(x)$ depends on $x$ and may tend to infinity as $\abs{x} \to + \infty$.
Nevertheless, it can be controlled from above as follows.
Let $\delta > 0$ be arbitrary. By \cref{lem:a6}, there exists $R_{\delta, b} > 0$ such that
\begin{align}\label{eq:tau} %
    \tau^*
    \leq \delta
    \int_0^{\tau^*} \left(
        {w_b}(\mathrm{q}^*_s)
        + \frac{1}{2} \abs{\dot{\mathrm{q}}^*_s}^2
    \right) \, \mathrm{d}s
\end{align}
for all $\abs{x} \geq R_{\delta, b}$.

\begin{lemma}\label{lem:18} %
    Let $Z \in \mathbb{R}$.
    Suppose that
    \begin{align*}
        {(\mathbbm{1}, {\varphi_\mathrm{g}}(x))}_\mathcal{F}
        \geq C e^{- \mathscr{A}(\mathrm{q}, T) + Z T}
    \end{align*}
    for any $\mathrm{q} \in \mathcal{P}_{0 \to x}$. 
    Then for any $\varepsilon > 0$ there exists $R_\varepsilon > 0$ such that
    \begin{align*}
        {(\mathbbm{1}, {\varphi_\mathrm{g}}(x))}_\mathcal{F}
        \geq C e^{- (1 + \varepsilon \abs{Z}) \mathscr{G}(\gamma)}
    \end{align*}
    holds for any $\abs{x} > R_\varepsilon$ and for any $\gamma \in \mathcal{P}_{0 \to x}$.
\end{lemma}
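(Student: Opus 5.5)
The plan is to exploit that the hypothesis holds for every $T>0$ and every $\mathrm{q}\in\mathcal{P}_{0\to x}$. For a given $\gamma$ we will choose $T$ and the parametrization of $\gamma$ so that the action $\mathscr{A}$ nearly equals the length $\mathscr{G}$, while $T$ itself is a small multiple of $\mathscr{G}(\gamma)$. Since $\mathscr{G}$ is reparametrization invariant, we may reparametrize $\gamma$ freely without changing the right-hand side of the claimed bound.

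The first step is equipartition. We reparametrize $\gamma$ so that along the path
\[
\tfrac12\abs{\dot{\mathrm q}_t}^2=V_{\sup}(\mathrm{q}_t).
\]
Concretely, we define the new time by $\mathrm{d}t=\abs{\mathrm{d}\gamma}/\sqrt{2V_{\sup}(\gamma)}$. This is legitimate provided $V_{\sup}>0$ along the path, which \cref{ass:17}(2) is meant to guarantee. Where $V_{\sup}$ is small we first perturb with $w_b$ and approximate by $C^1$ paths. For the equipartitioned path the arithmetic–geometric inequality becomes an equality, so
\[
\mathscr{A}(\mathrm{q},T)=\mathscr{G}(\gamma),
\]
with total time
\[
T=T(\gamma)=\int \frac{\abs{\mathrm{d}\gamma}}{\sqrt{2V_{\sup}(\gamma)}}.
\]
Inserting this into the hypothesis gives
\[
(\mathbbm{1},\varphi_{\mathrm g}(x))_{\mathcal F}\ge C e^{-\mathscr{G}(\gamma)+ZT}\ge Ce^{-\mathscr{G}(\gamma)-\abs{Z}T}.
\]

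The second step, which is the main obstacle, is to show that $T(\gamma)\le \varepsilon\,\mathscr{G}(\gamma)$ when $\abs{x}$ is large. This is the analogue of \cref{eq:tau} / \cref{lem:a6}. Since
\[
\mathscr{G}(\gamma)=\int \sqrt{2V_{\sup}}\,\abs{\mathrm{d}\gamma} \quad\text{and}\quad T=\int \abs{\mathrm{d}\gamma}/\sqrt{2V_{\sup}},
\]
we split $\gamma$ into the portion inside a ball $\{\abs{y}\le R_0\}$ and the portion outside. Here $R_0$ is chosen so that $2V_{\sup}\ge 2/\varepsilon$ outside it, which is possible by \cref{ass:17}(3). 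On the outer portion the integrands satisfy $1/\sqrt{2V_{\sup}}\le (\varepsilon/2)\sqrt{2V_{\sup}}$, so that part of $T$ is at most $(\varepsilon/2)\mathscr{G}(\gamma)$.

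The inner portion is the delicate part, because an arbitrary $\gamma$ might wander inside the ball for a long time. We handle it in two ways. First, we replace the inner portion by a straight segment, or a bounded path of length at most $2R_0$, without increasing things essentially. The inner contribution to $T$ is then bounded by a constant $c(R_0)=2R_0/\min_{\abs{y}\le R_0}\sqrt{2V_{\sup}(y)}$, which is finite by continuity and positivity of $V_{\sup}$. Second, we use $\mathscr{G}(\gamma)\ge \mathrm{dist}_{\mathrm{Agmon}}(0,x)\to\infty$ as $\abs{x}\to\infty$, which follows from completeness under \cref{ass:17}. This gives $c(R_0)\le(\varepsilon/2)\mathscr{G}(\gamma)$ for $\abs{x}>R_\varepsilon$.

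If modifying $\gamma$ is not allowed, an alternative is to bound $\mathscr{G}$ of the original $\gamma$ from below in terms of its inner length. Alternatively, one can note that the statement is only needed for near-minimizing $\gamma$, and for any other $\gamma$ the bound follows from the bound for the minimizer $\gamma^*$ of \cref{lem:a3}, since $\mathscr{G}(\gamma)\ge\mathscr{G}(\gamma^*)$. With either route, combining the two steps gives
\[
(\mathbbm{1},\varphi_{\mathrm g}(x))_{\mathcal F}\ge Ce^{-(1+\varepsilon\abs{Z})\mathscr{G}(\gamma)}.
\]
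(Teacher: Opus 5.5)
Your argument works through the segment replacement, and it differs from the paper's proof. The paper never reparametrizes the given $\gamma$. It passes to the smooth $w_b$ (so $\mathscr{A}\le\mathscr{A}_b$) and takes the geodesic minimizer $\gamma^*$ of $\mathscr{G}_b$ (Lemma~\ref{lem:a3}). It equipartitions $\gamma^*$ into the minimizer $(\mathrm{q}^*,\tau^*)$ of $\mathscr{A}_b$ (Lemma~\ref{lem:a5}) and bounds $\tau^*\le\delta\,\mathscr{A}_b(\mathrm{q}^*,\tau^*)$ by Lemma~\ref{lem:a6}; minimality is what stops the path loitering near the origin. It then returns to an arbitrary $\gamma$ via $\mathscr{G}_b(\gamma^*)\le\mathscr{G}_b(\gamma)\le\sqrt{(1+b)/(1-b)}\,\mathscr{G}(\gamma)$.

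Your reduction to the minimizer is exactly this argument. However, Lemma~\ref{lem:a3} gives a minimizer of $\mathscr{G}_b$, not of $\mathscr{G}$, so the smoothing factor and Lemma~\ref{lem:a6} cannot be skipped.

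The segment replacement avoids smoothing and geodesics altogether. The content hides in ``without increasing things essentially'', so state it explicitly. Let $s_0$ be the last time with $|\gamma_{s_0}|\le R_0$, and let $\gamma'$ be the segment from $0$ to $\gamma_{s_0}$ followed by the arc of $\gamma$ after $s_0$. Let $m_0,M_0$ be the minimum and maximum of $\sqrt{2V_{\sup}}$ on $\{|y|\le R_0\}$. Then $\mathscr{G}(\gamma')\le\mathscr{G}(\gamma)+R_0M_0$ and $T(\gamma')\le R_0/m_0+\frac{\varepsilon}{2}\mathscr{G}(\gamma)$. After the routine $C^1$ smoothing at the corner, this gives the lower bound $Ce^{-(1+\varepsilon|Z|/2)\mathscr{G}(\gamma)-R_0M_0-|Z|R_0/m_0}$.

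The additive constant is absorbed because $\mathscr{G}(\gamma)\ge\sqrt{2\inf V_{\sup}}\,|x|$, with $\inf V_{\sup}>0$ by Assumption~\ref{ass:17}. This absorption needs $Z\neq0$. For $Z\ge0$, your first step alone already gives $Ce^{-\mathscr{G}(\gamma)}$, since $e^{ZT}\ge1$. So your route replaces the multiplicative loss $\sqrt{(1+b)/(1-b)}$ and the geodesic machinery with an additive constant.

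Discard your fallback for the unmodified $\gamma$. Bounding its $\mathscr{G}$ below by its inner length only gives $T_{\mathrm{in}}\le\mathscr{G}_{\mathrm{in}}/m_0^2$, which is a fixed multiple, not a small one. A path that circles $N$ times inside the unit ball before heading to $x$ has $T(\gamma)/\mathscr{G}(\gamma)$ bounded below by roughly $1/\max_{|y|\le1}2V_{\sup}(y)$ as $N\to\infty$. So $T(\gamma)\le\varepsilon\,\mathscr{G}(\gamma)$ is false for general $\gamma$. The modification, or the passage to a minimizer, is essential rather than optional.
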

\begin{proof}
    By \cref{eq:Y} we have
    \begin{align*}
        {(\mathbbm{1}, {\varphi_\mathrm{g}}(x))}_\mathcal{F}
        \geq C e^{- {\mathscr{A}_b}(\mathrm{q}, T) + Z T} .
    \end{align*}
    Putting $T = \tau^*$ and $\mathrm{q} =\mathrm{q}^*$, we have
    \begin{align*}
        {(\mathbbm{1}, {\varphi_\mathrm{g}}(x))}_\mathcal{F}
        \geq C
        e^{
            - {\mathscr{A}_b}(\mathrm{q}^*, \tau^*) - \abs{Z} \tau^*
        } .
    \end{align*}
    For $\abs{x} \geq R_{\delta, b}$ with some $R_{\delta, b}$, by \cref{eq:tau}
    \begin{align*}
        {(\mathbbm{1}, {\varphi_\mathrm{g}}(x))}_\mathcal{F}
        &\geq C
        e^{
            - (1 + \delta \abs{Z}) {\mathscr{A}_b}(\mathrm{q}^*, \tau^*)
        } .
    \end{align*}
    By \cref{eq:eq}, ${\mathscr{A}_b}(\mathrm{q}^*, \tau^*) = {\mathscr{G}_b}(\gamma^*)$, we obtain that 
    \begin{align*}
        {(\mathbbm{1}, {\varphi_\mathrm{g}}(x))}_\mathcal{F}
        \geq C
        e^{
            - (1 + \delta \abs{Z}) {\mathscr{G}_b}(\gamma^*)
        } 
        \geq C
        e^{
            - (1 + \delta \abs{Z}) {\mathscr{G}_b}(\gamma)
        } ,
    \end{align*}
    where $\gamma^*$ is a minimizer of ${\mathscr{G}_b}(\gamma)$ and $\gamma \in \mathcal{P}_{0\to x}$ is arbitrary. 
    By \cref{eq:Y} again we have 
    \begin{align*}
        {(\mathbbm{1}, {\varphi_\mathrm{g}}(x))}_\mathcal{F}
        &\geq C \exp(
            - (1 + \delta \abs{Z}) \sqrt{\frac{1 + b}{1 - b}} \mathscr{G}(\gamma, T)
        ) .
    \end{align*}
    Choose $\delta$ and $b$ such that $1 + \varepsilon \abs{Z} \geq (1 + \delta \abs{Z}) \sqrt{(1 + b) / (1 - b)}$.
    Then the proof is complete. 
\end{proof}

\begin{lemma}\label{lem:19} %
    Suppose that $1 / p + 1 / q = 1$, $p > 1$, $q > 1$ and $\gamma \in \mathcal{P}_{0 \to x}$. 
    We assume that $p$ and $q$ satisfy that 
    \begin{align*}
        \frac{q}{p}
        < \frac{3}{8 \alpha^2 \norm{\hat{\varphi} / \omega}^2} .
    \end{align*}
    Then
    \begin{align*}
        {(\mathbbm{1}, {\varphi_\mathrm{g}}(x))}_\mathcal{F} 
        &\geq e^{- \frac{p}{q} c_{q / p} T}
        e^{- \frac{p}{2} \int_0^T \abs{\dot{\gamma}_t}^2 \, \mathrm{d}t}
        {\mathbb{E}\left[
            \mathbbm{1}_{S_T}
            e^{- \frac{1}{p} \int_0^T (V(B_t + \gamma_t) - E_\alpha) \, \mathrm{d}t}
            {m(B_T)}^{1 / p}
            e^{\int_0^T \dot{\gamma}_t \cdot \, \mathrm{d}B_t}
        \right]}^p ,
    \end{align*}
    where $c_{q / p}$ is given by \cref{eq:lem16} and then
    \begin{align*}
        \frac{p}{q} c_{q / p}
        = \frac{
            2 \alpha^2 \norm{\hat{\varphi} / \sqrt{\omega}}^2
        }{
            1 - 8 (q / p) \alpha^2 / 3 \; \norm{\hat{\varphi} / \omega}^2
        } .
    \end{align*}
\end{lemma}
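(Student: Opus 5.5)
We mimic the proof of the analogous lower bound for the full model (the lemma preceding \cref{thm:14}), now in the dipole setting and with a Cameron--Martin shift along $\gamma$. \textbf{Step 1.} For the dipole model $K_T$ is replaced by $\oplus_\mu \tilde W_\mu$. This field does not depend on the path, and $\|\cdot\|^2$ of its transverse part is $X_\mathrm{dip}$. The phase-rotation argument of \cref{lem:3} and the subsequent lemma go through verbatim. Restricting to $S_T$ and bounding $m(B_T+\cdot)$ from below by $m$ on the relevant compact set, we obtain
\begin{align*}
    {(\mathbbm{1}, {\varphi_\mathrm{g}}(x))}_\mathcal{F}
    \geq \mathbb{E}^x\left[
        \mathbbm{1}_{\{|B_t - \gamma_t| \leq 1\}}
        e^{- \int_0^T (V(B_t) - E_\alpha) \, \mathrm{d}t}
        e^{- X'}
        m(B_T)
    \right].
\end{align*}
Here $X'$ is a quadratic-plus-linear functional of $\tilde W$. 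In the statement the exponent is normalized so that the relevant Gaussian moment is $\mathbb{E}[e^{\frac{q}{p}\alpha^2 X_\mathrm{dip}}]$. The linear term is absorbed as in the proof of \cref{lem:6}, with $\alpha_*$ in place of $\alpha$ if needed, and we track the constants only as in the statement.

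\textbf{Step 2 (Girsanov).} Start the Brownian motion at $0$ and write the path as $B_t+\gamma_t$, with $\gamma_0=0$ and $\gamma_T=x$. Equivalently, apply the Cameron--Martin formula with density
\begin{align*}
    \exp\left(\int_0^T \dot\gamma_t \cdot \mathrm{d}B_t - \tfrac12 \int_0^T |\dot\gamma_t|^2 \, \mathrm{d}t\right).
\end{align*}
The key point is that $\tilde W_\mu = \int_0^T \mathrm{j}_s\tilde\varphi \, \mathrm{d}B^\mu_s$ has a deterministic integrand. Under the shift it becomes $\tilde W_\mu + \int_0^T \mathrm{j}_s \tilde\varphi \, \dot\gamma^\mu_s \, \mathrm{d}s$. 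I expect this deterministic drift to be bounded or otherwise handled; more simply, one can perform the shift only on the factors $V$, $m$ and the indicator, keeping $X_\mathrm{dip}$ as a functional of the unshifted $B$ inside the density-weighted expectation. The indicator then becomes $\mathbbm{1}_{S_T}$, and the endpoint satisfies $B_T+x$; this matches the statement's $m(B_T)$ after the appropriate relabeling.

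\textbf{Step 3 (reverse Hölder).} Set
\begin{align*}
    f = \mathbbm{1}_{S_T} e^{-\int_0^T (V - E_\alpha)} m \, e^{\int_0^T \dot\gamma \cdot \mathrm{d}B}
\end{align*}
and $g = e^{-X_\mathrm{dip}}$ (suitably scaled). Apply \cref{eq:holder}, which gives $\mathbb{E}[fg] \geq \mathbb{E}[f^{1/p}]^p / \mathbb{E}[g^{-q/p}]^{p/q}$. The constant factor $e^{-\frac12\int|\dot\gamma|^2}$ is pulled out. Writing it inside $f^{1/p}$ as $e^{-\frac{1}{2p}\int|\dot\gamma|^2}$ and raising to the power $p$ produces exactly $e^{-\frac p2\int|\dot\gamma|^2}$. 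The resulting $e^{\frac1p\int\dot\gamma\cdot \mathrm{d}B}$ is presumably what the statement records, up to the normalization of the stochastic term. The denominator is controlled by \cref{lem:16} with $\varepsilon = q/p$, which is admissible by the hypothesis. This gives $\mathbb{E}[e^{\frac{q}{p}\alpha^2 X_\mathrm{dip}}]^{p/q} \leq e^{\frac pq c_{q/p} T}$, and the explicit formula for $\frac pq c_{q/p}$ follows by substituting into \cref{eq:lem16}.

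\textbf{Main obstacle.} The delicate step is Step 2: justifying that the Girsanov shift interacts correctly with the Gaussian field factor, so that the $\gamma$-dependence does not enter $X_\mathrm{dip}$. The route I would try first is to apply Hölder before the shift, so that only the field-free factor is shifted and the field moment is estimated under the original Wiener measure by \cref{lem:16}. This is exactly where determinism of the dipole integrand is used.
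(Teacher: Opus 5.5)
As written, your Steps 2--3 do not go through. A Cameron--Martin/Girsanov shift changes variables in the \emph{whole} integrand. With $h$ the shift and $F$ the field-free factor (indicator, $V$, $m$),
\[
\mathbb{E}\bigl[F(B)\,e^{-X_{\mathrm{dip}}(B)}\bigr]
=e^{-\frac12\int_0^T|\dot h_t|^2\,\mathrm{d}t}\,
\mathbb{E}\Bigl[F(B+h)\,e^{-X_{\mathrm{dip}}(B+h)}\,e^{-\int_0^T\dot h_t\cdot\mathrm{d}B_t}\Bigr],
\]
so you cannot shift $V$, $m$ and the indicator while leaving $X_{\mathrm{dip}}$ a functional of the unshifted path. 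If you do shift it, $\tilde W_\mu$ picks up the drift $\int_0^T\mathrm{j}_s\tilde\varphi\,\dot h^\mu_s\,\mathrm{d}s$. The resulting moment is then no longer the one controlled by \cref{lem:16}. Pulling it back to the unshifted measure costs another H\"older step, an extra $\gamma$-dependent factor, and a condition on $q/p$ stricter than the one stated.

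Independently, the bookkeeping in Step 3 is wrong. With the density inside $f$, \cref{eq:holder} gives $e^{-\frac12\int_0^T|\dot\gamma_t|^2\mathrm{d}t}\,\mathbb{E}[\cdots e^{\frac1p\int_0^T\dot\gamma_t\cdot\mathrm{d}B_t}]^p$. The $p$-th power of $e^{-\frac1{2p}\int|\dot\gamma_t|^2\mathrm{d}t}$ is $e^{-\frac12\int|\dot\gamma_t|^2\mathrm{d}t}$, not $e^{-\frac p2\int|\dot\gamma_t|^2\mathrm{d}t}$. Also, $e^{\frac1p\int\dot\gamma_t\cdot\mathrm{d}B_t}$ versus $e^{\int\dot\gamma_t\cdot\mathrm{d}B_t}$ inside the expectation is not a matter of normalization. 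If the partial shift were legitimate, the kinetic term would carry coefficient $\frac12$ and the loss $p$ in \cref{thm:20} would disappear. That loss is exactly the price of doing H\"older before the shift.

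The ordering you mention in your last paragraph is the right one, and it is the order of the paper's proof:
\begin{enumerate}
\item Apply \cref{eq:holder} with $g=e^{-X}$ to the unshifted expectation.
\item Bound $\mathbb{E}[e^{\frac qp X}]^{p/q}\le e^{\frac pq c_{q/p}T}$ by \cref{lem:16} with $\varepsilon=q/p$, under the original Wiener measure.
\item Only then apply Cameron--Martin to the field-free expectation $\mathbb{E}[\mathbbm{1}\,e^{-\frac1p\int_0^T(V-E_\alpha)\,\mathrm{d}t}\,m^{1/p}]$.
\end{enumerate}
The density then sits inside an expectation that is afterwards raised to the $p$-th power. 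This is precisely what produces $e^{-\frac p2\int_0^T|\dot\gamma_t|^2\mathrm{d}t}$ outside and $e^{\int_0^T\dot\gamma_t\cdot\mathrm{d}B_t}$, with coefficient one, inside.

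In this route the field factor is never shifted, so the determinism of the dipole integrand plays no role in the shift. Indeed, the same H\"older-then-shift bound is reused for the full model in \cref{sec:5}. Determinism enters only through \cref{lem:16}, where it makes the admissible range of $\varepsilon$ independent of $T$.

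Finally, orient the tube of Step 1 along a path from $x$ to $0$. Under $\mathbb{E}^x$ with $\gamma_0=0$, the event $\{|B_t-\gamma_t|\le1,\ 0\le t\le T\}$ is empty once $|x|>1$. The correctly oriented tube is what becomes $\mathbbm{1}_{S_T}$ after the shift. It also keeps $B_T$ in the unit ball, where $m(\cdot)$ is bounded below.
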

\begin{proof}
    Shifting the starting point of Brownian motion to the origin, we have
    \begin{align*}
        {(\mathbbm{1}, {\varphi_\mathrm{g}}(x))}_\mathcal{F} 
        \geq \mathbb{E}\left[
            \mathbbm{1}_{S_T}
            e^{- \int_0^T (V(B_t + x) - E_\alpha) \, \mathrm{d}t}
            e^{- X}
            m(B_T + x)
        \right] . 
    \end{align*}
    We have the lower bound:
    \begin{align*}
        {(\mathbbm{1}, {\varphi_\mathrm{g}}(x))}_\mathcal{F} 
        \geq \frac{
            {\mathbb{E}\left[
                \mathbbm{1}_{S_T}
                e^{- \frac{1}{p} \int_0^T (V(B_t + x) - E_\alpha) \, \mathrm{d}t}
                {m(B_T + x)}^{1 / p}
            \right]}^p
        }{
            {\mathbb{E}[e^{{\frac{q}{p}} X}]}^{p / q}} .
    \end{align*}
    By \cref{lem:16},
    \begin{align*}
        {\mathbb{E}\left[e^{\frac{q}{p} X}\right]}^{p / q}
        \leq e^{\frac{p}{q} c_{q / p}}. 
    \end{align*}
    Then
    \begin{align*}
        {(\mathbbm{1}, {\varphi_\mathrm{g}}(x))}_\mathcal{F}
        \geq e^{- \frac{p}{q} c_{q / p} T}
        {\mathbb{E}\left[
            \mathbbm{1}_{S_T}
            e^{- \frac{1}{p} \int_0^T (V(B_t + x) - E_\alpha) \, \mathrm{d}t}
            {m(B_T + x)}^{1 / p}
        \right]}^p .
    \end{align*}
    For $\gamma \in \mathcal{P}_{x \to 0}$, we define a new measure $d\hat{\mathcal{W}}$ on $(\mathcal{X}, \mathcal{F})$ by
    \begin{align*}
        \mathrm{d}\hat{\mathcal{W}}
        = \exp(
            \int_0^T \dot{\gamma}_t \cdot \, \mathrm{d}B_t
            - \frac{1}{2} \int_0^T \abs{\dot{\gamma}_t}^2 \, \mathrm{d}t
        ) \mathrm{d}\mathcal{W} .
    \end{align*}
    Here $\gamma_t = \gamma_t$.
    Then $\mathrm{d}\hat{\mathcal{W}}$ is a probability measure on $(\mathcal{X}, \mathcal{F})$.
    Let $\gamma \in \mathcal{P}_{x \to 0}$ and
    \begin{align*}
        \hat{B}_t
        &= B_t
        + \int_0^t \dot{\gamma}_s \, \mathrm{d}s
        = B_t + \gamma_t - x . 
    \end{align*}
    Then ${(\hat{B}_t)}_{t \geq 0}$ is also 3D-Brownian motion under $\mathrm{d}\hat{\mathcal{W}}$.
    By Girsanov's theorem, we obtain that
    \begin{align*}
        {(\mathbbm{1}, {\varphi_\mathrm{g}}(x))}_\mathcal{F} 
        &\geq e^{- \frac{p}{q} c_{q / p} T}
        {\mathbb{E}\left[
            \mathbbm{1}_{S_T}
            e^{- \frac{1}{p} \int_0^T (V(\hat B_t + x) - E_\alpha) \, \mathrm{d}t}
            {m(\hat{B}_T + x)}^{1 / p}
            e^{
                \int_0^T \dot{\gamma}_t \cdot \, \mathrm{d}B_t
                - \frac{1}{2} \int_0^T \abs{\dot{\gamma}_t}^2 \, \mathrm{d}t
            }
        \right]}^p \\
        &= e^{- \frac{p}{q} c_{q / p} T}
        e^{- \frac{p}{2} \int_0^T \abs{\dot{\gamma}_t}^2 \, \mathrm{d}t}
        {\mathbb{E}\left[
            \mathbbm{1}_{S_T}
            e^{- \frac{1}{p} \int_0^T (V(B_t + \gamma_t) - E_\alpha) \, \mathrm{d}t} 
            {m(B_T)}^{1 / p}
            e^{\int_0^T \dot{\gamma}_t \cdot \, \mathrm{d}B_t}
        \right]}^p .
    \end{align*}
\end{proof}

\begin{theorem}\label{thm:20} %
    Suppose that $p$ and $q$ satisfy the assumptions of \cref{lem:19}, and that $V$ satisfies \cref{ass:17}. 
    Then there exists a constant $C>0$ such that, for every $\varepsilon > 0$, there exists $R_\varepsilon > 0$ for which
    \begin{align*}
        \norm{{\varphi_\mathrm{g}}(x)}_\mathcal{F}
        \geq C e^{- (p + \varepsilon) \mathscr{G}(\gamma)}
    \end{align*}
    for all $\abs{x} > R_\varepsilon$ and all $\gamma \in \mathcal{P}_{0 \to x}$.
\end{theorem}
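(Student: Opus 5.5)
The plan is to start from \cref{lem:19} and bring its right-hand side to the form $C e^{-p\,\mathscr{A}(\gamma,T)+ZT}$ for a suitable constant $Z$. Once that is done, \cref{lem:18} turns the action into the Agmon length. The first inequality is always
\[
    \norm{\varphi_\mathrm{g}(x)}_\mathcal{F} \geq (\mathbbm{1},\varphi_\mathrm{g}(x))_\mathcal{F},
\]
so it suffices to bound the vacuum component. Fix $\gamma\in\mathcal{P}_{0\to x}$ and restrict the expectation in \cref{lem:19} to the event $S_T=\{\abs{B_t}\le 1,\ 0\le t\le T\}$. On $S_T$ we have $\abs{B_t+\gamma_t-\gamma_t}\le 1$, so $V(B_t+\gamma_t)\le V_{\sup}(\gamma_t)$ and
\[
    e^{-\frac1p\int_0^T (V(B_t+\gamma_t)-E_\alpha)\,\mathrm{d}t}\ \ge\ e^{-\frac1p\int_0^T V_{\sup}(\gamma_t)\,\mathrm{d}t + \frac{T}{p}E_\alpha}.
\]
Also $m(B_T)\ge m>0$ on $S_T$, because $m$ is the infimum of the continuous positive function $m(\cdot)$ over the unit ball. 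After raising to the power $p$, these give the factor $e^{-\int_0^T V_{\sup}(\gamma_t)\,\mathrm{d}t}$. Together with the factor $e^{-\frac p2\int_0^T\abs{\dot\gamma_t}^2\,\mathrm{d}t}$ already present, this bounds the right-hand side below by $e^{-p\,\mathscr{A}(\gamma,T)}$ times
\[
    e^{-\frac pq c_{q/p}T + TE_\alpha}\, m\,
    \mathbb{E}\bigl[\mathbbm{1}_{S_T}\, e^{\int_0^T\dot\gamma_t\cdot \mathrm{d}B_t}\bigr]^p .
\]

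The main obstacle is the remaining Girsanov factor $\mathbb{E}[\mathbbm{1}_{S_T}e^{\int_0^T\dot\gamma_t\cdot\mathrm{d}B_t}]$. It must be bounded below by $e^{-z T}$ with $z$ independent of $\gamma$, even though $\dot\gamma$ may be large.

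\emph{Route (a).} Use Jensen's inequality under the conditional law $\mathcal{W}(\cdot\mid S_T)$:
\[
    \mathbb{E}[\mathbbm{1}_{S_T}e^{\int\dot\gamma\cdot\mathrm{d}B}] \ge \mathcal{W}(S_T)\exp\bigl(\mathbb{E}[\textstyle\int_0^T\dot\gamma_t\cdot\mathrm{d}B_t\mid S_T]\bigr).
\]
The conditional mean vanishes by the symmetry $B\mapsto -B$, since $S_T$ is invariant under it. Moreover $\mathcal{W}(S_T)\ge c\,e^{-\lambda_1 T}$, where $\lambda_1$ is the principal Dirichlet eigenvalue of $-\frac12\Delta$ on the unit ball (the small-ball estimate).

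\emph{Route (b).} If the symmetry argument has a subtlety, integrate by parts instead: $\int\dot\gamma\cdot\mathrm{d}B=\dot\gamma_T\cdot B_T-\int \ddot\gamma\cdot B\,\mathrm{d}t$. I expect route (a) to work cleanly.

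Either way this gives
\[
    (\mathbbm{1},\varphi_\mathrm{g}(x))_\mathcal{F}\ge C\,e^{-p\,\mathscr{A}(\gamma,T)+ZT},
    \qquad
    Z = E_\alpha-\tfrac pq c_{q/p}-p\lambda_1,
\]
with $C=m\,c^p$ independent of $x$, $\gamma$ and $T$.

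Finally, apply \cref{lem:18} with $\mathscr{A}$ replaced by $p\mathscr{A}$; its proof goes through verbatim, because \cref{eq:tau} and \cref{eq:eq} scale linearly. Using the minimizer $(\mathrm{q}^*,\tau^*)$ and choosing $\delta$ and $b$ small yields
\[
    (\mathbbm{1},\varphi_\mathrm{g}(x))_\mathcal{F}\ge C\,e^{-(p+\varepsilon)\mathscr{G}(\gamma)}
\]
for $\abs{x}>R_\varepsilon$ and every $\gamma\in\mathcal{P}_{0\to x}$. Here $\varepsilon$ absorbs $p\delta\abs{Z}$ and the factor $\sqrt{(1+b)/(1-b)}$, and this completes the proof.
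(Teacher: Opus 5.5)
Your proposal is correct and follows the paper's proof essentially step for step. On $S_T$ you bound $V(B_t+\gamma_t)\le V_{\sup}(\gamma_t)$ and $m(B_T)$ from below, handle the Girsanov factor by Jensen's inequality under the normalized measure on $S_T$ with the symmetry $B\mapsto -B$, use the Dirichlet small-ball bound $\mathcal{W}(S_T)\ge c\,e^{-\lambda T}$, and conclude with \cref{lem:18}; you also make explicit that \cref{lem:18} must be applied with $p\,\mathscr{A}$ in place of $\mathscr{A}$, which the paper does only tacitly.
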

\begin{proof}
    Let $\lambda$ denote the infimum of the spectrum of the operator $- (1 / 2) \Delta$ on the unit ball in $\mathbb{R}^3$ with the Dirichlet boundary condition.
        Then    
        \begin{align*}
            \mathbb{E}[\mathbbm{1}_{S_T}] \geq e^{- \lambda T}.
        \end{align*}
    See~\cite[\text{Lemma 2.3}]{CS81}.
    From \cref{lem:2,lem:a8}, $m(\cdot) = {(\mathbbm{1}, {\varphi_\mathrm{g}}(\cdot))}_\mathcal{F}$ is continuous and strictly positive.
    So, there exists $\ell > 0$ such that
    \begin{align*}
        m(B_T) \geq \ell
    \end{align*}
    holds on $S_T$.
    By $V_{\sup}(x) \geq V(x)$ for any $x \in \mathbb{R}^3$ we have
    \begin{align*}
        {(\mathbbm{1}, {\varphi_\mathrm{g}}(x))}_\mathcal{F} 
        \geq \ell 
        e^{- \frac{p}{q} c_{q / p} T}
        e^{- \int_0^T (V_{\sup}(\dot{\gamma}_t) + \frac{p}{2} \abs{\dot{\gamma}_t}^2 - E_\alpha) \, \mathrm{d}t} 
        {\mathbb{E}[
            \mathbbm{1}_{S_T}
            e^{\int_0^T \dot{\gamma}_t \cdot \, \mathrm{d}B_t}
        ]}^p
    \end{align*}
    Due to the Jensen's inequality we also have
    \begin{align*}
        \mathbb{E}\left[
            \mathbbm{1}_{S_T}
            e^{\int_0^T \dot{\gamma}_t \cdot \, \mathrm{d}B_t}
        \right]
        \geq \mathbb{E}[\mathbbm{1}_{S_T}]
        \exp(
            \frac{
                \mathbb{E}[
                    \mathbbm{1}_{S_T}
                    \int_0^T \dot{\gamma}_t \cdot \, \mathrm{d}B_t
                ]
            }{
                \mathbb{E}[\mathbbm{1}_{S_T}]
            }
        )
    \end{align*}
    The Wiener measure and the indicator function $\mathbbm{1}_{S_T}$ are invariant under the transformation $B_t \mapsto - B_t$, so we have
    \begin{align*}
        \mathbb{E}\left[
            \mathbbm{1}_{S_T}
            \int_0^T \dot{\gamma}_t \cdot \, \mathrm{d}B_t
        \right]
        = \mathbb{E}\left[
            \mathbbm{1}_{S_T}
            \int_0^T \dot{\gamma}_t \cdot \mathrm{d}(- B_t)
        \right]
        = - \mathbb{E}\left[
            \mathbbm{1}_{S_T}
            \int_0^T \dot{\gamma}_t \cdot \, \mathrm{d}B_t
        \right]
        = 0 . 
    \end{align*}
    We obtain
    \begin{align}
        {(\mathbbm{1}, {\varphi_\mathrm{g}}(x))}_\mathcal{F} 
        &\geq \ell
        e^{- \frac{p}{q} c_{q / p} T}
        {\mathbb{E}[\mathbbm{1}_{S_T}]}^p
        e^{- \int_0^T (V_{\sup}(\dot{\gamma}_t) + \frac{p}{2} \abs{\dot{\gamma}_t}^2 - E_\alpha) \, \mathrm{d}t} \notag \\
        &\geq \ell
        e^{- \frac{p}{q} c_{q / p} T}
        e^{- \lambda p T}
        e^{- \int_0^T (V_{\sup}(\dot{\gamma}_t) + \frac{p}{2} \abs{\dot{\gamma}_t}^2) \, \mathrm{d}t + T E_\alpha} \notag \\
        &\geq \ell   
        e^{- \int_0^T (V_{\sup}(\dot{\gamma}_t) + \frac{p}{2} \abs{\dot{\gamma}_t}^2) \, \mathrm{d}t}
        e^{-T \abs{\frac{p}{q} c_{q / p} + \lambda - E_\alpha}} .
    \end{align}
    Here we used that $p > 1$.
    Then
    \begin{align*}
        {(\mathbbm{1}, {\varphi_\mathrm{g}}(x))}_\mathcal{F}
        \geq \ell   
        e^{- p \mathscr{A}(\gamma) - \abs{Z} T} .
    \end{align*}
    Here $Z = (p / q) c_{q / p} + \lambda - E_\alpha$. 
    Hence by \cref{lem:18} we conclude that 
    there exists $R_\varepsilon > 0$ such that 
    \begin{align}
        {(\mathbbm{1}, {\varphi_\mathrm{g}}(x))}_\mathcal{F} 
        \geq \ell e^{- (p + \varepsilon \abs{Z}) \mathscr{G}(\gamma)}
    \end{align}
    holds for any $\abs{x} > R_\varepsilon$ and any $\gamma \in \mathcal{P}_{0 \to x}$.  
    Replacing  $\ell$ by $C$ and $\varepsilon \abs{Z}$ by $\varepsilon$, we complete the proof.
\end{proof}

Although \cref{thm:20} may not be optimal, it has the advantage that, for sufficiently small coupling constants, $p$ can be chosen arbitrarily close to $1$, whereas for large coupling constants the condition can still be accommodated by taking $p$ sufficiently large.

\begin{corollary}
    Suppose that $p$ and $q$ satisfy the assumptions of \cref{lem:19}.
    Let \[V(x) \leq \frac{A^2}{2}\abs{x}^{2 n}\] for some $n > 0$.
    Then there exists a constant $C > 0$ such that, for any $\varepsilon > 0$, there exists $R_\varepsilon > 0$ such that
    \begin{align*}
        \norm{{\varphi_\mathrm{g}}(x)}_\mathcal{F}
        \geq C \exp(
            - \frac{A}{n + 1} (p + \varepsilon) \abs{x}^{n + 1}
        )
    \end{align*}
    for all $\abs{x} > R_\varepsilon$.
\end{corollary}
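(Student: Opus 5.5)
The plan is to apply \cref{thm:20} with a concrete straight-line path and then compute $\mathscr{G}(\gamma)$ explicitly. For $x$ with $\abs{x}$ large, take $\gamma_t = (t/T)\,x$, $0\le t\le T$. This is the radial segment from $0$ to $x$, and it lies in $\mathcal{P}_{0\to x}$. \cref{thm:20} then gives
\[
\norm{\varphi_\mathrm{g}(x)}_\mathcal{F}\ \ge\ C e^{-(p+\varepsilon')\mathscr{G}(\gamma)}
\]
for all $\abs{x}>R_{\varepsilon'}$. Since $\mathscr{G}$ is invariant under reparametrization,
\[
\mathscr{G}(\gamma)=\int_0^{\abs{x}}\sqrt{2V_{\sup}(s\hat x)}\,\mathrm{d}s,
\qquad \hat x=x/\abs{x}.
\]

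The first step is to bound $V_{\sup}$ from above using the hypothesis on $V$. For $\abs{y-s\hat x}\le 1$ we have $\abs{y}\le s+1$, so
\[
V_{\sup}(s\hat x)\le \frac{A^2}{2}(s+1)^{2n}.
\]
Hence
\[
\mathscr{G}(\gamma)\le \int_0^{\abs{x}} A(s+1)^n\,\mathrm{d}s=\frac{A}{n+1}\bigl((\abs{x}+1)^{n+1}-1\bigr).
\]
Note that the $\sqrt2$ in $\sqrt{2V_{\sup}}$ exactly cancels the $1/2$ in the hypothesis, which is why the rate $A/(n+1)$ appears.

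The second step is to absorb the shift by $1$. Since $(\abs{x}+1)^{n+1}=\abs{x}^{n+1}\bigl(1+O(1/\abs{x})\bigr)$, for any $\eta>0$ there is $R$ with
\[
\mathscr{G}(\gamma)\le (1+\eta)\frac{A}{n+1}\abs{x}^{n+1}
\qquad\text{for }\abs{x}>R.
\]
Then
\[
(p+\varepsilon')(1+\eta)\le p+\varepsilon
\]
holds once $\varepsilon'$ and $\eta$ are chosen small relative to $\varepsilon$ (with $p$ fixed). Taking $R_\varepsilon=\max(R,R_{\varepsilon'})$ yields the claimed bound, with the same $C$ as in \cref{thm:20}, which is independent of $\varepsilon$.

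The main obstacle is that \cref{thm:20} requires \cref{ass:17}, which is not listed in the corollary. The corollary's hypothesis is only an upper bound on $V$, whereas \cref{thm:20} needs $V$ continuous, nonnegative and confining. I would read the corollary as implicitly retaining the standing assumptions of the section, namely \cref{ass:17} together with the conditions on $p,q$ from \cref{lem:19}, and say so explicitly in the proof. With that understood, the only remaining points are the choice of path and the elementary asymptotics above. A minor check is that the constant $C$ from \cref{thm:20} does not depend on $\varepsilon$; this holds because it comes from $\ell$.
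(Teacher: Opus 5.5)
Your proposal is correct and matches the paper's proof. Both apply Theorem~\ref{thm:20} with the straight path $\gamma_t = (t/T)x$, bound $V_{\sup}$ along it by $\frac{A^2}{2}(1+|\gamma_t|)^{2n}$, integrate to $\frac{A}{n+1}\bigl((1+|x|)^{n+1}-1\bigr)$, and absorb the lower-order terms into $\varepsilon$; the paper renames $(p+1)\varepsilon+\varepsilon^2$ as $\varepsilon$, which is the same as your choice of $\varepsilon'$ and $\eta$. Your remark that Assumption~\ref{ass:17} must be tacitly carried over is a fair reading, since the paper's proof also invokes Theorem~\ref{thm:20} without restating it.
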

\begin{proof} 
Let $\gamma_t = \frac{t}{T}x$ in \cref{thm:20}. Then
\begin{align*}
    \norm{\varphi_\mathrm{g}(x)}_\mathcal{F}
    \geq C
    \exp\left(
        -(p+\varepsilon)
        \int_0^T
        \sqrt{2V_{\sup}(\gamma_t)}
        \abs{\dot{\gamma}_t}
        \,\mathrm{d}t
    \right)
\end{align*}
for $\abs{x}>C_\varepsilon$, where $C_\varepsilon>0$ is some
constant. We note that
$
    V_{\sup}(x)\leq \frac{A^2}{2}(1+\abs{x})^{2n}$. 
Moreover,
\begin{align*}
    \int_0^T
    \sqrt{2V_{\sup}(\gamma_t)}
    \abs{\dot{\gamma}_t}
    \,\mathrm{d}t
    &\leq
    \int_0^T
    A\sqrt{\left(1+\abs{\frac{t}{T}x}\right)^{2n}}
    \abs{\frac{x}{T}}
    \,\mathrm{d}t
    \\
    &=
    \frac{A}{n+1}
    \left\{
        (1+\abs{x})^{n+1}-1
    \right\}
    \\
    &\leq
    \frac{A}{n+1}
    (1+\varepsilon)\abs{x}^{n+1}
\end{align*}
for $\abs{x}>C'_\varepsilon$, where $C'_\varepsilon>0$ is some
constant. Hence,
\begin{align*}
    \norm{\varphi_\mathrm{g}(x)}_\mathcal{F}
    \geq
    C\exp\left(
        -\frac{A}{n+1}
        (p+\varepsilon)(1+\varepsilon)
        \abs{x}^{n+1}
    \right)
\end{align*}
for all $
    \abs{x}>\max\{C_\varepsilon,C'_\varepsilon\}$. 
Since
$    (p+\varepsilon)(1+\varepsilon)
    =p+(p+1)\varepsilon+\varepsilon^2$, 
renaming $(p+1)\varepsilon+\varepsilon^2$ as $\varepsilon$
completes the proof of the corollary.
\end{proof} 

\section{Concluding remarks}\label{sec:5} %
An attempt to establish the spatial decay of the ground state of the full Pauli-Fierz model in terms of the Agmon distance leads naturally to a double stochastic integral, whose satisfactory control appears to be rather difficult.
This difficulty prevents a direct implementation of the Agmon distance approach. For this reason, in \cref{sec:3}  we adopt a different method to derive the desired decay estimate.
In this section, despite the difficulties described above, we derive a lower bound on the ground state of the full Pauli-Fierz Hamiltonian in terms of the Agmon distance.

Let $\eta > 1$. 
Suppose that
\begin{align}\label{eq:p} %
    p = 1 + \eta \triangle T, 
\end{align}
$1 / p + 1 / q = 1$ and $\gamma \in \mathcal{P}_{0 \to x}$.
Then $q / p < 1 / (\triangle T)$ is satisfied and
\begin{align*}
    {(\mathbbm{1}, {\varphi_\mathrm{g}}(x))}_\mathcal{F} 
    &\geq {C_3}\left(\frac{q}{p}\right)
    e^{- C_2 T}
    e^{- \frac{p}{2} \int_0^T \abs{\dot{\gamma}_t}^2 \, \mathrm{d}t} 
    {\mathbb{E}\left[
        \mathbbm{1}_{S_T}
        e^{- \frac{1}{p} \int_0^T (V(B_t + \gamma_t) - E_\alpha) \, \mathrm{d}t} 
        {m(B_T)}^{1 / p}
        e^{\int_0^T \dot{\gamma}_t \cdot \, \mathrm{d}B_t} 
    \right]}^p .
\end{align*}
Since $q / p = 1 / (\eta \triangle T)$, we can see that
\begin{align*}
    {C_3}\left(\frac{q}{p}\right)
    = {\left[
        \frac{
            2 \eta^2 e^{3 / 2}
        }{
            \sqrt{\eta^2 - 1} \left(\eta + \sqrt{\eta^2 - 1}\right)
        }
    \right]}^{\eta \triangle T / 6}
    \exp(
        \alpha_*^2 \norm{\frac{\hat{\varphi}}{\omega}}^2
    ) .
\end{align*}
Let 
\begin{align*}
    \kappa_\eta = \frac{\eta \triangle}{6}
    \log(
        \frac{
            2 \eta^2 e^{3 / 2}
        }{
            \sqrt{\eta^2 - 1} \left(\eta + \sqrt{\eta^2 - 1}\right)
        }
    ) .
\end{align*}
Hence
\begin{align*}
    {C_3}\left(\frac{q}{p}\right)
    = e^{\kappa_\eta T}
    e^{\alpha_*^2 \norm{\hat{\varphi} / \omega}^2} .
\end{align*}

\begin{proposition}
    Suppose \cref{eq:p} and that $V$ satisfies \cref{ass:17}.
    Then there exists a constant $C > 0$ such that, for any $\varepsilon > 0$, there exists $R_\varepsilon > 0$ such that
    \begin{align*}
        \norm{{\varphi_\mathrm{g}}(x)}_\mathcal{F}
        \geq C
        e^{- (1 + \varepsilon) \mathscr{G}(\gamma) - \varepsilon {\mathscr{G}(\gamma)}^2}
    \end{align*}
    for all $\abs{x} > R_\varepsilon$ and all $\gamma \in \mathcal{P}_{0 \to x}$.
\end{proposition}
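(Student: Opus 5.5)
The argument follows the proof of \cref{thm:20}, with the $T$-dependence of $p=1+\eta\triangle T$ now tracked explicitly. The ingredients are the displayed lower bound preceding the proposition and the identity $C_3(q/p)=e^{\kappa_\eta T}e^{\alpha_*^2\norm{\hat\varphi/\omega}^2}$. (The displayed inequality has $C_3$ and $e^{-C_2T}$ in the numerator; following the Hölder argument, the correct prefactor is $C_3(q/p)^{-1}e^{-C_2T}$ up to the $p/q$ powers. Either way it is $e^{-O(T)}$.)

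First, bound the expectation from below exactly as in \cref{thm:20}. Since $V\le V_{\sup}$ and $\abs{B_t}\le1$ on $S_T$, we have $V(B_t+\gamma_t)\le V_{\sup}(\gamma_t)$ there. Also $m(B_T)\ge \ell$ on $S_T$. Jensen's inequality together with the symmetry $B\mapsto -B$ removes the factor $e^{\int\dot\gamma\cdot dB}$, and $\mathbb{E}[\mathbbm{1}_{S_T}]\ge e^{-\lambda T}$. Raising to the power $p$ and using $\ell^{1/p\cdot p}=\ell$ gives
\[
(\mathbbm{1},\varphi_{\mathrm g}(x))_{\mathcal F}\ge \ell\, e^{-Z'T-\lambda pT}\exp\Big(-\int_0^T\big(V_{\sup}(\gamma_t)+\tfrac p2\abs{\dot\gamma_t}^2\big)dt\Big),
\]
where $Z'=C_2+\kappa_\eta+\abs{E_\alpha}$, absorbing the constant $e^{-\alpha_*^2\norm{\hat\varphi/\omega}^2}$ into the prefactor. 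Since $p=1+\eta\triangle T$, this yields
\[
-\log(\mathbbm{1},\varphi_{\mathrm g}(x))\le c+\mathscr{A}(\gamma,T)+\tfrac{\eta\triangle T}{2}\int_0^T\abs{\dot\gamma}^2dt+(Z'+\lambda)T+\lambda\eta\triangle T^2 .
\]

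Second, choose the parametrization. By reparametrization invariance, I take $\gamma$ with constant speed $\abs{\dot\gamma}=L/T$, where $L$ is the length, or better the Agmon-optimal parametrization. I would use the $\mathscr{A}$–$\mathscr{G}$ equivalence of \cref{lem:a5}/\cref{eq:tau}: pick $T=\tau^*$ and $\mathrm q^*$, for which $\mathscr{A}_b(\mathrm q^*,\tau^*)=\mathscr{G}_b(\gamma^*)$. Along $\mathrm q^*$, energy conservation gives $\tfrac12\abs{\dot{\mathrm q}^*}^2=w_b(\mathrm q^*)$, so $\int_0^{\tau^*}\abs{\dot{\mathrm q}^*}^2=\mathscr{G}_b$. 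The extra terms are then bounded by $T\cdot\mathscr{G}_b$, $T$, and $T^2$. By \cref{eq:tau}, $\tau^*\le\delta\,\mathscr{G}_b(\gamma^*)$ for $\abs x\ge R_{\delta,b}$. Hence $T\mathscr{G}_b\le\delta\mathscr{G}_b^2$, $T\le\delta\mathscr{G}_b$, and $T^2\le\delta^2\mathscr{G}_b^2$. Collecting terms,
\[
-\log(\cdots)\le c+(1+\delta(Z'+\lambda))\mathscr{G}_b+\delta(\eta\triangle/2+\lambda\eta\triangle\delta)\mathscr{G}_b^2 .
\]

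Finally, pass from $\mathscr{G}_b(\gamma^*)$ to $\mathscr{G}(\gamma)$ for arbitrary $\gamma$ using minimality and \cref{eq:Y}: $\mathscr{G}_b(\gamma^*)\le\mathscr{G}_b(\gamma)\le\sqrt{(1+b)/(1-b)}\,\mathscr{G}(\gamma)$. Then choose $b,\delta$ small in terms of $\varepsilon$, with $\eta$ fixed, say $\eta=2$. The norm bound follows from $\norm{\varphi_{\mathrm g}(x)}\ge(\mathbbm{1},\varphi_{\mathrm g}(x))$.

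The main obstacle is that $p$ depends on $T=\tau^*(x)$, which may grow with $x$. One must therefore check that the quadratic term $\tfrac{p-1}{2}\int\abs{\dot\gamma}^2$ is genuinely controlled by $\delta\mathscr{G}^2$. One must also check that the $V$-term, which carries weight $1/p$ inside and is raised to the power $p$, is not worsened. The latter is fine, since the two powers cancel exactly; the point is only that $e^{-\lambda pT}$ and the $T$-growth of $\kappa_\eta$ remain linear or quadratic in $T$. A further subtlety is that \cref{eq:tau} is stated for $w_b$, so the relation $V_{\sup}\le w_b(1+b)/(1-b)$ must be inserted before using the minimizer.
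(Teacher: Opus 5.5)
Your proposal is correct and follows the paper's proof. You use the same Girsanov–Jensen lower bound on the tube $S_T$ with $\mathbb{E}[\mathbbm{1}_{S_T}]\ge e^{-\lambda T}$, then set $T=\tau^*$, $\mathrm{q}=\mathrm{q}^*$ and use \cref{eq:tau} to turn the $O(T)$, $O(T\mathscr{A})$ and $O(T^2)$ terms into $\delta\mathscr{G}_b$ and $\delta\mathscr{G}_b^2$ before passing to $\mathscr{G}(\gamma)$ via \cref{eq:Y}. The only difference is cosmetic: the paper uses $V_{\sup}+\frac p2\abs{\dot\gamma}^2\le p\,(V_{\sup}+\frac12\abs{\dot\gamma}^2)$ to obtain $(1+\eta\triangle T)\mathscr{A}$, while you keep $\mathscr{A}+\frac{\eta\triangle T}{2}\int\abs{\dot\gamma}^2$ and use equipartition along $\mathrm{q}^*$, which gives the same form with a slightly better constant; you also correctly flag the misplaced $C_3(q/p)$ and the wrong sign of $C_2T$ in the displayed bound.
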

\begin{proof}
    In the same way as the proof of \cref{thm:20}, we have 
    $\mathbb{E}[\mathbbm{1}_{S_T}] \geq e^{- \lambda T}$ and $m(B_T) \geq \ell$ on $S_T$.
    We have
    \begin{align*}
        {(\mathbbm{1}, {\varphi_\mathrm{g}}(x))}_\mathcal{F} 
        \geq \ell 
        \frac{e^{C_2 T}}{{C_3}(q / p)}
        e^{- \int_0^T (V_{\sup}(\dot{\gamma}_t) + \frac{p}{2} \abs{\dot{\gamma}_t}^2 - E_\alpha) \, \mathrm{d}t} 
        {\mathbb{E}[
            \mathbbm{1}_{S_T}
            e^{\int_0^T \dot{\gamma}_t \cdot \, \mathrm{d}B_t}
        ]}^p
    \end{align*}
    We obtain
    \begin{align*}
        {(\mathbbm{1}, {\varphi_\mathrm{g}}(x))}_\mathcal{F} 
        &\geq \ell
        \frac{e^{C_2 T}}{{C_3}(q / p)}
        {\mathbb{E}[\mathbbm{1}_{S_T}]}^p
        e^{- \int_0^T (V_{\sup}(\dot{\gamma}_t) + \frac{p}{2} \abs{\dot{\gamma}_t}^2 - E_\alpha) \, \mathrm{d}t} \\
        &\geq \ell e^{- \kappa_\eta T} e^{- \alpha_*^2 \norm{\hat{\varphi} / \omega}^2}
        e^{C_2 T}e^{- \lambda p T}
        e^{
            - \int_0^T (V_{\sup}(\dot{\gamma}_t) + \frac{p}{2} \abs{\dot{\gamma}_t}^2) \, \mathrm{d}t + T E_\alpha} \\
        &\geq
        \ell e^{- \alpha_*^2 \norm{\hat{\varphi} / \omega}^2}
        e^{- p \int_0^T (V_{\sup}(\dot{\gamma}_t) + \frac{1}{2} \abs{\dot{\gamma}_t}^2) \, \mathrm{d}t}
        e^{- \eta \triangle \lambda T^2}
        e^{-T \abs{C_2 + \lambda - E_\alpha + \kappa_\eta}} . 
    \end{align*}
    Here we used that $p = 1 + \eta \triangle T>1$. 
    Then
    \begin{align*}
        {(\mathbbm{1}, {\varphi_\mathrm{g}}(x))}_\mathcal{F}
        \geq \ell
        e^{- \alpha_*^2 \norm{\hat{\varphi} / \omega}^2} 
        e^{- (1 + \eta \triangle T) \mathscr{A}(\gamma) - \abs{Z} T - \eta \triangle \lambda T^2}.
    \end{align*}
    Here $Z = C_2 + \lambda - E_\alpha + \kappa_\eta$.
    Hence we conclude that there exists $R_\varepsilon > 0$ such that 
    \begin{align*}
        {(\mathbbm{1}, {\varphi_\mathrm{g}}(x))}_\mathcal{F} 
        \geq \ell
        e^{\alpha_*^2 \norm{\hat{\varphi} / \omega}^2}
        e^{
            - (1 + \varepsilon \abs{Z}) \mathscr{G}(\gamma)
            - \varepsilon \eta (1 + \lambda) \triangle {\mathscr{G}(\gamma)}^2
        }
    \end{align*}
    holds for any $\abs{x} > R_\varepsilon$ and any $\gamma \in \mathcal{P}_{0 \to x}$.  
    Replacing $\ell e^{- \alpha_*^2 \norm{\hat{\varphi} / \omega}^2}$ by $C$ and both $\varepsilon \abs{Z}$ and $\varepsilon \eta (1 + \lambda) \triangle$ by $\varepsilon$, we complete the proof.
\end{proof}

\appendix
\section{Abstract second quantization and phase rotations}\label{a1}
We briefly introduce second quantizations. 
Let $S$ be  a contraction operator from a Hilbert space $\mathcal{K} $ to a Hilbert space $\mathcal{K} '$. 
The second quantization $\Gamma(S)$ of $S$ is the contraction operator from the boson Fock space $\mathcal{F}(\mathcal{K})$ to the boson Fock space $\mathcal{F}(\mathcal{K'})$ defined by
\begin{align*}
    \Gamma(S) = \mathbbm{1} \oplus \bigoplus_{n = 1}^{\infty} [\otimes_{\mathrm{sym}}^{n} S] .
\end{align*}
Let $h$ be a closed operator acting in $\mathcal{K}$ and we define the closed operator $h^{(n)}$ acting on $\otimes^n_{\mathrm{sym}} \mathcal{K}$ by 
\begin{align*}
    h^{(n)}
    = \overline{
        \sum_{k = 1}^n
            \underbrace{\mathbbm{1} \otimes \cdots \otimes \stackrel{k}{h} \otimes \cdots \otimes \mathbbm{1}}_{n \mathrm{- fold}}
    } . 
\end{align*}
Hence the differential second quantization of $h$ is a closed operator acting in the boson Fock space $\mathcal{F}(\mathcal {K})$ defined by  
\begin{align*}
    \mathrm{d}\Gamma(h) = 0 \oplus \bigoplus_{n = 1}^{\infty} h^{(n)}.
\end{align*}
The differential second quantization of the identity $\mathbbm{1}$ is the number operator $N = \,\mathrm{d}\Gamma(\mathbbm{1})$. 
The second quantization $\Gamma(S)$ and the differential second quantization  $\,\mathrm{d}\Gamma(h)$ are related by
\begin{align*}
    e^{i t \mathrm{d}\Gamma(h)} = \Gamma(e^{i t h}) ,
    \qquad t \in \mathbb{R}
\end{align*}
for any self-adjoint operator $h$.
Thus
\begin{align*}
    \frac{\mathrm{d}}{\mathrm{d}t} \Gamma(e^{i t h}) \lceil_{t = 0} = i \, \mathrm{d}\Gamma(h) .
\end{align*}
We fix a Hilbert space $\mathcal{K}$ and consider a self-adjoint operator $h$ acting on $\mathcal{K}$. 
Then the commutation relations hold: 
\begin{align}\label{eq:ccr} %
    [\mathrm{d}\Gamma(h), a(f)] = - a(h f),
    \qquad [\mathrm{d}\Gamma(h), a^\dag(f)] = a^\dag(h f)
\end{align}
for $f \in D(h)$.
We define the phase rotation by $\Theta = e^{i \frac{\pi}{2} N}$.
This is the second quantization of the fractional Fourier transform on $L^2(\mathbb{R}^3)$.
It is unitary and satisfies
\begin{align*}
    \Theta = e^{i \frac{\pi}{2} N} = \Gamma\left(e^{i \frac{\pi}{2}} \mathbbm{1} \right) = \Gamma(i \mathbbm{1}) .
\end{align*}
Let
\begin{align*}
    \phi(f) = \frac{1}{\sqrt2}(a^\dag(f)+a(\bar f)) ,
    \qquad \pi(f) = \frac{i}{\sqrt2}(a^\dag(f)-a(\bar f)) .
\end{align*}
\begin{lemma}\label{lem:1} %
    Let $f \in \mathcal{K}$.
    Then
    \begin{align*}
        e^{- i \frac{\pi}{2} N} \phi(f) e^{i \frac{\pi}{2} N} = -\pi(f) .
    \end{align*}
\end{lemma}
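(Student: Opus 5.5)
Write $\Theta = e^{i\frac{\pi}{2}N}$ and set $\phi_\Theta(f) = e^{-i\frac{\pi}{2}N}\phi(f)e^{i\frac{\pi}{2}N}$. The plan is to conjugate the creation and annihilation operators separately by $e^{-i\frac{\pi}{2}N}$ and then recombine them into $\phi(f)$ and $\pi(f)$.

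First I use the commutation relations \cref{eq:ccr} with $h=\mathbbm{1}$:
\begin{align*}
[N,a^\dag(f)]=a^\dag(f),\qquad [N,a(f)]=-a(f).
\end{align*}
These give $N a^\dag(f) = a^\dag(f)(N+1)$ and $N a(f) = a(f)(N-1)$ on finite-particle vectors. From this I get the intertwining relations
\begin{align*}
e^{-itN}a^\dag(f)e^{itN}=e^{-it}a^\dag(f),\qquad e^{-itN}a(f)e^{itN}=e^{it}a(f).
\end{align*}
The cleanest way to see this is to check it sector by sector. The operator $a^\dag(f)$ maps the $n$-particle space to the $(n+1)$-particle space, and $e^{itN}$ acts on the $n$-particle space as the scalar $e^{itn}$. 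So on an $n$-particle vector $\Psi$,
\begin{align*}
e^{-itN}a^\dag(f)e^{itN}\Psi = e^{-it(n+1)}e^{itn}a^\dag(f)\Psi = e^{-it}a^\dag(f)\Psi.
\end{align*}
The annihilation case is the same computation, with $a(f)$ lowering the particle number by one.

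Next I set $t=\pi/2$, which gives
\begin{align*}
e^{-i\frac{\pi}{2}N}a^\dag(f)e^{i\frac{\pi}{2}N}=-i\,a^\dag(f),\qquad e^{-i\frac{\pi}{2}N}a(\bar f)e^{i\frac{\pi}{2}N}=i\,a(\bar f).
\end{align*}
Substituting into the definition of $\phi(f)$:
\begin{align*}
\phi_\Theta(f)=\tfrac{1}{\sqrt2}\bigl(-i\,a^\dag(f)+i\,a(\bar f)\bigr)=-\tfrac{i}{\sqrt2}\bigl(a^\dag(f)-a(\bar f)\bigr)=-\pi(f).
\end{align*}
This is the claimed identity.

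The only step needing care is the domain and closure question. The sector-by-sector identity holds on the finite-particle subspace $\mathcal{F}_{\mathrm{fin}}$. This subspace is invariant under the unitary $e^{i\frac{\pi}{2}N}$, and it is a core for both $\phi(f)$ and $\pi(f)$, since both are essentially self-adjoint there. Conjugation by a unitary that preserves the core maps closures to closures. So the identity between the closed (self-adjoint) operators follows, including equality of domains, namely $e^{-i\frac{\pi}{2}N}D(\phi(f))=D(\pi(f))$.
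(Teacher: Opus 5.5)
Your proof is correct and takes the same route as the paper: you conjugate $a^\dag(f)$ and $a(\bar f)$ separately by $e^{-i\frac{\pi}{2}N}$, picking up the phases $-i$ and $i$, and then recombine them into $-\pi(f)$. Your sector-by-sector derivation of these phases, together with the remark that $\mathcal{F}_{\mathrm{fin}}$ is an invariant core, spells out what the paper compresses into ``by the commutation relations with $h=\mathbbm{1}$''.
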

\begin{proof}
By \cref{eq:ccr}, we have
$
    e^{- i \frac{\pi}{2} N} a(f) e^{i \frac{\pi}{2} N} =  i a(f)$ and 
    $e^{- i \frac{\pi}{2} N} {a^\dag}(f) e^{i \frac{\pi}{2} N} =- i {a^\dag}(f)$.
Then the lemma follows. 
\end{proof}

\section{Continuity of $m(x) = {(\mathbbm{1}, {\varphi_\mathrm{g}}(x))}_\mathcal{F}$}\label{sec:a2} %
This is a minor modification of~\cite[\text{Theorem 4.111}]{LHB26}.
\begin{lemma}\label{lem:a7} %
    Let $V = V_{+} - V_{-}$ with $V_{\pm} \in \mathcal{K}(\mathbb{R}^3)$.
    Then $m(x)$ is continuous in $x$.
\end{lemma}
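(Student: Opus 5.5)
We plan to show continuity of $m(x)=(\mathbbm{1},\varphi_\mathrm{g}(x))_\mathcal{F}$ by writing $m$ as the Feynman--Kac expectation in \cref{eq:lem1}, with $T>0$ fixed (say $T=1$), and proving that the right-hand side is continuous in $x$. Shifting the Brownian motion to start at the origin gives
\begin{align*}
    m(x)=\mathbb{E}\left[e^{-\int_0^T (V(B_t+x)-E_\alpha)\,\mathrm{d}t}\,\Xi_x\right],\qquad
    \Xi_x=(\mathbbm{1}, e^{-i\alpha A_\mathrm{E}(K_T^x)}\mathrm{J}_T\varphi_\mathrm{g}(B_T+x))_{\mathcal{E}} ,
\end{align*}
where $K_T^x=\bigoplus_j\int_0^T \mathrm{j}_t\tilde\varphi(\cdot-B_t-x)\,\mathrm{d}B_t^j$. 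Translation covariance of the field means $K_T^x$ is obtained from $K_T^0$ by a unitary translation $e^{-ik\cdot x}$ in momentum space, commuting with $\mathrm{j}_t$. So the photon factor is $x$-dependent only through a unitary that can be moved onto the vector $\varphi_\mathrm{g}(B_T+x)$ via the second quantization $\Gamma(e^{-ik\cdot x})$. This preserves the vacuum and is strongly continuous in $x$.

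The key steps, in order, are as follows.

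\textbf{Step 1.} Use the Markov property at an intermediate time $T/2$ to smooth the endpoint. Write $e^{-T(H_\alpha-E_\alpha)}=e^{-\frac{T}{2}(H_\alpha-E_\alpha)}e^{-\frac{T}{2}(H_\alpha-E_\alpha)}$, so $\varphi_\mathrm{g}(B_T+x)$ is integrated against the Gaussian transition density $p_{T}(y-x)$ of $B_T$. Equivalently, the expectation in $B_T$ is a convolution of an $L^2$ function with a heat kernel. This removes the need for any pointwise regularity of $\varphi_\mathrm{g}$.

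\textbf{Step 2.} Estimate the difference $m(x)-m(x')$ by splitting it into two pieces. The first is the change in the potential factor $e^{-\int_0^T V(B_t+x)\,\mathrm{d}t}$. The second is the change in the Fock factor, which includes the translated heat kernel and the unitary $\Gamma(e^{-ik\cdot x})$. The Fock factor is bounded by $\|\varphi_\mathrm{g}\|$-type quantities, using Lemma~\ref{lem:4} and the fact that $\|e^{-i\alpha A_\mathrm{E}}\|\le1$. For the potential factor, use the standard Kato-class argument as in the proof of continuity of Schr\"odinger semigroup kernels. The short-time piece $[0,\delta]$ is uniformly small by the Kato condition and Khasminskii's lemma. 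On $[\delta,T]$, conditioning on $B_\delta$ turns the $x$-shift into a shift of a smooth heat kernel, giving continuity in $L^1$. Throughout, Khasminskii's lemma gives $\sup_x\mathbb{E}[e^{-2\int_0^T V(B_t+x)dt}]<\infty$, which supplies uniform integrability so that dominated convergence and Schwarz apply.

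\textbf{Step 3.} Conclude that the right-hand side defines a continuous function equal to $m$ almost everywhere; this is the continuous version.

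The main obstacle is Step 2 for the photon part. The stochastic integral $K_T$ depends on the path jointly with the shift, so one must check that the translation really factors out as a unitary independent of the path. If exact covariance fails because $K_T$ is not adapted, we will instead bound $\mathbb{E}\|K_T^x-K_T^{x'}\|^2$ by $\int |\hat\varphi|^2\omega^{-1}|e^{-ik\cdot(x-x')}-1|^2\,\mathrm{d}k\,T\to0$ using the Itô isometry. We then combine this with $|e^{-i\alpha A(f)}-e^{-i\alpha A(g)}|\le|\alpha||A(f-g)|$ in $L^2(\mu_\mathrm{E})$.
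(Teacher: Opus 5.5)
The gap is in Step 2, in splitting $m(x)-m(x')$ into a ``potential change'' and a ``Fock change''.

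Consider the potential-change term $\mathbb{E}\bigl[(e^{-\int_0^T V(B_t+x)\,dt}-e^{-\int_0^T V(B_t+x')\,dt})\,\Xi_x\bigr]$. Here the photon factor $\Xi_x$ is pinned at $x$, and it depends on the path on $[0,\delta]$ through $K_T^x$. So conditioning on $B_\delta$ does not produce $\int (p_\delta(y-x)-p_\delta(y-x'))\,h(y)\,dy$. The heat-kernel trick from the Schr\"odinger-semigroup proof applies only to a single expectation whose entire integrand is a functional of $B_\delta+x$ and the increments after $\delta$. In your term, the potential is shifted by $x'$ while the rest is shifted by $x$.

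Bounding the term first by $g_\infty\,\mathbb{E}|e^{-\int_0^T V(B_t+x)dt}-e^{-\int_0^T V(B_t+x')dt}|$ does not help either. Conditioning on $B_\delta$ then just returns the same kind of same-path difference over $[\delta,T]$, averaged against $p_\delta$, which is circular. So the step you treat as routine does not close as described.

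By contrast, the photon part you flag as the main obstacle is easy. $K_T$ is an honest adapted It\^o integral; the non-adaptedness in the paper concerns only the double integral $\|K_T\|^2$. The It\^o isometry gives $\mathbb{E}\|K_T^x-K_T^{x'}\|^2=3T\int|\hat\varphi|^2\omega^{-1}|e^{-ik\cdot x}-e^{-ik\cdot x'}|^2\,dk\to0$. Since $B_T$ has a bounded density, $\mathbb{E}\|\varphi_\mathrm{g}(B_T+x)-\varphi_\mathrm{g}(B_T+x')\|^2\le(2\pi T)^{-3/2}\|\varphi_\mathrm{g}(\cdot+x)-\varphi_\mathrm{g}(\cdot+x')\|_{\mathcal H}^2\to0$. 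With Schwarz and Khasminskii this term vanishes. Your Step 1 is not what does this: $B_T$ is correlated with the rest of the integrand, so the expectation is not a convolution of $\varphi_\mathrm{g}$ with $p_T$.

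The fix is not to separate the two factors, which is what the paper does. Fix small $u>0$ and compare $m(x)$ with $\mathbb{E}^x\bigl[e^{-\int_u^T(V(B_s)-E_\alpha)ds}F_u\bigr]$. Here $F_u$ is built from the stochastic integral over $[u,T]$ only, time-shifted so that $\mathrm{J}_{T-u}$ appears. This depends on the path only after time $u$, so by the Markov property it equals $(P_u m(\cdot,T-u))(x)$ with $m(\cdot,T-u)$ bounded. Hence it is continuous in $x$.

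The difference from $m(x)$ is small uniformly in $x$. The potential on $[0,u]$ is handled by the Kato condition and Khasminskii's lemma. The difference $F_0-F_u$ is handled by $\sup_x\mathbb{E}^x|F_0-F_u|^2\to0$, which is the It\^o-isometry bound $\mathbb{E}\|\int_0^u\cdots dB_s\|^2=O(u)$. A uniform limit of continuous functions is continuous. This route needs no translation covariance, no $T/2$ step, and no regularity of $\varphi_\mathrm{g}$.

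If you want to keep your split, you must prove pathwise $L^1(\mathcal W)$-continuity of $x\mapsto e^{-\int_0^TV(B_t+x)dt}$ by a different mechanism. For example, use $\mathbb{E}\int_0^T|V(B_t+x)-V(B_t+x')|\,dt=\int G_T(y)|V(y+x)-V(y+x')|\,dy$ with $G_T(y)=\int_0^Tp_t(y)\,dt\le(2\pi|y|)^{-1}$. Then apply the Kato condition for $V_\pm$ near $y=0$, $L^1_{\mathrm{loc}}$-continuity of translations elsewhere, and uniform integrability from Khasminskii.
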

\begin{proof}
    Let $m(y, T) = {(\mathbbm{1}, e^{- T (H_\alpha - E_\alpha)}\varphi_\mathrm{g}(y))}_\mathcal{F}$.
    Let $P_u = e^{- u (- \Delta / 2)}$.
    Note that $m(\cdot, T) \in L^{\infty}(\mathbb{R}^3)$.
    We have 
    \begin{align*}
        (P_u m(\cdot, T - u))(x)
        = \mathbb{E}^x[
            e^{-\int_u^T (V(B_s) - E_\alpha) \, \mathrm{d}s} F_u
        ] ,
        \quad u < T ,
    \end{align*}
    where $F_u = {(\mathbbm{1}, e^{- i \alpha A_\mathrm{E}(\oplus \int_u^T \tilde \rho(\cdot - B_s) \,\mathrm{d}B_s)} \mathrm{J}_{T - u}{\varphi_\mathrm{g}}(B_T))}_\mathcal{E}$.
    The map $x \mapsto (P_u m(\cdot, T - u))(x)$ is continuous by the smoothing effect of $P_u$.
    We have 
    \begin{align*}
        m(x, T) - m(y, T)
        &= m(x, T)-(P_u m(\cdot, T - u))(x) \\
        &+ (P_u m(\cdot, T - u))(x) - (P_u m(\cdot, T - u))(y) \\
        &+ (P_u m(\cdot, T - u))(y) - m(y, T) .
    \end{align*}
    It suffices to show that 
    $m(x, T) - (P_u m(\cdot, T - u))(x) \to 0$ uniformly in $x$ as $u \to 0$. 
    We have
    \begin{align*}
        &m(x, T) - (P_u m(\cdot, T - u))(x) \\
        &= \mathbb{E}^x[
            (
                e^{- \int_0^T (V(B_s) - E_\alpha) \, \mathrm{d}s}
                - e^{-\int_u^T (V(B_s) - E_\alpha) \, \mathrm{d}s}
            ) F_0
        ]
        + \mathbb{E}^x[
            e^{- \int_u^T (V(B_s) - E_\alpha) \, \mathrm{d}s}
            ( F_0 - F_u )
        ] .
    \end{align*}
    Using the Markov property of Brownian motion ${(B_t)}_{t \geq 0}$, we see that 
    \begin{align*}
        &\abs{
            \mathbb{E}^x[
                (
                    e^{- \int_0^t (V(B_s) - E_\alpha) \, \mathrm{d}s}
                    - e^{- \int_u^t (V(B_s) - E_\alpha) \, \mathrm{d}s}
                )
                F_0
            ]
        } \\
        &= \mathbb{E}^x[
            e^{- \int_u^t (V(B_s) - E_\alpha) \, \mathrm{d}s}
            (
                e^{- \int_0^u (V(B_s) - E_\alpha) \, \mathrm{d}s} - 1
            ) F_0
        ] \\
        &\leq \mathbb{E}^x[
            e^{- \int_u^t (V(B_s) - E_\alpha) \, \mathrm{d}s}
        ]
        \sup_{z \in \mathbb{R}^3}
            \mathbb{E}^z\left[
                \abs{e^{- \int_0^u (V(B_s) - E_\alpha) \, \mathrm{d}s} - 1}
            \right]
            \norm{F_0}_{\infty} .
    \end{align*}
    Let $K = \sup_{\substack{x \in \mathbb{R}^3 \\ 0 \leq u \leq t}} \mathbb{E}^x[e^{- \int_u^t (V(B_s) - E_\alpha) \, \mathrm{d}s}]$.
    By Khasminskii's lemma and using that $V_{\pm} \in \mathcal{K}(\mathbb{R}^3)$, we also see that 
    \begin{align*}
        &\leq K
        \sup_{z \in \mathbb{R}^3}
            \mathbb{E}^z\left[
                \abs{e^{-\int_0^u (V(B_s) - E_\alpha) \, \mathrm{d}s} - 1}
            \right]
        \norm{F_0}_{\infty} \\
        &\leq K
        \sup_{z \in \mathbb{R}^3}
            \mathbb{E}^z\left[
                \abs{e^{\int_0^u (V_{-}(B_s) - E_\alpha) \, \mathrm{d}s} - 1}
                + \abs{1 - e^{\int_0^u (V_{+}(B_s) - E_\alpha) \, \mathrm{d}s}}
            \right]
            \norm{F_0}_{\infty} \\
        &\leq K
        \left(
            \frac{k_-}{1 - k_-}
            + \frac{k_+}{1 - k_+}
        \right)
        \norm{F_0}_{\infty} ,
    \end{align*}
    where $k_{\pm} = \sup_{z \in \mathbb{R}^3} \mathbb{E}^z\left[\int_0^u (V_\pm(B_s) - E_\alpha) \, \mathrm{d}s\right]$.
    Since $\lim_{u \to 0} k_{\pm} = 0$ by the fact $V_{\pm} \in \mathcal{K}(\mathbb{R}^3)$, we have 
    \begin{align*}
        \lim_{u \to 0}
        \abs{
            \mathbb{E}^x[
                (
                    e^{- \int_0^t (V(B_s) - E_\alpha) \, \mathrm{d}s}- e^{- \int_u^t (V(B_s) - E_\alpha) \, \mathrm{d}s}
                ) F_0
            ]
        }
        = 0 .
    \end{align*}
    On the other hand we have 
    \begin{align*}
        \abs{
            \mathbb{E}^x[
                e^{- \int_u^T (V(B_s) - E_\alpha) \, \mathrm{d}s}(F_0 - F_u)
            ]
        }
        \leq \sup_{\substack{x \in \mathbb{R}^3 \\ 0 \leq u \leq T}}
        {\left(
            \mathbb{E}^x[
                e^{- 2 \int_u^T (V(B_s) - E_\alpha) \, \mathrm{d}s}
            ]
        \right)}^{1 / 2}
        {\left(
            \mathbb{E}^x[\abs{F_0 - F_u}^2]
        \right)}^{1 / 2}.
    \end{align*}
    Since $\lim_{u \to 0} \sup_{x \in \mathbb{R}^3} \mathbb{E}^x[\abs{F_0 - F_u}^2] = 0$, we can conclude that 
    \begin{align*}
        \lim_{u \to 0} \sup_{x \in \mathbb{R}^3}
            \abs{
                \mathbb{E}^x[
                    e^{- \int_u^T (V(B_s) - E_\alpha) \, \mathrm{d}s}
                    F_0
                ]
                - \mathbb{E}^x[
                    e^{- \int_u^T (V(B_s) - E_\alpha) \, \mathrm{d}s}
                    F_u
                ]
            }
        = 0 . 
    \end{align*}
    Then $m(\cdot, T)$ is continuous for each $T > 0$. 
\end{proof}

\begin{lemma}\label{lem:a8} %
    Let $V$ be Kato-decomposable. 
    Then $m(x)$ is continuous in $x$, and $m$ is a continuous version of $(\mathbbm{1}, {\varphi_\mathrm{g}}(\cdot))$.
\end{lemma}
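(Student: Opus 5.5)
The plan is to show that $m$ coincides almost everywhere with $m(\cdot,T)$ for every $T>0$. Recall
\[
m(y,T)={(\mathbbm{1}, e^{-T(H_\alpha-E_\alpha)}\varphi_\mathrm{g}(y))}_\mathcal{F}.
\]
Once this is established, Lemma~\ref{lem:a7} supplies the continuity. Since $e^{-T(H_\alpha-E_\alpha)}\varphi_\mathrm{g}=\varphi_\mathrm{g}$ in $\mathcal{H}$, the Feynman-Kac formula (the lemma preceding \cref{eq:lem1}) identifies $m(\cdot,T)$ pointwise with the right-hand side of \cref{eq:lem1}. Hence $m(\cdot,T)$ is an everywhere-defined version of $x\mapsto(\mathbbm{1},\varphi_\mathrm{g}(x))_\mathcal{F}$ for each $T>0$.

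The first step is to remove the hypothesis $V_\pm\in\mathcal{K}(\mathbb{R}^3)$ of Lemma~\ref{lem:a7}, since here $V$ is only Kato-decomposable: $V_-\in\mathcal{K}$ and $V_+\in\mathcal{K}_{\mathrm{loc}}$. I would rerun the argument of Lemma~\ref{lem:a7} locally. Continuity is a local property, so it suffices to prove continuity on a ball $\{|x|<R\}$.

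The term involving $e^{-\int_0^u V_-}$ is handled exactly as before, using Khasminskii's lemma. The only change concerns the $V_+$ term $\abs{1-e^{-\int_0^u V_+(B_s)\,\mathrm{d}s}}$; note that $e^{-\int V_+}\le 1$ here, so no Khasminskii bound is needed. I would split on the event $\{\sup_{0\le s\le u}|B_s-x|\le 1\}$. On this event $V_+$ may be replaced by $\mathbbm{1}_{\{|y|\le R+1\}}V_+\in\mathcal{K}$, and the estimate
\[
\abs{1-e^{-Y}}\le Y
\]
together with the Kato condition gives uniform smallness in $|x|<R$ as $u\to 0$. The complementary event has probability at most $Ce^{-1/(2u)}$, uniformly in $x$. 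Since $F_0$ is bounded (for instance by $g_\infty$, through \cref{lem:4}) and $K$ is finite by Kato-decomposability, both contributions vanish uniformly on the ball as $u\to0$. The $F_0-F_u$ term needs no modification. One then obtains continuity of $m(\cdot,T)$ on every ball, hence on $\mathbb{R}^3$.

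The second step is to check that the resulting continuous function does not depend on $T$ and is the version used in Lemma~\ref{lem:2} and in the definition of $m$. Two continuous functions that agree almost everywhere agree everywhere, so all $m(\cdot,T)$ coincide; call the common function $m$.

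I expect the main obstacle to be the justification that $\sup_x\mathbb{E}^x\abs{F_0-F_u}^2\to0$ and that $F_0$ is bounded. These were taken for granted in Lemma~\ref{lem:a7}. Boundedness follows from the unitarity of $e^{-i\alpha A_\mathrm{E}(\cdot)}$, the contractivity of $\mathrm{J}$, and $\sup_y\norm{\varphi_\mathrm{g}(y)}_\mathcal{F}<\infty$ (\cref{lem:4}). For the convergence I would use
\[
\norm{(e^{-i\alpha A_\mathrm{E}(K)}-e^{-i\alpha A_\mathrm{E}(K')})\mathbbm{1}}\le C\abs{\alpha}\norm{K-K'},
\]
combined with
\[
\mathbb{E}\norm{\int_0^u \mathrm{j}_s\tilde\varphi(\cdot-B_s)\,\mathrm{d}B_s}^2=3u\norm{\hat\varphi/\sqrt\omega}^2,
\]
which is uniform in $x$, together with strong continuity of $u\mapsto \mathrm{J}_{T-u}$.
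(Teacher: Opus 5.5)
Your proof is correct, but it is organized differently from the paper's.

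\textbf{The paper's route.} The paper does not reopen the proof of Lemma~\ref{lem:a7}. It truncates the potential, $V_R=V_+\mathbbm{1}_{\{\lvert x\rvert\le R\}}-V_-$, so that both parts lie in $\mathcal{K}$. It then applies Lemma~\ref{lem:a7} to the corresponding Feynman--Kac expression $m_R$, which keeps the same $\varphi_\mathrm{g}$ at the endpoint. Finally it shows $m_R\to m$ uniformly on compacts. The two weights $e^{-\int_0^T V(B_s)\,\mathrm{d}s}$ and $e^{-\int_0^T V_R(B_s)\,\mathrm{d}s}$ differ only on paths leaving the ball of radius $R$ during $[0,T]$. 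That event is controlled by the Schwarz inequality, Khasminskii's lemma for $e^{2\int_0^T V_-}$, and the L\'evy maximal inequality.

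\textbf{Your route.} You localize inside the proof of Lemma~\ref{lem:a7}, at the single place where global Kato-ness of $V_+$ enters. That place is the short-time term $\mathbb{E}^x[1-e^{-\int_0^u V_+(B_s)\,\mathrm{d}s}]$. You make it small uniformly on a ball by splitting on whether the path moves more than distance $1$ during $[0,u]$.

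\textbf{Comparison.} The paper's route is more modular: Lemma~\ref{lem:a7} is used as a black box, plus an approximation over the whole interval $[0,T]$. It tacitly relies on the fact that the proof of Lemma~\ref{lem:a7} never needs $\varphi_\mathrm{g}$ to be an eigenvector of the Hamiltonian with potential $V_R$. Your route works with $V$ throughout. It also supplies three points the paper takes for granted:
\begin{itemize}
\item the bound $\lvert F_0\rvert\le g_\infty$;
\item the estimate $\sup_x\mathbb{E}^x\lvert F_0-F_u\rvert^2=O(u)$, via $\lvert e^{i\theta}-1\rvert\le\lvert\theta\rvert$ and the It\^o isometry;
\item the independence of the continuous version from $T$.
\end{itemize}

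\textbf{One small inaccuracy.} In three dimensions the exit probability $\mathcal{W}^x(\sup_{0\le s\le u}\lvert B_s-x\rvert>1)$ is not bounded by $Ce^{-1/(2u)}$. There is a polynomial prefactor $u^{-1/2}$; alternatively, arguing coordinatewise gives $Ce^{-c/u}$ with a smaller $c$. This is harmless, since any bound tending to zero uniformly in $x$ suffices.
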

\begin{proof}
    Let $V \in \mathcal{K}$.
    Let ${V_R}(x) = {V_{+}}(x) {\mathbbm{1}_{\abs{x} \leq R}}(x) - {V_{-}}(x)$. 
    Hence
    \begin{align*}
        {m_R}(x)
        = \mathbb{E}^x[
            e^{- \int_0^T {V_R}(B_s) \, \mathrm{d}s}
            (\mathbbm{1}, e^{- i \alpha A(K_T)} {\varphi_\mathrm{g}}(B_T))
        ]
        e^{E_\alpha T}
    \end{align*}
    is continuous in $x$ by \cref{lem:a7}.
    We have 
    \begin{align*}
        \abs{m(x) - {m_R}(x)}
        &\leq 2 C e^{T E_\alpha}
        \mathbb{E}^x[
            e^{\int_0^T {V_{-}}(B_s) \, \mathrm{d}s}
            \mathbbm{1}_{\{\sup_{0 \leq s \leq T} \abs{B_s} > R\}}
        ] \\
        &\leq 2 C e^{T E_\alpha}
        {\left(
            \mathbb{E}^x[
                e^{2 \int_0^T {V_{-}}(B_s) \, \mathrm{d}s}
            ]
        \right)}^{1 / 2}
        {\left(
            \mathcal{W}^x \left(
                \left\{\sup_{0\leq s\leq T} \abs{B_s} > R\right\} 
            \right)
        \right)}^{1 / 2}.
    \end{align*}
    Let $M \subset \{x \in \mathbb{R}^3 \mid \abs{x} \leq R / 2\}$ be a compact set. 
    By L\'{e}vy maximal inequality we have 
    \begin{align*}
        \sup_{x \in M} \abs{m(x) - {m_R}(x)}
        &\leq 2 C e^{T E_\alpha}
        \sup_{x \in M}
            {\left(
                \mathbb{E}^x[
                    e^{2 \int_0^T {V_{-}}(B_s) \, \mathrm{d}s}
                ]
            \right)}^{1 / 2}
        \sup_{x \in M} {\mathcal{W}^x(\{ \abs{B_T} > R \})}^{1 / 2} .
    \end{align*}
    Since 
    $\mathcal{W}^x(\{ \abs{B_T} > R\}) \to 0$ as $R \to + \infty$, we conclude that $m_R$ locally uniformly converges to $m$ as $R\to0$, 
    which implies that $m$ is continous. 
    Finally since ${(\mathbbm{1}, {\varphi_\mathrm{g}}(x))}_\mathcal{F} = m(x)$ for a.e. $x \in \mathbb{R}^3$ and $m(\cdot)$ is continuous, the lemma follows.
\end{proof}

\section{Agmon distance}
We introduce a geometric framework for measuring the spatial exponential decay of the ground state. Our presentation is a slight modification of that in~\cite{agm82,CS81}.

\subsection{Geodesic distance by $V$}
Suppose \cref{ass:17}. 
Let us set
\begin{align*}
    w = V_{\sup} .
\end{align*}
$w$ is also continuous and satisfies that $w(x) \geq \varepsilon$ for all $x \in \mathbb{R}^3$ and $\lim_{\abs{x} \to + \infty} w(x) = + \infty$.
We fix $T > 0$. 
We define two $C^1$-path spaces:
\begin{align*}
    &\mathcal{P}^*
    = \{
        \mathrm{q} \in {C^1}([0, T]; \mathbb{R}^3)
        \mid \mathrm{q}_0 = x, \mathrm{q}_T = 0
    \} , \\
    &\mathcal{P}
    = \{
        \gamma \in {C^1}([0, T]; \mathbb{R}^3)
        \mid \gamma_0 = 0, \gamma_T = x
    \} .
\end{align*}
Let
\begin{align*}
    \mathscr{A}(\mathrm{q}, T)
    &= \int_0^T \left(w(\mathrm{q}_s) + \frac{1}{2} \abs{\dot{\mathrm{q}_s}}^2 \right) \, \mathrm{d}s ,
    \quad \mathrm{q} \in \mathcal{P}^* , \\
    \mathscr{G}(\gamma, T)
    &= \int_0^T \sqrt{2 w(\gamma_s)} \abs{\dot{\gamma}_s} \, \mathrm{d}s ,
    \quad \gamma \in \mathcal{P} .
\end{align*}
In the differential geometry $\mathscr{A}(\mathrm{q}, T)$ describes the energy and $\mathscr{G}(\gamma, T)$ the distance.
We set $\gamma^\mathrm{q}_s = \mathrm{q}_{T - s}$ for $\mathrm{q} \in \mathcal{P}^*$ and $\mathrm{q}^{\gamma}_s = \gamma_{T - s}$ for $\gamma \in \mathcal{P}$.
Then $\gamma^\mathrm{q} \in \mathcal{P}$, $\mathrm{q} ^\gamma \in \mathcal{P}^*$ and
\begin{align*}
    &\mathscr{G}(\gamma^\mathrm{q}, T)
    \leq \mathscr{A}(\mathrm{q}, T) , \\
    &\mathscr{G}(\gamma, T)
    \leq \mathscr{A}(\mathrm{q}^\gamma, T)
\end{align*}
follow for any $\mathrm{q} \in \mathcal{P}^*$ and $\gamma \in \mathcal{P}$ by the arithmetic and geometric inequality.
We are interested in the existence of a minimizer $\gamma^*$ of 
the length $\mathscr{G}(\gamma, T)$.
This kind of problem is very standard in the differential geometry and it is related to geodesic arcs.
$w$ is however not a $C^{\infty}$-function.
Then we shall approximate $w$ by a $C^{\infty}$-function.
\begin{lemma}\label{lem:a1} %
    For $b > 0$ there exists $Y \in C^{\infty}(\mathbb{R}^3)$ such that
    \begin{align*}
        (1 - b) w(x) \leq Y(x) \leq (1 + b) w(x) .
    \end{align*}
\end{lemma}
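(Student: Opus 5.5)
We construct $Y$ by mollification, exploiting that $w=V_{\sup}$ is continuous and bounded below by a positive constant. Fix $b>0$. Since $w$ is continuous and $w(x)\ge \varepsilon_0>0$, the function $\log w$ is continuous on $\mathbb{R}^3$, hence uniformly continuous on every compact set. We aim for a smooth $g$ with
\[
\abs{g(x)-\log w(x)}\le \delta:=\min\{\log(1+b),\,-\log(1-b)\}
\]
for all $x$. Then $Y=e^{g}$ is $C^\infty$ and satisfies $(1-b)w\le Y\le (1+b)w$. Working with $\log w$ turns the multiplicative requirement into a uniform additive one. This matters because $w\to\infty$ at infinity, so a plain additive approximation of $w$ itself would not give relative control.

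To build $g$, take a locally finite smooth partition of unity $\{\chi_k\}_{k\ge 1}$ subordinate to the annuli $\Omega_k=\{k-2<\abs{x}<k+1\}$, with $\sum_k\chi_k=1$. On each compact set $\overline{\Omega_k}$ enlarged by $1$, the function $\log w$ is uniformly continuous. Choose a mollification radius $r_k\in(0,1)$ so small that $\abs{(\log w)*\rho_{r_k}-\log w}\le\delta$ on $\Omega_k$, where $\rho_r$ is a standard mollifier. Set
\[
g=\sum_k \chi_k\,\bigl((\log w)*\rho_{r_k}\bigr).
\]
This sum is locally finite, so $g$ is smooth. Moreover,
\[
\abs{g-\log w}\le\sum_k\chi_k\,\abs{(\log w)*\rho_{r_k}-\log w}\le\delta,
\]
since each point lies only in annuli where the individual estimate holds and the $\chi_k$ sum to one.

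The only step needing care is the positivity and continuity of $w$, which are what make $\log w$ well defined and continuous. Continuity of $V_{\sup}(x)=\sup_{\abs{y-x}\le1}V(y)$ follows from the uniform continuity of $V$ on compact sets: the closed unit balls centered at nearby points are close in Hausdorff distance, so the suprema are close. The lower bound $w\ge\varepsilon_0>0$ comes from \cref{ass:17}(2), which gives $w>0$ everywhere, together with continuity and $w\to\infty$ at infinity from \cref{ass:17}(3). These force $\inf w$ to be attained on a compact set, hence positive. Once these are in place, the construction is routine.
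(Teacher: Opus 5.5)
Your proof is correct, but it takes a different route from the paper's, which works with $w$ itself rather than $\log w$. The paper takes a partition of unity $\{f_n\}$ and mollifies each piece $f_n w$. It chooses the radius $\varepsilon_n$ so that $|\rho_{\varepsilon_n}*(f_nw)-f_nw|\le r_n$, with a \emph{variable} tolerance $r_n=2^{-n}b\inf_{K_n}w$, and sets $Y=\sum_n\rho_{\varepsilon_n}*(f_nw)$. The relative bound $(1-b)w\le Y\le(1+b)w$ then follows because $\inf_{K_n}w\le w(x)$ whenever $x\in K_n$, and $\sum_n 2^{-n}\le 1$.

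You instead pass to $\log w$, where the multiplicative requirement becomes a \emph{uniform} additive one. This buys you several simplifications: a single tolerance $\delta$ works on every annulus, there is no bookkeeping of local infima or geometric weights, and $Y=e^{g}$ is automatically strictly positive. The price is that your $Y$ is no longer linear in $w$, whereas the paper's is a sum of mollifications of $f_nw$. That is irrelevant for how $Y$ is used later (only through $w_b=Y/(1-b)$).

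Both constructions need only continuity and pointwise positivity of $w$. You verify these explicitly, while the paper merely asserts them; the global bound $\inf w>0$ you derive is not actually needed for this lemma, though the paper uses it elsewhere.

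One small point: for $b\ge 1$ your $\delta=\min\{\log(1+b),-\log(1-b)\}$ is undefined. You can simply take $\delta=\log(1+b)$ for every $b>0$, since $e^{-\delta}=1/(1+b)\ge 1-b$. For $0<b<1$ this is the minimum anyway.
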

\begin{proof}
    Let ${\{f_n\}}_n$ be a partition of unity such that $f_n \in C_0^{\infty}(\mathbb{R}^3)$ and $\mathrm{supp} \, f_n \cap K$ is empty for all but finitely many $n$ for any compact set $K$. 
    Let $\rho \in C_0^{\infty}(\mathbb{R}^3)$ be such that $\rho(x) \geq 0$ and $\int_{\mathbb{R}^3} \rho(x) \, \mathrm{d}x = 1$.
    Let ${\rho_\varepsilon}(x) = \varepsilon^{-3} \rho(x / \varepsilon)$.
    Let $K_n$ be a compact set such that 
    $\mathrm{supp} \, \rho_n \subset K_n$.
    Set $r_n = 2^{- n} b \inf_{x \in K_n} \abs{w(x)}$. 
    Suppose that $0 < \varepsilon_n < 1 / n$ so that $\mathrm{supp} \, (\rho_{\varepsilon_n} * f_n w) \subset K_n$ and $\sup_{x} \abs{(\rho_{\varepsilon_n}* f_n w)(x) - f_n(x) w(x)} \leq r_n$. 
    Then
    \begin{align*}
        - r_n
        \leq (\rho_{\varepsilon_n} * f_n w)(x) - f_n(x) w(x)
        \leq r_n        
    \end{align*}
    and we have
    \begin{align*}
        f_n(x) w(x) - r_n
        \leq (\rho_{\varepsilon_n} * f_n w)(x)
        \leq f_n(x) w(x) + r_n
    \end{align*}
    Define $Y = \sum_{n = 1}^{\infty} (\rho_{\varepsilon_n} * f_n w)(x)$.
    We have
    \begin{align*}
        (1 - b) w(x) \leq Y(x) \leq (1 + b) w(x) .
    \end{align*}
\end{proof}

For $x, y \in \mathbb{R}^3$ we define the geodesic distance for $w$ by
\begin{align*}
    \varrho(x, y) 
    = \inf\{
        \mathscr{G}(\gamma, T)
        \mid
        \gamma \in {C^1}([0, T]; \mathbb{R}^3) ,
        \gamma_0 = x ,
        \gamma_T = y
    \} .
\end{align*}
$\varrho$ defines a metric on $\mathbb{R}^3$. 
Suppose \cref{ass:17}. 
Let $Y$ be given by \cref{lem:a1}. 
Set
\begin{align*}
    w_b = \frac{1}{1 - b} Y
\end{align*}
for simplicity.
Then $W(x) \leq {w_b}(x)$. 
Let
\begin{align*}
    {\mathscr{G}_b}(\gamma, T)
    = \int_0^T \sqrt{2 {w_b}(\gamma_s)} \abs{\dot{\gamma}_s} \, \mathrm{d}s .
\end{align*}
Then $w_b \geq \varepsilon / (1 - b) > 0$, $w_b \in C^{\infty}(\mathbb{R}^3)$ and $\lim_{\abs{x} \to + \infty} {w_b}(x) = +\infty$. 
Let
\begin{align*}
    {\varrho_b}(x, y)
    = \inf\{
        {\mathscr{G}_b}(\gamma, T)
        \mid
        \gamma \in {C^1}([0, T]; \mathbb{R}^3) ,
        \gamma_0 = x ,
        \gamma_T = y
    \}.
\end{align*}
Let ${\varrho_b}(0, X) = \inf_{x \in X} {\varrho_b}(0, x)$ be the distance from $0$ to $X$.
One point compactification of $\mathbb{R}^3$ is denoted by $\mathbb{R}^3 \cup \{\infty\}$.
We define the distance from $0$ to $\{\infty\}$ by
\begin{align*}
    {\varrho_b}(0, \{\infty\})
    = \sup_{
        \substack{K \subset \mathbb{R}^3 \\
            K : \mathrm{compact}
        }
    }
    {\varrho_b}(0, \mathbb{R}^3 \setminus K) .
\end{align*}

\begin{lemma}\label{lem:a2} %
    $\varrho_b$ is geodesically complete. 
\end{lemma}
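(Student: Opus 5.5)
We have to show that the metric space $(\mathbb{R}^3,\varrho_b)$ is complete, and in addition that $\varrho_b(0,\{\infty\})=+\infty$. By Hopf--Rinow, completeness then gives geodesic completeness. The whole argument rests on a single pointwise lower bound for the conformal weight. Since $w_b\ge \varepsilon/(1-b)>0$, we have
\begin{align*}
    \sqrt{2w_b(y)}\ge c_0:=\sqrt{2\varepsilon/(1-b)}>0 \qquad\text{for all } y\in\mathbb{R}^3.
\end{align*}

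\textbf{Comparison with the Euclidean metric.} For any $C^1$ path $\gamma$ from $x$ to $y$,
\begin{align*}
    \mathscr{G}_b(\gamma,T)\ge c_0\int_0^T\abs{\dot\gamma_s}\,\mathrm{d}s\ge c_0\abs{x-y}.
\end{align*}
Taking the infimum over paths gives $\varrho_b(x,y)\ge c_0\abs{x-y}$. In the other direction, take the straight segment and use that $w_b$ is continuous, hence bounded on compact sets:
\begin{align*}
    \varrho_b(x,y)\le \sup_{z\in[x,y]}\sqrt{2w_b(z)}\,\abs{x-y}.
\end{align*}
So on every bounded set $\varrho_b$ is bi-Lipschitz equivalent to the Euclidean distance. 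In particular the two metrics induce the same topology.

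\textbf{Completeness.} Let $(x_n)$ be $\varrho_b$-Cauchy. The lower bound makes it Euclidean Cauchy, so $x_n\to x$ in $\mathbb{R}^3$. The sequence is then bounded, and the local upper bound gives $\varrho_b(x_n,x)\to 0$. Hence $(\mathbb{R}^3,\varrho_b)$ is complete.

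The same lower bound also gives that closed $\varrho_b$-balls are Euclidean-bounded and closed, hence compact. Moreover
\begin{align*}
    \varrho_b(0,\mathbb{R}^3\setminus\{\abs{x}\le R\})\ge c_0R\to\infty,
\end{align*}
so $\varrho_b(0,\{\infty\})=+\infty$. If one wants to use the confinement $w_b\to\infty$, this bound even improves to superlinear growth, but that is not needed here.

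\textbf{Geodesic completeness.} $\varrho_b$ is the Riemannian distance of the smooth metric $g=2w_b(y)\abs{\mathrm{d}y}^2$, using $w_b\in C^\infty$ and $w_b>0$. Note that $\mathscr{G}_b$ is exactly the $g$-length of a curve. Since the metric space is complete, Hopf--Rinow gives that every geodesic extends to all times, and that any two points are joined by a minimizing geodesic. That last fact is what \cref{lem:a3} will need.

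\textbf{Expected obstacle.} The main subtlety is bookkeeping rather than analysis. First, $\varrho_b$ is defined using $C^1$ paths on a fixed interval $[0,T]$; reparametrization invariance of $\mathscr{G}_b$ identifies this with the usual Riemannian length distance. Second, we must make sure Hopf--Rinow is applied to a genuinely smooth metric, which is exactly why we work with $w_b$ and not with $w$. If one prefers to avoid Hopf--Rinow, the alternative is to extend a maximal geodesic $\gamma$ on $[0,t^*)$ by hand. Its speed $\abs{\dot\gamma}_g$ is constant, so $\gamma(t)$ is $\varrho_b$-Cauchy as $t\uparrow t^*$ and converges by completeness. Local existence of geodesics for the smooth ODE near the limit point then lets us continue $\gamma$ past $t^*$, contradicting maximality.
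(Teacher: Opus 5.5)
Your argument is correct, but it takes a different route from the paper's.

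\textbf{The paper's route.} The paper cites Agmon's Lemma A1.2: geodesic completeness is equivalent to $\varrho_b(x,\{\infty\})=+\infty$. It reduces to $x=0$ by the triangle inequality. It then uses only the confinement $w_b\to\infty$. A path from $0$ to the complement of $B_{\delta_N}$ must cross the annulus $\delta_{N/2}\le|y|\le\delta_N$, on which $w_b>N/2$. Hence its $\mathscr{G}_b$-length is at least $\sqrt{N}(\delta_N-\delta_{N/2})\ge\sqrt N$.

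\textbf{Your route.} You use only the global bound $\inf w_b>0$ to compare $\varrho_b$ with the Euclidean metric. From that comparison you get metric completeness and compactness of closed balls directly, and then apply the standard Hopf--Rinow theorem.

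\textbf{What each buys.} Your version needs no confinement beyond a positive infimum; under \cref{ass:17} that infimum itself follows from continuity, positivity and $V\to\infty$. It also replaces the appeal to Agmon's appendix lemma by what Hopf--Rinow already gives for a smooth, nondegenerate conformal metric. The paper's annulus argument looks only at $w_b$ near infinity. It is therefore the one that would survive a conformal factor degenerating on a compact set, as in the classical Agmon metric $(V-E)_+|\mathrm{d}x|^2$ for which Agmon's lemma is designed. That extra generality is not needed here, since $w_b$ is bounded below.

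\textbf{A small correction.} From $Y\ge(1-b)w$ you only get $w_b=Y/(1-b)\ge w\ge\varepsilon$, so you should take $c_0=\sqrt{2\varepsilon}$. The bound $w_b\ge\varepsilon/(1-b)$ you took from the text does not follow. Any positive constant suffices for your argument, so nothing else changes.
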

\begin{proof}
    By~\cite[\text{Lemma A1.2}]{agm82}, $\varrho_b$ is geodesically complete if and only if ${\varrho_b}(x, \{\infty\}) = + \infty$ for all $x \in \mathbb{R}^3$.
    By the Delta inequality, ${\varrho_b}(x, \{\infty\}) = + \infty$ for all $x \in \mathbb{R}^3$ if and only if ${\varrho_b}(y, \{\infty\}) = + \infty$ for some $y \in \mathbb{R}^3$.
    We shall show that ${\varrho_b}(0, \{\infty\}) = + \infty$. 
    Let $N > 0$. 
    Since $\lim_{\abs{x} \to \infty} {w_b}(x) = + \infty$, there exists $\delta_N$ such that ${w_b}(x) > N$ for any $\abs{x} > \delta_N$. 
    We can suppose that $\delta_N - \delta_{N / 2} > 1$. 
    Let $B_{\delta_N}$ be the closed ball centered at $0$ with radius $\delta_N$.
    We see that
    \begin{align*}
        {\varrho_b}(0, \{\infty\}) \geq 
        {\varrho_b}(0, \mathbb{R}^3 \setminus B_{\delta_N}) .
    \end{align*}
    There exists $y_\varepsilon \in \mathbb{R}^3 \setminus B_{\delta_N}$ such that
    \begin{align*}
        {\varrho_b}(0, \{\infty\}) + \varepsilon
        \geq \inf\{
            \int_0^T \sqrt{2 {w_b}(\gamma^\varepsilon_s)} \abs{\dot{\gamma}^\varepsilon_s} \, \mathrm{d}s 
            \mid
            \gamma^\varepsilon \in C^1([0, T]; \mathbb{R}^3) , \gamma^\varepsilon_0 = x ,
            \gamma^\varepsilon_T = y_\varepsilon
        \}.
    \end{align*}
    Then there exists $\eta^\varepsilon \in {C^1}([0, T]; \mathbb{R}^3)$ such that $\eta^\varepsilon_0 = x$ and $\eta^\varepsilon_T = y_\varepsilon$ and 
    \begin{align*}
        {\varrho_b}(0, \{\infty\})
        \geq \int_0^T \sqrt{2 {w_b}(\eta^\varepsilon_s)} \abs{\dot{\eta}^\varepsilon_s} \, \mathrm{d}s - 2 \varepsilon .
    \end{align*}
    Let us define
    $T_N = \sup\{s \mid \abs{\eta^\varepsilon_s} \leq \delta_{N/2}\}$. 
    Hence
    \begin{align*}
        {\varrho_b}(0, \{\infty\})
        &\geq \int_{T_N}^T \sqrt{2 {w_b}(\eta^\varepsilon_s)} \abs{\dot{\eta}^\varepsilon_s} \, \mathrm{d}s - 2 \varepsilon
        \geq \sqrt{N} \int_{T_N}^T \abs{\dot{\eta}^\varepsilon_s} \,\mathrm{d}s - 2 \varepsilon \\
        &\geq \sqrt{N} (\delta_N - \delta_{N/2}) - 2 \varepsilon
        \geq \sqrt{N} - 2 \varepsilon .
    \end{align*}
    Since $N$ is arbitrary, the lemma is proven. 
\end{proof}

\begin{lemma}\label{lem:a3} %
    There exists a minimizer $\gamma^* \in C^{\infty}([0, T]; \mathbb{R}^3)$ of ${\mathscr{G}_b}(\gamma)$.
\end{lemma}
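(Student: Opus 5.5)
The minimizer will come from the Hopf--Rinow theorem applied to the conformally flat Riemannian metric $g_b = 2w_b(x)\,|\mathrm{d}x|^2$ on $\mathbb{R}^3$. The Agmon length $\mathscr{G}_b(\gamma,T)$ is exactly the $g_b$-length of $\gamma$, and $\varrho_b$ is the induced distance. Since $w_b\in C^\infty(\mathbb{R}^3)$ and $w_b\geq \varepsilon/(1-b)>0$, the metric $g_b$ is smooth and positive definite, so $(\mathbb{R}^3,g_b)$ is a smooth connected Riemannian manifold. By \cref{lem:a2}, $\varrho_b$ is geodesically complete. Hopf--Rinow then says that any two points, in particular $0$ and $x$, are joined by a length-minimizing geodesic $\sigma$, so that $L_{g_b}(\sigma)=\varrho_b(0,x)$.

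The steps, in order, are as follows.

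First, I will verify that the infimum of $\mathscr{G}_b$ over $C^1$ paths equals the Riemannian distance, which is usually defined via piecewise $C^1$ paths. Corners of a piecewise $C^1$ path can be smoothed at arbitrarily small cost in length, so the two infima coincide.

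Second, I will use the standard regularity of geodesics. A geodesic solves the ODE
\[
\ddot\sigma^k+\Gamma^k_{ij}(\sigma)\dot\sigma^i\dot\sigma^j=0
\]
with $C^\infty$ Christoffel symbols. These come from $w_b\in C^\infty$; explicitly, $\Gamma^k_{ij}=\frac{1}{2w_b}\bigl(\delta_{ik}\partial_jw_b+\delta_{jk}\partial_iw_b-\delta_{ij}\partial_kw_b\bigr)$. Hence $\sigma$ is $C^\infty$ and has constant nonzero $g_b$-speed, for $x\neq0$.

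Third, I will reparametrize $\sigma$ affinely onto $[0,T]$, setting $\gamma^*_t=\sigma(tL/T)$. Because $\mathscr{G}_b$ is invariant under reparametrization, $\mathscr{G}_b(\gamma^*,T)=\varrho_b(0,x)\leq \mathscr{G}_b(\gamma,T)$ for every admissible $\gamma$. The trivial case $x=0$ is handled by the constant path.

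The main obstacle is the first step: making sure that completeness in the sense of \cref{lem:a2}, namely $\varrho_b(0,\{\infty\})=+\infty$, really gives the hypothesis of Hopf--Rinow. I will do this by showing that closed bounded $\varrho_b$-balls are compact. Let $\varrho_b(0,y)\leq R$. If the set of such $y$ were unbounded in the Euclidean sense, then for every compact $K$ we would get $\varrho_b(0,\mathbb{R}^3\setminus K)\leq R$, which contradicts $\varrho_b(0,\{\infty\})=+\infty$. So these balls are Euclidean-bounded. They are also closed, because $\varrho_b$ is locally comparable to the Euclidean distance: on compact sets $w_b$ is bounded above and below by positive constants, so $\varrho_b$ and $|x-y|$ define the same topology. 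Bounded closed sets are compact, which gives metric completeness.

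If one prefers to avoid citing Hopf--Rinow, the fallback is a direct method. Take a minimizing sequence parametrized by $g_b$-arclength. Its images lie in a fixed compact ball, and the sequence is uniformly Lipschitz, so Arzel\`a--Ascoli gives a convergent subsequence. Length is lower semicontinuous, so the limit is minimizing. Bootstrapping through the geodesic ODE then yields smoothness.
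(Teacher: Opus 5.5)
Your proposal is correct and follows the paper's route: Lemma~\ref{lem:a2} supplies completeness, and Hopf--Rinow for the conformal metric $2w_b\,|\mathrm{d}x|^2$ gives a length-minimizing geodesic, which minimizes $\mathscr{G}_b$. The paper's proof is only that citation, so your extra steps fill in what it leaves implicit: converting $\varrho_b(0,\{\infty\})=+\infty$ into the Heine--Borel property Hopf--Rinow needs, smoothness via the geodesic ODE, and the affine reparametrization onto $[0,T]$; the one slip, inherited from the paper, is that the bound should read $w_b\geq w\geq\varepsilon$ rather than $\varepsilon/(1-b)$, which is harmless since only positivity is used.
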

\begin{proof}
    By \cref{lem:a2}, the associated Riemannian manifold is geodesically complete.
    Hence, by the Hopf-Rinow theorem~\cite[\text{Theorem 6.13 and Corollary 6.15}]{lee97}, any two points can be joined by a length-minimizing geodesic.
    Since ${\mathscr{G}_b}(\gamma)$ is the corresponding length functional, it follows that a minimizer $\gamma^*$ exists.
\end{proof}

Let $g(x) = {({g_{ij}}(x))}_{1 \leq i, j \leq 3}$ be
\begin{align*}
    g(x) =
    \begin{pmatrix}
        {w_b}(x) & 0 & 0 \\
        0 & {w_b}(x) & 0 \\
        0 & 0 & {w_b}(x)
    \end{pmatrix}
    \qquad x \in \mathbb{R}^3 .
\end{align*}
Let $\gamma_s = (\gamma_{1, s}, \gamma_{2, s}, \gamma_{3, s})$ and $\dot{\gamma_{i,s}} = {(\mathrm{d} \gamma_i / \mathrm{d} s)}(s)$ for $1 \leq i \leq 3$.
Then
\begin{align*}
    \int_0^T \sqrt{{w_b}(\gamma_s)} \abs{\dot{\gamma}_s} \, \mathrm{d}s
    = \int_0^T \sqrt{\sum_{i, j = 1}^3 {g_{ij}}(\gamma_s) \dot{\gamma}_{i,s} \dot{\gamma}_{j, s}} \, \mathrm{d}s .
\end{align*}
The minimizer $\gamma^*$ in \cref{lem:a3} satisfies the geodesic equation: 
\begin{align}\label{eq:geo1} %
    \ddot{\gamma}_k + \sum_{i, j = 1}^3 {\Gamma_{ij}^k}(\gamma) \dot \gamma_i \dot \gamma_j = 0 ,
    \quad 1 \leq k \leq 3 .
\end{align}
Here $\Gamma_{i j}^k$ denotes the Christoffel symbol given by
\begin{align*}
    \Gamma_{i j}^k
    = \frac{1}{2}
    \sum_{l = 1}^3
        g^{k l}
        \left( 
            \frac{\partial g_{il}}{\partial x_j}
            + \frac{\partial g_{jl}}{\partial x_i}
            - \frac{\partial g_{ij}}{\partial x_l}
        \right) ,
\end{align*}
where ${g^{k l}}(x) = {({g^{-1}}(x))}_{kl} = (1 / {w_b}(x)) \delta_{k l}$.
Since ${w_b}(x) \geq \varepsilon / (1 - b)$, $g^{-1}$ exists for any $x \in \mathbb{R}^3$.
Note that $\Gamma_{i j}^k = \Gamma_{j i}^k$ by symmetries.

\begin{lemma}\label{lem:a4} %
    Let $\gamma^*$ be the minimizer of ${\mathscr{G}_b}(\gamma)$. 
    Then $\abs{\dot\gamma^*_s} \neq 0$ for $s \in [0, T]$.
\end{lemma}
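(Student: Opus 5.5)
The plan is to use the fact that $\gamma^*$ is a length-minimizing geodesic of the conformally flat metric $g = w_b\,\mathbbm{1}$. I will show that its speed in this metric is constant. Since $w_b$ is bounded away from zero, this forces $\abs{\dot\gamma^*_s}\neq 0$ as soon as the constant is nonzero. The points $0$ and $x$ are distinct, and we only care about $\abs{x}$ large, so $\gamma^*$ is non-constant.

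\textbf{Step 1 (conservation of speed).} Let $L(s) = \sum_{i,j} g_{ij}(\gamma^*_s)\dot\gamma^*_{i,s}\dot\gamma^*_{j,s} = w_b(\gamma^*_s)\abs{\dot\gamma^*_s}^2$. Differentiating in $s$ and substituting the geodesic equation \cref{eq:geo1} for $\ddot\gamma^*_k$, the terms cancel by the definition of the Christoffel symbols. This is metric compatibility of the Levi-Civita connection, so $\frac{\mathrm{d}}{\mathrm{d}s}L(s)=0$.

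To make this concrete for $g = w_b\delta$, I would use the explicit formula
\[
\Gamma^k_{ij} = \frac{1}{2w_b}\bigl(\delta_{ik}\partial_j w_b + \delta_{jk}\partial_i w_b - \delta_{ij}\partial_k w_b\bigr).
\]
With it, a direct computation gives $\dot L = (\nabla w_b\cdot\dot\gamma)\abs{\dot\gamma}^2 + 2w_b\,\dot\gamma\cdot\ddot\gamma = 0$.

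\textbf{Step 2 (which parametrization).} There is a subtlety here. A minimizer of the length functional $\mathscr{G}_b$ is only determined up to reparametrization, and the geodesic equation \cref{eq:geo1} holds only for the constant-speed (arc-length proportional) parametrization. So I will take $\gamma^*$ to be the constant-speed representative furnished by Hopf--Rinow in \cref{lem:a3}; this is presumably the parametrization the paper intends. Alternatively, one can reparametrize any minimizer by its $g$-arclength, which is legitimate once one knows the minimizer is regular, but that would be circular. I therefore expect the main obstacle to be justifying that the minimizer chosen in \cref{lem:a3} is the geodesic-equation one. I would handle it by noting that Hopf--Rinow produces a Riemannian geodesic $\gamma^*(s)=\exp_0(s v)$, whose speed $\sqrt{L}$ is constant by construction.

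\textbf{Step 3 (conclusion).} From Step 1 we get $w_b(\gamma^*_s)\abs{\dot\gamma^*_s}^2 = c$ for all $s\in[0,T]$. If $c=0$, then $\dot\gamma^*\equiv 0$ because $w_b\geq\varepsilon/(1-b)>0$, so $\gamma^*$ would be constant, contradicting $\gamma^*_0=0\neq x=\gamma^*_T$. Hence $c>0$, and
\[
\abs{\dot\gamma^*_s} = \sqrt{c/w_b(\gamma^*_s)}>0,
\]
where $w_b$ is finite along the compact image of $\gamma^*$.
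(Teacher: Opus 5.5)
Your proposal is correct and follows the paper's proof: the paper writes the geodesic equation for $g=w_b\delta$ explicitly and shows $\frac{\mathrm{d}}{\mathrm{d}s}\bigl(w_b(\gamma_s)\abs{\dot\gamma_s}^2\bigr)=0$; since the endpoints are distinct and $w_b>0$, it concludes $\abs{\dot\gamma^*_s}\neq 0$. Your remark that this requires taking $\gamma^*$ to be the constant-speed (Hopf--Rinow) geodesic representative makes explicit an assumption the paper uses silently when it asserts that the minimizer satisfies the geodesic equation.
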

\begin{proof}
    Let ${w_k}(x) = (\partial w_b / \partial x_k)(x)$ for $1 \leq k \leq 3$.
    It follows that
    \begin{align*}
        \Gamma^k_{nn}
        =
        \begin{cases}
            - \frac{1}{2} \frac{w_k}{w_b} & k \neq n, \\
            \frac{1}{2} \frac{w_k}{w_b} & k = n .
        \end{cases}
        .
    \end{align*}
    On the other hand when $k, n$ and $m$ are different from each other, $\Gamma^k_{nm} = 0$. 
    Moreover we see that 
    \begin{align*}
        \Gamma_{12}^1 = \Gamma_{21}^1 = \frac{1}{2} \frac{w_2}{w_b} , \quad &\Gamma_{13}^1 = \Gamma_{31}^1 = \frac{1}{2} \frac{w_3}{w_b} , \\ 
        \Gamma_{23}^2 = \Gamma_{32}^2 = \frac{1}{2} \frac{w_3}{w_b} , \quad &\Gamma_{21}^2 = \Gamma_{12}^2 = \frac{1}{2} \frac{w_1}{w_b} , \\ 
        \Gamma_{13}^3 = \Gamma_{31}^3 = \frac{1}{2} \frac{w_1}{w_b} , \quad &\Gamma_{23}^3 = \Gamma_{32}^3 = \frac{1}{2} \frac{w_2}{w_b} .
    \end{align*} 
    Together with them, \cref{eq:geo1} is written as 
    \begin{align}\label{eq:geo2} %
        \begin{dcases*}
            {w_b}(\gamma) \ddot{\gamma}_1
            +w _1(\gamma) \dot{\gamma}_1^{2}
            +w _2(\gamma) \dot{\gamma}_1 \dot{\gamma}_2
            +w _3(\gamma) \dot{\gamma}_1 \dot{\gamma}_3
            - \frac{1}{2} w _1(\gamma) \abs{\dot{\gamma}}^2
            =0,
            \\
            {w_b}(\gamma) \ddot{\gamma}_2
            +w _1(\gamma) \dot{\gamma}_1 \dot{\gamma}_2
            +w _2(\gamma) \dot{\gamma}_2^{2}
            +w _3(\gamma) \dot{\gamma}_2 \dot{\gamma}_3
            - \frac{1}{2} w _2(\gamma) \abs{\dot{\gamma}}^2
            =0,
            \\
            {w_b}(\gamma) \ddot{\gamma}_3
            +w _1(\gamma) \dot{\gamma}_1 \dot{\gamma}_3
            +w _2(\gamma) \dot{\gamma}_2 \dot{\gamma}_3
            +w _3(\gamma) \dot{\gamma}_3^{2}
            - \frac{1}{2} w _3(\gamma) \abs{\dot{\gamma}}^2
            =0 .
        \end{dcases*}
    \end{align}
    From \cref{eq:geo2} it follows that
    \begin{align*}
        \frac{\mathrm{d}}{\mathrm{d}s} {w_b}(\gamma)\abs{\dot{\gamma}}^2
        = \sum_{k = 1}^3
            \left( 
                {w_k}(\gamma) \abs{\dot{\gamma}}^2
                + 2 {w_b}(\gamma) \ddot{\gamma}_k
            \right)
            \dot{\gamma}_k
        = 0 ,
    \end{align*}
    which implies that 
    ${w_b}(\gamma_s) \abs{\dot\gamma_s}^2$ is a nonzero constant. 
    Since the endpoints are distinct, the minimizing geodesic $\gamma^*$ cannot be constant.
    Hence ${w_b}(\gamma^*_s) \abs{\dot\gamma^*_s}^2$ cannot vanish identically.
    Therefore the constant value of ${w_b}(\gamma^*_s) \abs{\dot\gamma^*_s}^2$ is strictly positive.
    In particular $\abs{\dot\gamma_s} \neq 0$ for $0 \leq s \leq T$ since ${w_b}(\gamma_s) > 0$.
\end{proof}

We now relate the minimization problem for the Agmon length functional $\mathscr{G}_b$ to a variational problem associated with the action functional $\mathscr{A}_b$.
For $\tau > 0$ and $\mathrm{q} \in C^{\infty}([0, \tau]; \mathbb{R}^3)$, define
\begin{align*}
    {\mathscr{A}_b}(\mathrm{q}, \tau)
    = \int_0^\tau \left({w_b}(\mathrm{q}_s) + \frac{1}{2} \abs{\dot{\mathrm{q}_s}}^2 \right) \, \mathrm{d}s .
\end{align*}
We shall compare
\begin{align*}
    &\inf\{
        {\mathscr{G}_b}(\gamma,T)
        \mid
        \gamma_0 = 0, \gamma_T = x, \gamma \in C^{\infty}([0, T]; \mathbb{R}^3)
    \} \\
    &\inf\{
        {\mathscr{A}_b}(\mathrm{q}, \tau)
        \mid
        \tau > 0, \mathrm{q}_0 = x, \mathrm{q}_\tau = 0, \mathrm{q} \in C^{\infty}([0, \tau]; \mathbb{R}^3)
    \} .
\end{align*}

\begin{lemma}\label{lem:a5} %
    Suppose \cref{ass:17}.
    Let $b > 0$ and let $\gamma^*$ be the minimizer of $\mathscr{G}_b$ given in \cref{lem:a3}.
    Then there exists a minimizer $(q^*, \tau^*) \in 
    C^{\infty}([0, \tau^*]; \mathbb{R}^3) \times \lbrack 0, \infty \rparen$ of $\mathscr{A}_b$ such that
    \begin{align*}
        {\mathscr{G}_b}(\gamma^*, T)
        = {\mathscr{A}_b}(q^*, \tau^*) .
    \end{align*}
    Moreover, if we define
    \begin{align*}
        {\gamma_*}_s = q^*_{\tau^* - s},
        \qquad 0 \leq s \leq \tau^* ,
    \end{align*}
    then
    \begin{align*}
        {\mathscr{G}_b}(\gamma^*, T)
        = \int_0^{\tau^*} \sqrt{2 {w_b}({\gamma_*}_s)} \abs{\dot{\gamma_*}_s} \, \mathrm{d}s
        = {\mathscr{A}_b}(q^*, \tau^*) .
    \end{align*}
\end{lemma}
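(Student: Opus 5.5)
The plan is to build $(q^*,\tau^*)$ from the geodesic $\gamma^*$ of \cref{lem:a3} by a reparametrization in which the two terms of the action integrand are equal. With equal terms, the arithmetic--geometric mean inequality becomes an equality. Minimality of $(q^*,\tau^*)$ will then follow from the pointwise AM--GM bound $\mathscr{G}_b\le\mathscr{A}_b$ and the reparametrization invariance of $\mathscr{G}_b$.

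\emph{Step 1 (energy parametrization).} By \cref{lem:a4}, $|\dot\gamma^*_s|\neq0$ on $[0,T]$ and $w_b\ge \varepsilon/(1-b)>0$. We therefore define a new time $\sigma(s)=\int_0^s |\dot\gamma^*_u|/\sqrt{2w_b(\gamma^*_u)}\,\mathrm{d}u$. This is a smooth, strictly increasing diffeomorphism of $[0,T]$ onto $[0,\tau^*]$, where $\tau^*=\sigma(T)\in(0,\infty)$. Set $\gamma_*=\gamma^*\circ\sigma^{-1}$ on $[0,\tau^*]$. By the chain rule, $|\dot\gamma_{*,\sigma}|=\sqrt{2w_b(\gamma_{*,\sigma})}$, so $\tfrac12|\dot\gamma_*|^2=w_b(\gamma_*)$ pointwise. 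Put $q^*_s=\gamma_{*,\tau^*-s}$. Then $q^*_0=x$, $q^*_{\tau^*}=0$, and $q^*$ is $C^\infty$ because $\gamma^*$ and $\sigma^{-1}$ are.

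\emph{Step 2 (equality).} Since the two terms are equal pointwise, $w_b(q^*)+\tfrac12|\dot q^*|^2=2w_b(q^*)=\sqrt{2w_b(q^*)}\,|\dot q^*|$. Integrating gives $\mathscr{A}_b(q^*,\tau^*)=\int_0^{\tau^*}\sqrt{2w_b(\gamma_{*,s})}|\dot\gamma_{*,s}|\,\mathrm{d}s$. Reversal of time and the change of variables $s=\sigma^{-1}$ leave the length integral unchanged, so this equals $\mathscr{G}_b(\gamma^*,T)$. This proves both displayed identities.

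\emph{Step 3 (minimality).} Take any $\tau>0$ and any smooth $q$ with $q_0=x$, $q_\tau=0$. Reverse it to $\gamma^q_s=q_{\tau-s}$ and rescale time linearly to $[0,T]$; the length is unchanged. By AM--GM pointwise and the minimality of $\gamma^*$ for $\mathscr{G}_b$ among paths from $0$ to $x$,
\[
\mathscr{A}_b(q,\tau)\ge \int_0^\tau\sqrt{2w_b(q_s)}|\dot q_s|\,\mathrm{d}s=\mathscr{G}_b(\gamma^q)\ge\mathscr{G}_b(\gamma^*,T)=\mathscr{A}_b(q^*,\tau^*).
\]
Hence $(q^*,\tau^*)$ is a minimizer of $\mathscr{A}_b$.

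The main obstacle is Step 1. The reparametrization must be a genuine smooth diffeomorphism, which needs $\dot\gamma^*$ nonvanishing. \cref{lem:a4} supplies exactly this, together with the fact that $w_b(\gamma^*)|\dot\gamma^*|^2$ is a positive constant $c$. With that constant, $\sigma(s)=s\sqrt{c}/(\sqrt2\,w_b(\gamma^*_s))\cdot$ (integrated) is explicit and $\tau^*<\infty$ follows from compactness of $[0,T]$. A smaller point is that the minimization over $C^\infty$ versus $C^1$ paths gives the same infimum, since $\gamma^*$ is already smooth. The degenerate case $x=0$ gives $\tau^*=0$ and both sides equal zero, which matches the stated range $[0,\infty)$.
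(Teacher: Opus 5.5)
Your proof is correct and follows the paper's argument: you reparametrize the geodesic $\gamma^*$ so that $\tfrac12|\dot\gamma|^2 = w_b(\gamma)$, which turns AM--GM into an equality, and you obtain minimality from the pointwise AM--GM bound together with the minimality of $\gamma^*$ for $\mathscr{G}_b$. The only cosmetic difference is that you write the time change explicitly as $\sigma(s)=\int_0^s |\dot\gamma^*_u|/\sqrt{2w_b(\gamma^*_u)}\,\mathrm{d}u$. The paper instead rescales to $[0,1]$ and obtains the inverse of this map as the solution of the autonomous ODE $\dot\varphi = F(\varphi)$, where $F = \sqrt{2w_b(\hat\gamma)}/|\dot{\hat\gamma}|$.
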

\begin{proof}
    Note that ${\mathscr{G}_b}(\gamma, T)$ is invariant under reparametrizations of the path $\gamma$.
    Thus, for any bijective $C^1$ map $\varphi: [0, T] \to [0, T]$, setting $\gamma_{\varphi, s} = \gamma_{\varphi(s)}$, we have
    \begin{align*}
        {\mathscr{G}_b}(\gamma, T)
        = {\mathscr{G}_b}(\gamma_\varphi, T).
    \end{align*}
    Let $\gamma^*$ be the minimizer of ${\mathscr{G}_b}(\gamma, T)$.
    Define
    \begin{align*}
        \hat{\gamma}_s = \gamma^*_{T s} ,
        \qquad 0 \leq s \leq 1 .
    \end{align*}
    Then
    \begin{align*}
        \mathscr{G}_b(\gamma^*,T)
        = \int_0^1 \sqrt{2 {w_b}(\hat{\gamma}_s)} \abs{\dot{\hat\gamma}_s} \, \mathrm{d}s .
    \end{align*}
    Since $2 \sqrt{a b} \leq a + b$ with equality if and only if $a = b$, it follows that
    \begin{align*}
        \sqrt{2w_b (\gamma_s)}
        = \abs{\dot\gamma_s} ,
        \qquad s \in [0, 1] ,
    \end{align*}
    if and only if
    \begin{align*}
        \int_0^1 \sqrt{2 {w_b}(\gamma_s)} \abs{\dot\gamma_s} \, \mathrm{d}s
        = \int_0^1 \left({w_b}(\gamma_s) + \frac{1}{2} \abs{\dot{\gamma}_s}^2 \right) \, \mathrm{d}s .
    \end{align*}
    Since $w_b \in C^{\infty}(\mathbb{R}^3)$, ${w_b}(x) > 0$ for every $x \in \mathbb{R}^3$, $\hat{\gamma} \in C^{\infty}([0, 1]; \mathbb{R}^3)$, and $\abs*{\dot{\hat\gamma}_s} > 0$ by \cref{lem:a4}, the function
    \begin{align*}
        F(t)
        = \frac{\sqrt{2 {w_b}(\hat{\gamma}_t)}}{\abs*{\dot{\hat\gamma}_t}} ,
        \qquad 0 \leq t \leq 1 ,
    \end{align*}
    is positive and Lipschitz continuous.
    Hence the ordinary differential equation
    \begin{align*}
        \dot\varphi(s)
        = F(\varphi(s)) ,
        \qquad \varphi(0) = 0 ,
    \end{align*}
    admits a unique solution.
    Let $\tau^* > 0$ be the first time such that $\varphi(\tau^*) = 1$.
    Then
    \begin{align*}
        \varphi: [0, \tau^*] \to [0, 1]
    \end{align*}
    is a $C^1$ bijection satisfying $\dot{\varphi}(s) > 0$.
    Define
    \begin{align*}
        \hat{\gamma}_\varphi
        = \hat{\gamma} \circ \varphi.
    \end{align*}
    Then
    \begin{align*}
        \abs*{\dot{\hat{\gamma}}_{\varphi, s}}
        = \abs*{\dot{\hat{\gamma}}_{\varphi(s)}} \dot{\varphi}(s)
        = \sqrt{2 {w_b}(\hat{\gamma}_{\varphi, s})} .
    \end{align*}
    Therefore
    \begin{align*}
        {\mathscr{G}_b}(\gamma^*, T)
        &= \int_0^1 \sqrt{2 {w_b}(\hat{\gamma}_s)} \abs*{\dot{\hat\gamma}_s} \, \mathrm{d}s
        = \int_0^{\tau^*} \sqrt{2 {w_b}(\hat{\gamma}_{\varphi, s})}
        \abs*{\dot{\hat{\gamma}}_{\varphi, s}} \, \mathrm{d}s \\
        &= \int_0^{\tau^*} \left(
            {w_b}(\hat{\gamma}_{\varphi, s})
            + \frac{1}{2} \abs*{\dot{\hat{\gamma}}_{\varphi, s}}^2
        \right) \, \mathrm{d}s
        = {\mathscr{A}_b}(\mathrm{q}^{\hat{\gamma}_\varphi}, \tau^*).
    \end{align*}
    For any $\mathrm{q} \in \mathcal{P}$ and $s > 0$,
    \begin{align*}
        {w_b}(\mathrm{q}_s)
        + \frac{1}{2} \abs{\dot{\mathrm{q}_s}}^2
        \geq \sqrt{2 {w_b}(\mathrm{q}_s)} \abs{\dot{\mathrm{q}_s}} .
    \end{align*}
    Hence
    \begin{align*}
        {\mathscr{A}_b}(\mathrm{q}, \tau)
        \geq \int_0^\tau \sqrt{2 {w_b}(\mathrm{q}_s)} \abs{\dot{\mathrm{q}_s}} \, \mathrm{d}s
        \geq {\mathscr{G}_b}(\gamma^*, T)
        = {\mathscr{A}_b}(\mathrm{q}^{\hat{\gamma}_\varphi}, \tau^*).
    \end{align*}
    Thus $\mathrm{q}^{\hat{\gamma}_\varphi}$ is a minimizer of $\mathscr{A}_b$.
    Setting $\mathrm{q}^* = \mathrm{q}^{\hat{\gamma}_\varphi}$ and $\gamma_* = \hat{\gamma}_\varphi$, we obtain
    \begin{align*}
        {\mathscr{G}_b}(\gamma^*, T)
        = {\mathscr{A}_b}(\mathrm{q}^*, \tau^*),
    \end{align*}
    and the lemma follows.
\end{proof}

\subsection{Bound of $\tau^*$}
$\tau^* = \tau^*(x)$ depends of the end point $x$.
A bound of $\tau^*$ is also established.  
\begin{lemma}\label{lem:a6} %
It follows that 
    \begin{align*}
        \lim_{\abs{x} \to + \infty}
            \frac{
                \tau^*
            }{
                \int_0^{\tau^*} \left({w_b}(\mathrm{q}^*_s) + \frac{1}{2} \abs{\dot{\mathrm{q}}^*_s}^2 \right) \, \mathrm{d}s
            }
        = 0 .
    \end{align*}
    In particular for any $\delta > 0$ there exists $R > 0$ such that 
    \begin{align*}
        \tau^*
        \leq \delta
        \int_0^{\tau^*} \left(
            {w_b}(\mathrm{q}^*_s)
            + \frac{1}{2} \abs{\dot{\mathrm{q}}^*_s}^2
        \right) \, \mathrm{d}s
    \end{align*}
    for any $\abs{x} > R$.
\end{lemma}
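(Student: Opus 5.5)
The key observation is that along the minimizer the two terms of the energy integrand coincide, so the energy equals twice the potential integral. This lets us compare $\tau^*$ with $\int_0^{\tau^*} w_b(\mathrm q^*_s)\,\mathrm ds$ directly. By \cref{lem:a5}, the minimizer $\mathrm q^*$ is the time reversal of $\gamma_*=\hat\gamma\circ\varphi$, which satisfies $|\dot{\gamma_*}_s|=\sqrt{2w_b({\gamma_*}_s)}$. Hence $\tfrac12|\dot{\mathrm q}^*_s|^2=w_b(\mathrm q^*_s)$ for all $s$, and
\begin{align*}
    \mathscr{A}_b(\mathrm q^*,\tau^*)=2\int_0^{\tau^*}w_b(\mathrm q^*_s)\,\mathrm ds .
\end{align*}
The ratio in the statement is therefore the reciprocal of twice the time-average of $w_b$ along $\mathrm q^*$. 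It suffices to show that this average tends to $+\infty$ as $|x|\to\infty$.

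Fix $N>0$. Since $w_b\to+\infty$ at infinity (it is comparable to $V_{\sup}$ via \cref{eq:Y}), there is $\delta_N$ with $w_b>N$ on $\{|y|>\delta_N\}$, and $w_b\ge c_0>0$ everywhere. Split $[0,\tau^*]$ into two sets:
\begin{align*}
    I_{\mathrm{out}}=\{s: |\mathrm q^*_s|>\delta_N\},\qquad I_{\mathrm{in}}=\{s:|\mathrm q^*_s|\le\delta_N\}.
\end{align*}
On $I_{\mathrm{out}}$ the integrand exceeds $N$, so $\int_{I_{\mathrm{out}}}w_b\ge N|I_{\mathrm{out}}|$. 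The task is to bound $|I_{\mathrm{in}}|$ uniformly in $x$.

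For this I will use minimality together with the speed relation. On $I_{\mathrm{in}}$ the speed is at least $\sqrt{2c_0}$. The minimizer is a minimizing geodesic, and the Agmon length of any subarc between two points of the ball $\overline{B_{\delta_N}}$ is at most $\varrho_b$ between them. That quantity is bounded by $M_N:=\sqrt{2\sup_{|y|\le\delta_N}w_b}\cdot 2\delta_N$, using a straight segment inside the ball. Consider the subarc between the first and last times $\mathrm q^*$ lies in the closed ball. It contains all of $I_{\mathrm{in}}$, and its length satisfies $\int\sqrt{2w_b}\,|\dot{\mathrm q}^*|=\int 2w_b\ge 2c_0\cdot(\text{its duration})$. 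Hence $|I_{\mathrm{in}}|\le M_N/(2c_0)=:L_N$, independent of $x$.

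It remains to handle $I_{\mathrm{out}}$ for large $|x|$. The total energy equals $\mathscr G_b(\gamma^*)=\varrho_b(0,x)\ge\varrho_b(0,\mathbb R^3\setminus B_{|x|-1})$, which tends to $\infty$ by the argument of \cref{lem:a2}. Then
\begin{align*}
    \frac{\tau^*}{\mathscr A_b}\le \frac{|I_{\mathrm{out}}|}{2N|I_{\mathrm{out}}|}+\frac{L_N}{\mathscr A_b}\le\frac1{2N}+\frac{L_N}{\varrho_b(0,x)},
\end{align*}
so the $\limsup$ is at most $1/(2N)$. Letting $N\to\infty$ gives the limit, and the ``in particular'' statement follows immediately.

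The main obstacle is the bound on $|I_{\mathrm{in}}|$, namely ensuring the optimal path cannot linger near the origin for a long time. The argument handles this through minimality of the subarc combined with the equipartition $|\dot{\mathrm q}^*|^2=2w_b$. This requires care that subarcs of a minimizer are minimizing between their endpoints, which is standard.
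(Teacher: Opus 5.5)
Your proof is correct and follows essentially the paper's argument: the equipartition $\mathscr{A}_b(\mathrm{q}^*,\tau^*)=2\int_0^{\tau^*}w_b(\mathrm{q}^*_s)\,\mathrm{d}s$, then a split into the time spent outside a fixed ball (where $w_b$ is large) and the tail from the first entrance into the ball to the endpoint $0$. Your ``last visit'' time is just $\tau^*$, so that tail is exactly the paper's $[s_R,\tau^*]$; its duration is bounded uniformly in $x$ by minimality of subarcs, and one concludes using $\varrho_b(0,x)\to\infty$. The only differences are cosmetic: you bound the tail by an explicit straight-segment Agmon length instead of by $\sup_{\abs{y}=R}\rho(y)$, and $\varrho_b(0,x)\to\infty$ follows even more directly from $\varrho_b(0,x)\geq\sqrt{2c_0}\,\abs{x}$ than from \cref{lem:a2}.
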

\begin{proof}
    The proof of \cref{lem:a6} is due to~\cite[\text{Proposition 2.5 (i\hspace{-1.2pt}i\hspace{-1.2pt}i)}]{CS81}.
    Note that $(1 - b) w \leq Y \leq (1 + b) w$ with $w = V_{\sup}$,
    $Y \in C^{\infty}(\mathbb{R}^3)$, and $w_b = Y / (1 - b)$.
    Let
    \begin{align*}
        \rho(x)
        = \inf\{
            {\mathscr{A}_b}(\mathrm{q}, \tau)
            \mid
            \tau > 0 ,
            \mathrm{q}_0 = x ,
            \mathrm{q}_S = 0 ,
            \mathrm{q} \in C^{\infty}([0, \tau]; \mathbb{R}^3)
        \} .
    \end{align*}
    Then
    \begin{align*}
        \rho(x)
        = \int_0^{\tau^*} \left(
            {w_b}(\mathrm{q}^*_s)
            + \frac{1}{2} \abs{\dot{\mathrm{q}}^*_s}^2
        \right) \, \mathrm{d}s .
    \end{align*}
    Since
$
        \sqrt{2 {w_b}(\mathrm{q}^*_s)}
        = \abs{\dot{\mathrm{q}}^*_s}$,
    we have
    \begin{align*}
        \rho(x)
        = 2 \int_0^{\tau^*} {w_b}(\mathrm{q}^*_s) \, \mathrm{d}s .
    \end{align*}
    Since $w_b \geq V_{\sup}$, it follows that
$
        \rho(x) \geq 2 \delta \tau^*$, 
    where $\delta = \inf_x V_{\sup}(x) > 0$.
    Let
    \begin{align*}
        a(R) = \min_{\abs{x} \geq R} V_{\sup}(x) ,
    \end{align*}
    and let $s_R$ be the first time such that $\abs{\mathrm{q}^*_{s_R}} = R$.
    Set $z = \mathrm{q}^*_{s_R}$ and $\eta_s = \mathrm{q}^*_{s + s_R}$ for $0 \leq s \leq \tau^* - s_R$.
    Then $\eta_0 = z$, $\eta_{\tau^* - s_R} = 0$, and
$
        \sqrt{2 {w_b}(\eta_s)}
        = \abs{\dot{\eta_s}} $.
    Thus
    \begin{align*}
        \rho(z)
        = \int_0^{\tau^* - s_R} \left(
            {w_b}(\eta_s)
            + \frac{1}{2} \abs{\dot{\eta_s}}^2
        \right) \, \mathrm{d}s .
    \end{align*}
    In the same manner as above, we obtain
    \begin{align*}
        \sup_{\abs{y} = R}\rho(y)
        \geq \rho(z)
        \geq 2 \delta (\tau^* - s_R) .
    \end{align*}
    Moreover, since $\abs{\mathrm{q}^*_s} \geq R$ for $0 \leq s \leq s_R$, we have
$
        \rho(x)
        \geq 2 s_R a(R)$. 
    Therefore
    \begin{align*}
        \tau^*
        \leq s_R + \frac{1}{2 \delta} \sup_{\abs{y} = R} \rho(y)
        \leq \frac{1}{2 a(R)} \rho(x) + \frac{1}{2 \delta} \sup_{\abs{y} = R} \rho(y) .
    \end{align*}
    Hence
    \begin{align*}
        \frac{\tau^*}{\rho(x)}
        \leq \frac{1}{2 a(R)}
        + \frac{1}{2 \delta} \frac{\sup_{\abs{y} = R} \rho(y)}{\rho(x)} .
    \end{align*}
    Since $a(R )\to \infty$ as $R \to \infty$, we may choose $R$ sufficiently large so that $1 / (2 a(R)) < \varepsilon$.
    For this fixed $R$, letting $\abs{x} \to \infty$ gives
    \begin{align*}
        \limsup_{\abs{x} \to + \infty} \frac{\tau^*}{\rho(x)}
        \leq \varepsilon .
    \end{align*}
    Since $\varepsilon > 0$ is arbitrary, \cref{lem:a6} follows.
\end{proof}

\section*{Acknowledgments}
    We are grateful to Masao Hirokawa for his valuable advice and encouragement in both mathematics and physics.
    F.H. was partially supported by JSPS KAKENHI Grant Numbers JP25H00595 and JP26K21807.
    F.H. also acknowledges the warm hospitality of the University of Franche-Comt\'{e}, Besan\c{c}on, where part of this work was carried out.
    Y.T. was also supported by the Research Institute for Mathematical Sciences, an International Joint Usage/Research Center at Kyoto University, and by the WISE Program of Kyushu University funded by MEXT.
The authors also gratefully acknowledges the support of the Quantum and
Spacetime Research Institute (QuaSR), Kyushu University.%

\printbibliography%

\end{document}